\newcommand{\ttt}{\boldsymbol \theta}
\newcommand{\OO}{\mbox{$\mathbf O$}}
\newcommand{\bb}{\mbox{$\mathbf b$}}
\newcommand{\CC}{\mbox{$\mathbf C$}}
\newcommand{\FF}{\mbox{$\mathbf F$}}
\newcommand{\XX}{\mbox{$\mathbf X$}}
\newcommand{\xx}{\mbox{$\mathbf x$}}
\newcommand{\V}{\mbox{$\mathrm {Var}$}}
\newcommand{\qq}{\mbox{$\mathbf q$}}
\newcommand{\argmin}{\operatornamewithlimits{arg\,min}}
\newcommand{\MFL}{\mbox{$\mathfrak L$}}
\newcommand{\II}{\mbox{$\mathcal{I}^*$}}
\newcommand{\PP}{\mbox{$\mathbf{P}$}}
\newcommand{\MM}{{\mathcal{M}}}
\newcommand{\MD}{{\mathcal{D}}}
\newcommand{\ML}{{\mathcal{L}}}
\newcommand{\MS}{{\mathcal{S}^*}}
\newcommand{\UM}{{\hat{U}_{1,\MM_2}}}
\newcommand{\UA}{{\tilde{U}_{1,\MM_2}}}
\newcommand{\UCM}{{\hat{U}_{2,\MM_2}}}
\newcommand{\DM}{{\hat{D}_{1,\MM_2}}}
\newcommand{\SM}{{\hat{S}_{\MM_2}}}
\newcommand{\LM}{{\hat{L}_{\MM_2}}}
\newcommand{\SA}{{\tilde{S}_{\MM_2}}}
\newcommand{\LA}{{\tilde{L}_{\MM_2}}}
\newcommand{\DDD}{{\hat{D}_{1,\hat{\MD}}}}
\newcommand{\LD}{{\hat{L}_{\hat{\MD}}}}
\newcommand{\SD}{{\hat{S}_{\hat{\MD}}}}
\newcommand{\DSB}{{\tilde{S}_{{\MS}^{\bot}  }}}
\newcommand{\DLB}{ \tilde{L}_{(T_{\tilde{L}}\MFL)^{\bot}}  }
\newcommand{\UAtwo}{{{\tilde{U}_{2,\mathcal{M}_2}  }}}
\newcommand{\UMtwo}{{{\hat{U}_{2,\mathcal{M}_2}  }}}
\newcommand{\DAtwo}{{{\tilde{D}_{2,\mathcal{M}_2}  }}}
\newcommand{\DAone}{{{\tilde{D}_{1,\mathcal{M}_2}  }}}
\newcommand{\ce}{{e^{-N}}}
\newcommand{\E}{\mathbb{E}}
\newcommand{\cs}{{\gamma_N^{1-2\eta}}}
\newcommand{\cu}{{\gamma_N^{1-\eta}}}
\newcommand{\cusq}{{\gamma_N^{2(1-\eta)}}}
\newcommand{\cuqb}{{\gamma_N^{3(1-\eta)}}}
\newcommand{\cd}{{\gamma_N^{1-2\eta} }}
\newcommand{\bA}{{\mathbf{A}}}
\newtheorem{theorem}{Theorem}
\newtheorem{lemma}{Lemma}
\newtheorem{proposition}{Proposition}
\newtheorem{remark}{Remark}
\newenvironment{proof}[1][Proof]{\noindent\textbf{#1.} }{\ \rule{0.5em}{0.5em}}
\title{A Fused Latent and Graphical Model for Multivariate Binary Data}
\author{Yunxiao Chen, Xiaoou Li, Jingchen Liu, and Zhiliang Ying\\ Columbia University}
\date{}
\begin{document}
\maketitle
\baselineskip = 20pt

\begin{abstract}
	We consider modeling, inference, and computation  for analyzing multivariate binary data.
We propose a new model that consists of a low dimensional latent variable component and a sparse graphical component.
Our study is motivated by analysis of item response data in cognitive assessment and has applications to many disciplines where item response data are collected. Standard approaches to item response data in cognitive assessment adopt the multidimensional item response theory (IRT) models. However,
human cognition is typically a complicated process and thus may not be adequately described by just a few factors.
Consequently, a low-dimensional latent factor model, such as the multidimensional IRT models, is often insufficient to capture the structure of the data. The proposed model adds a sparse graphical component that captures the remaining ad hoc dependence. It reduces to a multidimensional IRT model when the graphical component becomes degenerate.
Model selection and parameter estimation are carried out simultaneously through construction of a pseudo-likelihood function and properly chosen penalty terms. The convexity of the pseudo-likelihood function allows us to develop an efficient algorithm, while the penalty terms generate a low-dimensional latent component and a sparse graphical structure. Desirable theoretical properties are established under suitable regularity conditions. The method is applied to the revised Eysenck's personality questionnaire, revealing its usefulness in item analysis. Simulation results are reported that show the new method works well in practical situations.

%
\end{abstract}

\bigskip
\noindent KEY WORDS: latent variable model, graphical model, IRT model, Ising model, convex optimization, model selection, personality assessment

\section{Introduction} \label{introduction}

	Latent variable models are prevalent in many studies. We consider the context of cognitive assessment that has applications in many disciplines including education, psychology/psychiatry, political sciences, marketing, etc.
	For instance, in educational measurement, students' solutions to test problems are observed to measure their skill levels;
	in psychiatric assessment, patients' responses to diagnostic questions are observed to assess the presence or absence of mental health disorders;
	in political sciences, politicians'  voting behavior reflects their political views; in marketing analysis, consumers' purchase history reflects their preferences.
	A common feature in these studies is that the observed human behaviors are driven by their latent attributes that are often unobservable.
	Latent variable models can be employed in these contexts to describe the relationship between the observed behavior, which is often in the form of responses to items, and the underlying attributes.

	Various linear and nonlinear latent variable models have been studied extensively in the literature \citep[e.g.][]{joreskog1969general,mcdonald1985factor,harman1976modern,rasch1960probabilistic,lord1968statistical,joreskog1973general}.
	In this paper, we focus on one of the widely used nonlinear models for categorical responses, that is, the item response theory (IRT) model. 
	Popular IRT models include the Rasch model \citep{rasch1960probabilistic}, the two-parameter logistic model, and the three-parameter logistic model \citep{birnbaum1968some} that are single-factor models. A natural extension is the multidimensional two-parameter logistic (M2PL) model \citep{mckinley1982use,ReckaseMIRT} assuming a multidimensional latent vector.
	Originated in psychological measurement \citep{rasch1960probabilistic,lord1968statistical},
IRT models have been widely used in other fields for modeling multivariate binary data, such as political voting \citep{bafumi2005practical}, marketing \citep{de2008using}, and health sciences \citep{hays2000item,streiner2014health}.

	In this paper, we use the multidimensional two-parameter logistic model as the starting point. In particular, each observation is a random vector $\XX = (X_1,...,X_J)$ with binary components, $X_j \in \{0,1\}$.
	Associated with each observation is an unobserved continuous latent vector $\ttt \in \mathbb R^K$. The conditional distribution of each response given the latent vector follows a logistic model
\begin{equation}\label{IRT}
f_j(\theta) = \mathbb P( X_j =1| \ttt) = \frac{e^{a_j^\top \ttt + b_j}}{1+ e^{a_j^\top \ttt + b_j}},
\end{equation}
which is known as the \emph{item response function}.
Furthermore, the responses are assumed to be conditionally independent given $\ttt$, that is,
\begin{equation}\label{local}
\mathbb  P(X_1=x_1,...,X_J = x_J | \ttt) = \prod_{j=1}^J \mathbb  P( X_j =x_j| \ttt).
\end{equation}
A prior distribution $\pi$ on $\ttt$ is also imposed.
	
	In recent years, computer-based instruments are becoming prevalent in
educational and psychiatric studies,
where a large number of responses with complex dependence structure are  observed.
	A low-dimensional latent vector is often insufficient to capture all the dependence structure of the responses.
Many contextual factors, such as the item wording and question order, may exert additional influence on the item response \citep{knowles2000does,schwarz1999self,yen1993scaling}.
Moreover, problem solving and task accomplishing are likely to be complicated cognitive processes.
It is conceivable that they cannot be adequately described  by only a few latent attributes. Thus, model lack of fit is often observed in practical analysis \citep[e.g.][]{reise2011challenges,ferrara1999contextual,yen1984effects,yen1993scaling}.
From the technical aspect, a low-dimensional latent variable model is simply not rich enough to capture all the dependence structure of the responses \citep[e.g.][]{sireci1991reliability,chen1997local}.
	Ideally, we wish to include all the factors that influence the cognitive process, that would result in a high-dimensional latent vector.
A factor model with too
many latent variables
can be difficult to estimate and may lack interpretability. Thus, in practice, the dimension of the latent vector $K$ is often kept low in spite of the lack of fit. 	
	
We propose a new model that maintains a low-dimensional latent structure and captures the remaining dependence.	
	We achieve this by including an additional graphical component to describe the dependence that is not explained by the low-dimensional latent vector.
We call it \textit{Fused Latent and Graphical} (FLaG) model. The new model captures two interpretable sources of dependence, i.e. the common dependence through the latent vector and the  ad hoc dependence through a sparse graphical structure.

Figure~\ref{fig:graph1} provides a graphical illustration of the multidimensional IRT model and the FLaG model. The left panel shows a graphical representation of the marginal distribution of the responses, where there is an edge between each pair of responses.
	Under the conditional independence assumption \eqref{local}, there exists a latent vector $\ttt$. If we include $\ttt$ in the graph, then there is no edge among $X$'s.
	Our concern is that the middle graph may be oversimplified and there may not exist a low-dimensional $\ttt$ to achieve such a simple structure.
	The FLaG model (the right panel) is a natural extension.
	There remain edges among $X$'s even if $\ttt$ is included, suggesting that $\ttt$ does not fully explain the dependence among $X$. However, the remaining dependence is substantially reduced compared with the left penal.

\begin{figure}[ht]
\centering
\includegraphics[scale = 0.3]{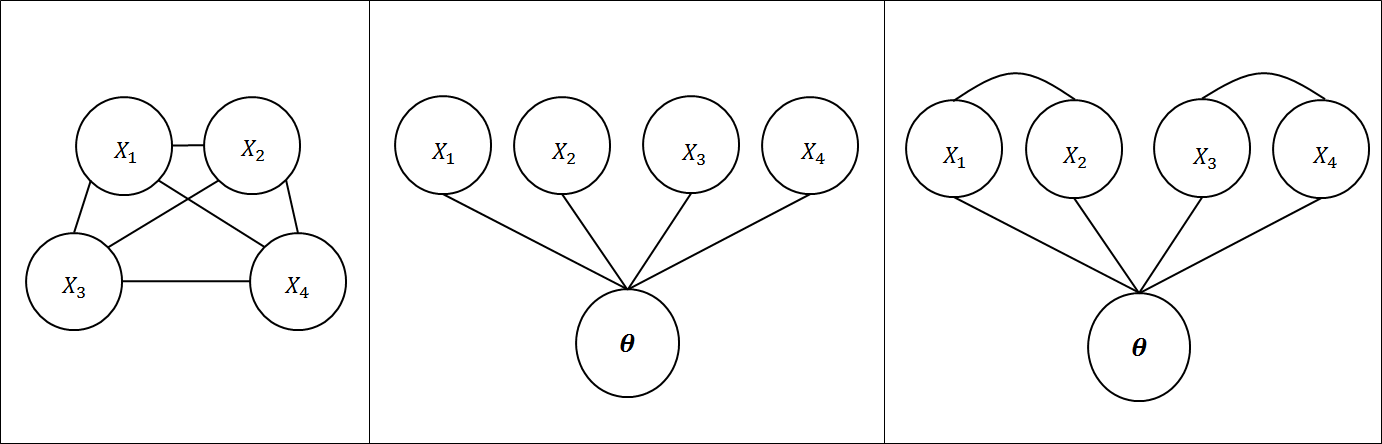}
\caption{A graphical illustration of the multidimensional IRT model and the proposed FLaG model.}
\label{fig:graph1}
\end{figure}

	From the inference viewpoint, it is important to separate the common dependence due to the latent vector and the ad hoc dependence due to the graphical structure. To do so, we make inference based on the following assumptions.
	The variation of responses is mostly characterized by the latent vector.  A low-dimensional latent vector model is largely correct and majority of the dependence among the responses is induced by the common latent vector.
	There is just a small remainder due to the graphical structure. In particular, the conditional graph does not have too many edges. Technical statements of these assumptions will be described in the sequel.
	During the estimation, we assume that neither the dimension of the latent vector $K$ nor the graphical structure is known. We estimate the latent structure and the conditional graph simultaneously by including penalty terms to regularize the dimension of the latent vector and the number of edges of the conditional graph.
	Thus, the resulting model contains a low-dimensional latent vector and a sparse conditional graph.
	
	To model  the graphical component, we adopt an undirected graph  that characterizes the conditional independence/dependence structure among the responses \citep{pearl1988probabilistic, lauritzen1996graphical}.
	In particular, we consider the Ising model originated in physics \citep{ising1925beitrag}. It has been employed to model multivariate binary data in political voting \citep{banerjee2008model} and  climate data \citep{barber2015high}.
	Estimation of the Ising model via regularization has been studied by \cite{hofling2009estimation}, \cite{ravikumar2010high}, and \cite{guo2010joint}.


The proposed  modeling framework is  related to the analysis of decomposing a matrix into low-rank and sparse components \citep{candes2011robust,zhou2010stable,chandrasekaran2011rank} and the statistical inference of a
multivariate Gaussian model whose precision matrix admits the form of a low-rank matrix plus a sparse matrix \citep{chandrasekaran2012latent}.
However, the inference and optimization of the current model are different from the linear case.
We construct a pseudo-likelihood function, based on which a regularized estimator is proposed for simultaneous model selection and parameter estimation. The optimization for the regularized estimator is convex, for which
we develop an efficient algorithm through the alternating direction method of multiplier \citep[ADMM;][]{boyd2011distributed, glowinski1975approximation, gabay1976dual}.


	The rest of this paper is organized as follows. In Section \ref{section: model}, we first provide a brief review of the multidimensional item response theory model and the Ising model.
It is then followed by the introduction of the FLaG model.
Section \ref{sec:est} introduces a pseudo-likelihood function and
 presents the regularized pseudo-likelihood estimator.
 An efficient algorithm is developed and related computational issues are also discussed in Section \ref{sec:comp}. Section \ref{section: simulation} includes simulation studies and a real data analysis.

\section{Fused latent and graphical model} \label{section: model}

\subsection{Two basic models}


	To begin with, we present two commonly used models as the basic building blocks: the multidimensional two-parameter logistic model and the Ising model. We consider that $N$ independent and identically distributed random vectors are observed. We use $\XX_i = (X_{i1},...,X_{iJ})$ to denote the $i$th random observation and $\xx_i = (x_{i1},...,x_{iJ})$ its realization. Furthermore, we use $\XX = (X_1,...,X_J)$ as a generic random vector equal in distribution to each $\XX_i$.
	Throughout this paper, we consider binary observations, that is, each $X_{ij}$ takes values in $\{0,1\}$. For more general types of categorical variables, the analysis can be extended if it can be fit into an exponential family.
	
	Latent variable models assume that there exists an unobserved random vector $\ttt =(\theta_1,...,\theta_K)$ associated with $\XX$, such that the conditional distribution of $\XX$ given $\ttt$ takes a simpler form that is easy to parameterize and estimate.
	For instance, the conditional variance $\V(\XX|\ttt)$ is substantially reduced compared to $\V(\XX)$, in which case the random vector $\XX$ is very close to (or essentially lives on) a low-dimensional manifold generated by $\ttt$.
	Another popular approach is to assume that $\XX$ is conditionally independent given $\ttt$, that is,
	$$f(\xx|\ttt) = \prod_{j=1}^J f_j(x_j|\ttt).$$
	This is also known as the \emph{local independence} assumption that is widely used in cognitive assessment \citep{Embretson}.
	In this case, the dependence among $X_j$'s is fully accounted for by the common latent vector $\ttt$ and the variation of $X_j$ given $\ttt$ is essentially considered as independent random noise.
	
	Latent variable models largely fall into two classes based on the type of $\ttt$: discrete and continuous. In this paper, we consider the latter that $\ttt\in \mathbb R^K$ is a $K$-dimensional continuous random vector.
	The multidimensional item response theory model is a popular class of nonlinear latent variable models. The conditional distribution of each $X_j$ given $\ttt$ admits the form of a generalized linear model. In the case of binary data, the most popular is the multivariate 2-parameter logistic model (M2PL)
\begin{equation}\label{IRF}
 \mathbb P( X_j =1| \ttt) = \frac{e^{a_j^\top \ttt + b_j}}{1+ e^{a_j^\top \ttt + b_j}},
\end{equation}
where $a_j = (a_{j1},...,a_{jK})^\top \in \mathbb R^K $ is the loading vector of the latent vector and $b_j\in \mathbb R$ controls the marginal probability of $X_j$.
	The above probability as a function of $\ttt$ is also known as the \emph{item response function}.
	Furthermore, the responses are assumed to be independent conditional on the latent vector $\ttt$, that is,
\begin{equation*}
\tilde f_{\ttt} (\xx) =\mathbb  P(X_1=x_1,...,X_J = x_J | \ttt) = \prod_{j=1}^J \mathbb  P( X_j =x_j| \ttt).
\end{equation*}
	In addition, a prior distribution $\pi$ is imposed and the marginal distribution $\XX$ is
\begin{equation}
\mathbb P(X_1=x_1,...,X_J = x_J) = \int_{\mathbb R^K} \prod_{j=1}^J \mathbb  P( X_j =x_j| \ttt) \pi(\ttt) d \ttt.
\end{equation}
	In latent variable modeling, it is important to keep $K$, the dimension of the latent vector,  strictly less than $J$, that of the observed data. In fact, in most cases, $K$ is much smaller than  $J$.
	As mentioned previously, a low-dimensional latent variable model is often insufficient to capture all the dependence among $\XX$.
	We take the multidimensional item response model as the basic latent variable model and further add a graphical component to it.

	We consider the Ising model as the graphical component that is an undirected graphical model, also known as Markov random field. The specification of an undirected graphical model consists of a set of vertices $V = \{1,...,J\}$ and a set of edges $E \subset V \times V$. The graph is undirected in the sense that $(i,j)\in E$ if and only if $(j,i) \in E$.
	We associate a Bernoulli random  variable $X_j$ to each vertex $j\in V$. The graph encodes the conditional dependence structure among $X_1,...,X_J$.
	In particular, vertices $i$ and $j$ do not have an edge, $(i,j)\notin E$, if $X_i$ and $X_j$ are conditionally independent given all others, $\{X_l: l \neq i \mbox { or } j\}$.
	The Ising model parameterizes an undirected graph via the exponential family admitting the following probability mass function
\begin{equation}\label{ising}
\bar f(\xx) = \frac 1 {z(S)} \exp\Big\{\frac 1 2 \xx ^\top S \xx\Big\},
\end{equation}
where $S = (s_{ij})$ is a $K$ by $K$ symmetric matrix, i.e., $s_{ij} = s_{ji}$, and $z(S)$ is the normalizing constant
\begin{equation}\label{NC}
z(S) = \sum_{\xx \in \{0,1\}^K} \exp\Big\{\frac 1 2 \xx ^\top S \xx\Big\}.
\end{equation}
	The matrix $S$ maps to a graphical structure. There is an edge between vertices $i$ and $j$, $(i,j)\in E$, if and only if $s_{ij} = s_{ji} \neq 0$. According to the probability mass function \eqref{ising}, it is easy to check that $X_i$ and $X_j$ are conditionally independent given all other $X_l$'s, $l \neq i$ or $j$, if $s_{ij} = 0$.

\subsection{Fused latent and graphical model}

We propose a fused latent and graphical (FLaG) model that combines the IRT model and the Ising model. To do so, we present another representation of the IRT model. We write the item response function \eqref{IRF} as
$$\mathbb P(X_j = x_j|\ttt) = \frac{e^{(a_j^\top \ttt + b_j) x_j}}{1+ e^{a_j^\top \ttt + b_j}} \propto e^{(a_j^\top \ttt + b_j) x_j}.$$
With the local independence assumption, the joint conditional distribution is
$$\tilde f(\xx)  \propto \exp\Big \{ \sum _{j=1}^J (a_j^\top \ttt + b_j) x_j\Big\}
= \exp\Big \{ \ttt ^\top A^\top \xx + \bb^\top \xx\Big\}, $$
where $A = (a_1,...,a_J) = (a_{jk})_{J\times K}$ and $\bb = (b_1,...,b_J)^\top.$

\begin{remark}
Throughout this paper, we frequently use the notation ``$\propto$'' to define probability density or mass functions. It means that the left-hand side and the right-hand side are different by a factor that depends only on the parameters
and is free of the value of the random variable/vector. The constant can be obtained by summing or integrating out the random variable/vector. Such a constant sometimes could be difficult to obtain, which will be discussed in the sequel.
\end{remark}

With this representation, the probability mass function of the Ising model in \eqref{ising} can be similarly written as
$$\bar f(\xx) \propto  \exp\Big\{\frac 1 2 \xx ^\top S \xx\Big\}.$$
We combine these two models and write
\begin{equation}\label{LGM}
f(\xx |\ttt, A,S) \triangleq \mathbb P (X_1= x_1,...,X_J = x_J|\ttt, A, S) \propto \exp\Big \{ \ttt ^\top A^\top \xx + \frac 1 2 \xx ^\top S \xx\Big\}.
\end{equation}
We remove the term $\bb ^\top \xx$, because it is absorbed into the diagonal terms of $S$. Notice that $x_j\in \{0,1\}$ and thus $x_j = x_j^2$. The squared terms in \eqref{LGM} becomes linear  $\sum_{j=1}^J s_{jj}x_j^2 = \sum_{j=1}^J s_{jj}x_j$.
For technical convenience, we further impose a prior distribution on $\ttt$ such that the joint distribution of $(\XX, \ttt)$ given the parameters $(A,S)$ is
\begin{equation}
f(\xx, \ttt | A,S) \propto \exp\Big \{ - \frac 1 2 \Vert \ttt\Vert^2 + \ttt ^\top A^\top \xx + \frac 1 2 \xx ^\top S \xx\Big\},
\end{equation}
where $\Vert \cdot \Vert$ is the  usual Euclidean norm on $\mathbb R^K$.
Define the normalizing constant
$$z(A,S) = \sum_{\xx\in \{0,1\}^J}\int_{\mathbb R^K} \exp\Big \{- \frac 1 2 \Vert\ttt\Vert^2+ \ttt ^\top A^\top \xx + \frac 1 2 \xx ^\top S \xx\Big\} d\ttt.$$
The complete data likelihood function of a single observation is
\begin{equation}\label{JointLike}
f(\xx , \ttt | A,S) = \frac{1}{z(A,S)} \exp\Big \{ - \frac 1 2 \Vert \ttt\Vert^2 + \ttt ^\top A^\top  \xx + \frac 1 2 \xx ^\top S \xx\Big\}.
\end{equation}
	The normalizing constant $z(A,S)$ is not easy to compute and thus evaluation of the above likelihood is not straightforward. We will address this issue momentarily.

Both the IRT and the Ising models are special cases of \eqref{LGM}.
	By setting $a_{jk} = 0$, \eqref{LGM} recovers the Ising model with parameter matrix $S$; by setting $s_{ij}=0$ for $i\neq j$, \eqref{LGM} is equivalent to an IRT model.
	Conditional on $\ttt$, $\XX$ follows the Ising model, in particular,
	$$\mathbb  P(\XX = \xx |\ttt, A,S) \propto \exp\Big\{\frac 1 2 \xx^\top S(\ttt) \xx\Big\},$$
	where $s_{ij} (\ttt) = s_{ij}$ for $i \neq j$ and $s_{jj} (\ttt) = s_{jj} + 2 a_j^\top \ttt $. The graphical structure $S$, in particular, $\{s_{ij}: i\neq j\}$, captures the remaining dependence that is not explained by the latent vector.
	For each $X_j$, if we further condition on the rest of the random variables $\XX_{-j} = (X_i: i\neq j)$, the conditional distribution admits the form of a logistic model
\begin{equation*}
\mathbb P(X_j=1 | \XX_{-j} =\xx_{-j}, \ttt, A,S) = \frac{\exp\{\frac{s_{jj}(\ttt)}{2} + \sum_{i\neq j } s_{ij}x_i\}}{1+ \exp\{\frac{s_{jj}(\ttt)}{2} + \sum_{i\neq j } s_{ij}x_i\}}.
\end{equation*}
Thus, the conditional distribution can be written in a closed form, though the joint likelihood \eqref{JointLike} is often difficult to evaluate.

	Lastly, we consider the marginal joint distribution of $\XX$ with the latent vector $\ttt$ integrated out, more precisely,
\begin{equation}\label{ML}
f(\xx | A,S) = \int_{\mathbb R^K} f(\xx ,\ttt | A,S ) d\ttt = \frac{(2\pi)^{K/2}}{z(A,S)} \exp\Big\{\frac{1}{2} \mathbf \xx^{\top} (A A^\top + S) \mathbf \xx\Big \}.
\end{equation}
As the latent vector $\ttt$ is not directly observed, our subsequent analysis of the estimation is mostly based on the above marginal likelihood.
	Notice that the loading matrix $A$ enters the likelihood function $f(\xx|A,S)$ in the form of $A A^\top$. Therefore, $A$ is not identifiable by itself. We reparameterize the likelihood function and define $L= A A^\top$. With a  slight abuse of notation, we write
	$$f(\xx | L, S) = \frac{(2\pi)^{K/2}}{z(A,S)} \exp\Big\{\frac{1}{2} \mathbf \xx^{\top} (L + S) \mathbf \xx\Big \}.$$
	This is mostly because the latent vector $\ttt$ is not directly observed and its loading matrix $A$ can only be identified up to a non-degenerate  transformation.
Note also that there is an identifiability issue between $L$ and $S$, as the two matrices enter the marginal likelihood function in the form of $L+S$.
	In particular, $L$ characterizes the dependence among $\XX$ that is due to the latent structure and $S$ characterizes that of the graphical structure. In the analysis, assumptions will be imposed on the parameter space so that $L$ and $S$ are separable from each other based on the data.

\section{On maximum regularized pseudo-likelihood estimator}\label{sec:est}

\subsection{Estimation}

	In this section, we address issues related to estimation of the latent  graphical model described in the previous section including evaluation of the likelihood function, dimension estimation/reduction of the latent vector, estimation of the conditional graph, parameter identifiability, and oracle property of the proposed estimator.
	To begin with, we assume that all parameters including the dimension of the latent vector $\ttt$ and the conditional graph are unknown.

	The first issue concerning the estimation is that evaluation of the marginal likelihood function \eqref{ML} involves the normalizing constant $z(A,S)$ whose computational complexity grows exponentially fast in the dimension $J$. In fact, its computation is practically infeasible even for a moderately large $J$.
	We take a slightly different approach by considering the conditional likelihood of $X_j$ given $\XX_{-j}$, which has a closed form as discussed previously. Let $L= (l_{ij})$ and $S= (s_{ij})$. We have
\begin{equation}\label{eq:logis}
\mathbb P(X_j=1 | \XX_{-j} =\xx_{-j},L,S) = \frac{\exp\{\frac{l_{jj}+s_{jj}}{2} + \sum_{i\neq j } (l_{ij}+s_{ij})x_i\}}{1+ \exp\{\frac{l_{jj}+s_{jj}}{2} + \sum_{i\neq j } (l_{ij}+s_{ij})x_i\}}.
\end{equation}
This closed form is crucial for our inference. Let
\begin{equation}\label{eq:con_lik}
\mathcal L_j(L, S; \xx) \triangleq \mathbb P(X_j = x_j | \XX_{-j} =\xx_{-j},L,S)
\end{equation}
denote the conditional likelihood for $X_j$ given $\XX_{-j}$. Our estimation is based on a pseudo-likelihood function by multiplying all the conditional likelihood together.
The pseudo-likelihood based on $N$ independent observations is
\begin{equation}\label{PL}
\mathcal L(L,S) = \prod_{i=1}^N \prod_{j=1}^J \mathcal L_j (L,S; \xx_i),
\end{equation}
where $\xx_i$ is the $i$th observation.
	
	In the above pseudo-likelihood, $L$ and $S$ are unknown parameters. Besides, the dimension of the latent vector $\ttt$ and the conditional graphical structure implied by $S$ are also unknown.
	 We will estimate the set of edges $E$.
	As for the dimension of $\ttt$, to ensure identifiability, we assume that the loading matrix $A$ is of full column rank; otherwise, we can always reduce the dimension $K$ and make $A$ full column rank. Thus, $L = A A^\top$ also has rank $K$. Notice that $L$ is a positive semidefinite matrix. The rank of $L$ is the same as the number of its non-zero eigenvalues. To estimate the conditional graph and the dimension of the latent vector, we impose regularization on  $S$ and $L$.
	
	As mentioned previously, the parameters $L$ and $S$ enter the likelihood function in the form of $L+S$. In principle, one cannot identify $L$ from $S$ based on the data only. We will impose additional assumptions to ensure their identifiability (or uniqueness of the estimator) based on the following rationale.
	We believe that the multidimensional IRT model (with the local independence assumption) is largely correct. The latent vector accounts for most dependence/variation of the multivariate response vector \XX. In the context of cognitive assessment, this is interpreted as that a person's responses to items are mostly driven by a few latent attributes.
	The remaining dependence is rather low. Thus, a crucial assumption in our estimation is that the graphical structure explains a small portion of the dependence in \XX.
	To quantify this assumption, we assume that the matrix $S$ is sparse. In addition, the dimension of the latent vector stays low.
	These assumptions will be made precise in later discussions where theoretical properties of our estimator are established.

	Based on the above discussion, we propose an estimator by optimizing a regularized pseudo-likelihood
\begin{equation}\label{est}
(\hat L, \hat S) = \arg\min_{L,S}\Big\{ - \frac 1 N\log \{ \mathcal L(L,S)\} + \gamma \Vert \OO(S)\Vert_1 + \delta \Vert L\Vert_*\Big \}
\end{equation}
where $\mathcal L(L,S)$ is defined by \eqref{PL} and the minimization is subject to the constraints that $L$ is positive semidefinite and $S$ is symmetric.  Throughout this paper, we use $L \succeq 0$ to denote that $L$ is positive semidefinite.

	We provide some explanations of the two penalty terms $\Vert \OO(S)\Vert_1$ and $\Vert L\Vert_*$.
	In the first term, $\OO(S)$ is a $J\times J$ matrix such that it is identical to $S$ except that its diagonal entries are all zero, that is, $\OO(S)=(\tilde s_{ij}) $ where $\tilde s_{ij} = s_{ij}$ for $i\neq j$ and $\tilde s_{ii}=0$.  Thus,
$$\Vert \OO(S)\Vert_1 = \sum_{i\neq j} |s_{ij}|,$$
which penalizes the number of nonzero $s_{ij}$'s that is also the number of edges in the conditional Ising model.
	Notice that we do not penalize the diagonal elements of $S$ because $s_{jj}$ controls the marginal distribution of  $X_j$. As mentioned previously, the constant term $b_j$ in  the IRT model is  absorbed into the diagonal term $s_{jj}$.
	By increasing the regularization parameter $\gamma$, the number of nonzero off-diagonal elements decreases and thus the number of edges in the conditional graph also decreases.
	The $L_1$ penalty was originally proposed in \cite{tibshirani1996regression} for linear models and later in the context of graphical models \citep{meinshausen2006high,friedman2008sparse,hofling2009estimation,ravikumar2010high,guo2010joint}.


	The second penalty term is  $\Vert L \Vert_* = \mathrm{Trace} (L)$. Notice that $L = A^\top A$ is a positive semidefinite matrix and admits the following eigendecomposition
	$$L = T^\top \Lambda T,$$
	where $T$ is an orthogonal matrix,  $\Lambda =\mathrm{diag}\{\lambda_1,..., \lambda_J \}$ and $\lambda_j \geq 0$. The nuclear norm can be alternatively written as
	$$\Vert L \Vert _* = \sum_{j=1}^J |\lambda_j|.$$
	Therefore, $\Vert L \Vert _*$ penalizes the number of nonzero eigenvalues of $L$, which is the same as the rank of $L$.
	This regularization is first proposed in \cite{fazel2001rank} and its statistical properties are studied in \cite{bach2008consistency}.
	The estimators $\hat L$ and $\hat S$ depend on the regularization parameters $\gamma$ and $\delta$, whose choice will be described in the sequel. To simplify notation, we omit the indices $\gamma$ and $\delta$ in the notation $\hat L$ and $\hat S$.

	The regularized estimators $\hat L$ and $\hat S$ naturally yield estimators of the dimension of $\ttt$ and the conditional graph $E$. In particular, an estimator of the dimension of $\ttt$ is
\begin{equation} \label{EstK}
\hat K = \mathrm{rank}(\hat L)
\end{equation}
and an estimator of the conditional graph is
\begin{equation}\label{EstE}
\hat E = \{(i,j):\hat s_{ij} \neq 0\}.
\end{equation}
In what follows, we state the theoretical properties of this  regularized pseudo-likelihood estimator.

\subsection{Theoretical properties of the estimator}

In this subsection, we present the properties of the regularized estimator $(\hat L, \hat S)$ defined as in \eqref{est} and the estimators $\hat K$ and $\hat E$ defined as in \eqref{EstK} and \eqref{EstE}.
Throughout the discussion, let $L^*$ and $S^*$ denote the true model parameters.

To state the assumptions, we first need the following technical developments.
The pseudo-likelihood (and the likelihood) function depends on $L$ and $S$ through $L+S$. Define
 \begin{equation}\label{eq:def:h_fun}
h_N(L+S) =  -\frac 1 N \log \{\mathcal L(L,S)\}.
\end{equation}
If we reparameterize $M= L+S$, its information associated with the pseudo-likelihood is given by
\begin{equation}\label{info}
\II=\E\Big\{\frac{\partial^2 h_1}{\partial^2 M}\Big|_{M=M^*}\Big \},
\end{equation}
which is a $J^2$ by $J^2$ matrix and $M^*= L^* + S^*$ is the true parameter matrix.
For a differentiable manifold $\mathcal M$, we let $T_x \mathcal M$ denote its tangent space at $x\in \mathcal M$. We refer to \cite{sternberg1965lectures} for the definition of a manifold and its tangent space.
The first condition, which ensures local identifiability, is as follows.
\begin{itemize}
\item[A1] The matrix $\mathcal I^*$ is positive definite restricted to the set
$$\mathcal M \triangleq \{M = L+S: \mbox{$L$ is positive semidefinite and $S$ is symmetric}\}.$$
That is, for each vector $v\in  \mathcal M$, $v^\top \II v \geq 0$ and the equality holds if and only if $v=0$.
\end{itemize}
In what follows, we describe a few submanifolds of $\mathbb R^{J\times J}$  and their tangent spaces.
Let $\mathcal S^*$ be the set of symmetric matrices admitting the same sparsity as that of $S^*$, that is,
\begin{equation*}
\mathcal S^* = \{S: \mbox{$S$ is a $J\times J$ symmetric matrix and $s_{ij} = 0$ if $s^*_{ij} =0$ for all $i\neq j$}\}.
\end{equation*}
On considering that $\mathcal S^*$ is a submanifold of $\mathbb R^{J\times J}$,  its tangent space at $S^*$ is $\mathcal S^*$ itself, that is,
$$T_{S^*} \mathcal S^* = \mathcal S^*.$$
Define the set of matrices
$$
\mathfrak{L}=\{L: L \mbox{ is positive semidefinite and } \mathrm{rank}(L)\leq K\},
$$
where $K=\mathrm{rank}(L^*)$. The set $\mathfrak{L}$ is differentiable in a neighborhood of $L^*$.  Therefore,
it is a submanifold of $\mathbb R^{J\times J}$ within the neighborhood of $L^*$ and its tangent space at $L^*$ is well defined.
	To describe the tangent space of $\mathfrak L$ at $L^*$, we consider its eigendecomposition
	$$L^*= U_1^*D_1^*U_1^{*\top},$$
	where $U_1^*$ is a $J\times K$ matrix satisfying $U_1^{*\top} U_1^*=I_K$, $I_K$ is the $K\times K$ identity matrix, and $D_1^*$ is a $K\times K$ diagonal matrix consisting of the (positive) eigenvalues of $L^*$. Then, the tangent space of $\mathfrak L $ at $L^*$ is
	$$T_{L^*}\mathfrak L  = \{U^*_1 Y+Y^{\top} U_1^{*\top}: Y \mbox{ is a $K\times J$ matrix} \}.$$
We make the following assumptions on $L^*$ and $\MS$.
\begin{enumerate}
	\item[A2] The positive eigenvalues of $L^*$ are distinct.
	\item[A3] The intersection between $\MS$ and $T_{L^*}\mathfrak L$ is trivial, that is, $\MS\cap T_{L^*}\mathfrak L= \{\mathbf{0}_{J\times J} \}$, where $\mathbf{0}_{J\times J}$ is the $J\times J$ zero-matrix. 
\end{enumerate}
Lastly, we present an irrepresentable condition that is key to the consistency of the regularized estimator.
	Define a linear operator $\FF: \MS\times T_{L^*}\mathfrak L \to  \MS\times T_{L^*}\mathfrak L$,
	\begin{equation}\label{F}
		\FF(S,L) = (\PP_{\MS}\{ \II(S+L)\}, \PP_{T_{L^*}\mathfrak L}\{ \II(S+L)\}),
	\end{equation}
where $\II$ is the $J^2 \times J^2$ matrix in \eqref{info}. With a slight abuse of notation,  we let $\II(S+L)$ denote matrix-vector multiplication where $S$ and $L$ are vectorized with their elements being arranged in the same order as the order of the derivatives in $\II$.
	The map $\PP_{\mathcal M} (A)$ is the projection operator of matrix $A$ onto the manifold $\mathcal M$ with respect to  the inner product for matrices,
	$$A\cdot B=\sum_{i=1}^J\sum_{j=1}^{J} A_{ij}B_{ij}=\mathrm{Trace}(AB^\top).$$
	That is, $\PP_{\mathcal M} (A)$ is the matrix in $\mathcal M$ minimizing the distance to $A$ induced by the matrix inner product ``$\cdot$''.
We define a linear operator $\FF^{\bot}:\MS\times \ML^*\to \MS^{\bot}\times \ML^{*\bot}$,
$$
\FF^{\bot}(S',L')= (\PP_{\MS^{\bot}}\{\II(S'+L')\}, \PP_{(T_{L^*}\mathfrak L)^{\bot}}\{\II(S'+L')\}).
$$
For a linear subspace $\mathcal M$, $\mathcal M^\bot$ denote its orthogonal complement in $\mathbb R^{J\times J}$.
For a matrix $A= (a_{ij})$, we apply the $\mathrm{sign}$ function  to each of its element, that is
$$\mathrm{sign}(A) = (\mathrm{sign}(a_{ij})).$$
Furthermore, for each constant $\rho>0$, define a norm for a matrix couple $(A,B)$ of appropriate dimensions such that
	$$
	\|(A,B)\|_{\rho}=\max(\|A\|_{\infty}, \|B\|_{2}/\rho ),
	$$
where $\|\cdot\|_{\infty}$ and $\|\cdot \|_{2}$ are the maximum and spectral norm respectively. Here, the spectral norm $\|B\|_2$ is defined as the largest eigenvalue of $B$ for a positive semidefinite matrix $B$.
The last condition is stated as follow
\begin{itemize}
	\item[A4] There exists a positive constant $\rho$ such that
	\begin{equation}\label{eq:irre}
	\Big\|\FF^{\bot} \FF^{-1}(\mathrm{sign}(\OO(S^*)), \rho U_1^* U_1^{*\top})\Big\|_{\rho}<1.
	\end{equation}
\end{itemize}
The following lemma guarantees that $\FF^{-1}$ in \eqref{eq:irre} is well defined.
\begin{lemma}\label{lemma:f-inv}
	Under Assumptions A1 and A3, the linear operator $\FF$ is invertible over $\MS\times T_{L^*}\mathfrak L$.
\end{lemma}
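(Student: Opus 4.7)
The plan is to reduce invertibility to injectivity and then show the kernel of $\FF$ is trivial. Since $\FF$ is a linear endomorphism of the finite-dimensional vector space $\MS \times T_{L^*}\mathfrak L$ (both coordinates sit inside the finite-dimensional space of $J\times J$ symmetric matrices), it is invertible if and only if its kernel is $\{(0,0)\}$. So the only work is to rule out nonzero $(S,L)$ with $\FF(S,L)=(0,0)$.

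Suppose $(S,L)\in\MS\times T_{L^*}\mathfrak L$ satisfies $\FF(S,L)=(0,0)$. By definition of $\FF$ this means $\II(S+L)$ is orthogonal (in the matrix inner product $A\cdot B=\mathrm{Trace}(AB^\top)$) to both $\MS$ and $T_{L^*}\mathfrak L$. Testing these orthogonality conditions against the specific elements $S\in\MS$ and $L\in T_{L^*}\mathfrak L$ gives
$$S\cdot \II(S+L)=0,\qquad L\cdot \II(S+L)=0,$$
and adding them yields $(S+L)\cdot \II(S+L)=0$. Now $S$ is symmetric, and every element of $T_{L^*}\mathfrak L$ has the form $U_1^*Y+Y^\top U_1^{*\top}$, which is also symmetric; hence $S+L$ is symmetric. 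Every symmetric matrix can be written as a positive semidefinite matrix plus a symmetric matrix (e.g.\ with the PSD part zero), so $S+L\in\mathcal M$. Assumption A1 then forces $S+L=0$, i.e.\ $L=-S$.

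Since $\MS$ is a linear subspace, $L=-S\in\MS$; combined with $L\in T_{L^*}\mathfrak L$, this gives $L\in \MS\cap T_{L^*}\mathfrak L$. By Assumption A3 this intersection is $\{\mathbf 0_{J\times J}\}$, so $L=0$ and therefore $S=0$ as well. This proves $\ker(\FF)$ is trivial, completing the argument.

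The only substantive point — and the place I would be most careful — is the symmetry/membership check that lets us invoke A1: one has to verify that $S+L$ actually lies in $\mathcal M$ so that A1's positive-definiteness conclusion applies. Everything else (interpreting the projection conditions as orthogonality, combining the two inner-product equations, and then using A3 to kill the remaining degree of freedom) is a direct linear-algebra manipulation.
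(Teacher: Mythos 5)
Your proof is correct and follows essentially the same route as the paper: both arguments reduce invertibility to triviality of the kernel, read the conditions $\FF(S,L)=(0,0)$ as orthogonality of $\II(S+L)$ to $\MS$ and to $T_{L^*}\mathfrak L$, pair against $S$ and $L$ to get $(S+L)\cdot \II(S+L)=0$, invoke A1 to conclude $S+L=0$, and finish with A3. (The paper phrases this as a contradiction, but that is cosmetic; you also add the explicit check that $S+L\in\mathcal M$ before applying A1, which the paper leaves implicit.)
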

With these conditions, we present the theoretical properties of our estimator.

\begin{theorem}\label{thm:consistent}
Under Assumptions A1-A4,
choose the tuning parameter $\delta_N=\rho\gamma_N=N^{-\frac{1}{2}+\eta}$
for some sufficiently small positive constant $\eta$, and  $\rho$ satisfying \eqref{eq:irre}. Then, the optimization  \eqref{est} has a unique  solution $(\hat S, \hat L)$ that converges in probability to the true parameter
$(S^*,L^*)$. In addition, $(\hat S, \hat L)$ recovers the sparse and low rank structure of $(S^*,L^*)$ with probability tending to $1$, that is,
$$\lim_{N\to\infty}\mathbb{P}\Big\{\mathrm{sign}(\hat S)= \mathrm{sign}(S^*), \mathrm{rank}(\hat L)=\mathrm{rank}(L^*)\Big\}=1.$$
\end{theorem}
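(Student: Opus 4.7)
The plan is to carry out a primal-dual witness construction, in the spirit of the proof of Theorem~2 in \cite{chandrasekaran2012latent} for latent-variable Gaussian graphical models, but with the (non-quadratic) pseudo-likelihood $h_N$ in place of the Gaussian log-likelihood. The strategy is to build an oracle estimator on the restricted parameter space, verify its first-order optimality there, and then invoke the irrepresentable condition A4 to promote it to a global optimum of \eqref{est}.

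First I would define the restricted estimator $(\tilde L, \tilde S)$ as the minimizer of the penalized pseudo-likelihood over $\mathfrak L \times \MS$, i.e.\ PSD matrices of rank at most $K = \mathrm{rank}(L^*)$ paired with symmetric matrices supported inside $\mathrm{support}(S^*)$. Assumption A1 makes $h_N$ locally strictly convex along $\MM$, and combined with A2 and A3 this implies that on a neighbourhood of $(L^*, S^*)$ the restricted problem has a unique minimizer. A Taylor expansion of the score at $M^* = L^* + S^*$, together with the concentration bound $\nabla h_N(M^*) = O_p(N^{-1/2})$, delivers the rate $\|\tilde L - L^*\|_2 + \|\tilde S - S^*\|_\infty = O_p(N^{-1/2+\eta})$. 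Writing $\tilde L = \tilde U_1 \tilde D_1 \tilde U_1^\top$ and using the standard subgradient characterizations of $\|\OO(\cdot)\|_1$ on $\MS$ and of $\mathrm{Trace}(\cdot)$ on the PSD cone at a rank-$K$ point, the KKT conditions for this restricted problem read
\[
\PP_{\MS}\nabla h_N(\tilde L+\tilde S) + \gamma_N\,\mathrm{sign}(\OO(\tilde S)) = 0, \qquad \PP_{T_{L^*}\mathfrak L}\nabla h_N(\tilde L+\tilde S) + \delta_N\,\tilde U_1 \tilde U_1^\top = 0.
\]

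Next I would promote $(\tilde L, \tilde S)$ to a global optimum of \eqref{est} by verifying strict dual feasibility on the complementary subspaces: $\|\PP_{\MS^{\bot}}\nabla h_N(\tilde L + \tilde S)\|_\infty < \gamma_N$ and $\|\PP_{(T_{L^*}\mathfrak L)^{\bot}}\nabla h_N(\tilde L + \tilde S)\|_2 < \delta_N$. Taylor expanding $\nabla h_N$ around $M^*$, replacing the sample Hessian by $\II$ with a uniform $o_p(1)$ error, and substituting the two KKT equations above, one finds that the ``witness'' pair of subgradients on the off-support is, to leading order, $\FF^{\bot} \FF^{-1}(\mathrm{sign}(\OO(S^*)), \rho U_1^* U_1^{*\top})$ (rescaled by $\gamma_N$ and $\delta_N = \rho\gamma_N$), with $\FF^{-1}$ guaranteed by Lemma~\ref{lemma:f-inv}. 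Assumption A4 then provides a fixed positive slack in the $\|\cdot\|_\rho$-norm, which the $o_p(1)$ remainders cannot overcome, so both strict inequalities hold with probability tending to 1. Together with local strict convexity of $h_N$ on $\MM$ from A1, this forces any global optimum of \eqref{est} to share the support of $S^*$ and to have rank at most $K$, so it must coincide with $(\tilde L, \tilde S)$; hence uniqueness. Sign and rank recovery then follow from $(\hat L, \hat S) \in \mathfrak L \times \MS$ plus the $O_p(N^{-1/2+\eta})$ consistency, since the $K$ positive eigenvalues of $L^*$ (distinct and nonzero by A2) and the finitely many nonzero entries of $S^*$ are bounded away from 0.

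The hardest technical piece will be making the two linearizations quantitative at a compatible rate: (i) replacing the sample Hessian $\nabla^2 h_N$ by $\II$ uniformly on an $O(N^{-1/2+\eta})$ neighbourhood, which requires concentration of a non-quadratic pseudo-likelihood Hessian rather than the closed-form Gaussian case in \cite{chandrasekaran2012latent}; and (ii) controlling the second-order deviation of the curved manifold $\mathfrak L$ from its tangent plane $T_{L^*}\mathfrak L$ at the perturbation scale, where A2 is needed to lower-bound the eigen-gap and keep the manifold smooth. Both remainders must fit inside the slack from A4, which is precisely what pins the choice of rate $N^{-1/2+\eta}$ for a sufficiently small $\eta > 0$.
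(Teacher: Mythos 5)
Your proposal follows essentially the same route as the paper: a restricted (oracle) estimator on the low-rank/sparse parameter subspace, KKT conditions there, Taylor linearization of $\nabla h_N$ to express the off-support subgradients as $\FF^{\bot}\FF^{-1}(\mathrm{sign}(\OO(S^*)),\rho U_1^*U_1^{*\top})$ up to $o_p(1)$ error, then A4 to certify strict dual feasibility and promote the restricted optimizer to the unique global solution. The one place where the paper is more careful than your sketch is the tangent space in the second KKT condition, which must be taken at the estimated $\tilde L$ rather than at $L^*$, with the discrepancy controlled via a Lipschitz estimate for the map $U_1 \mapsto \PP_{T_{L}\mathfrak L}$ (the paper's Lemma~\ref{lemma:lip}); this is exactly the second technical issue you flag and the paper resolves it in the way you anticipate.
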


We provide a discussion on the technical conditions. Condition A1 ensures local identifiability of the parameter $M= S+L$. Given the likelihood function is log-concave, the parameter $M$ can be estimated consistently by the pseudo-likelihood.
Condition A3 corresponds to the \emph{transversality condition} in \cite{chandrasekaran2012latent}. Lastly, Condition A4 is similar to the irrepresentable condition \citep{zhao2006model,jia2008model} that plays an important role in the
consistency of sparse model selection based on $L_1$-norm regularization.


\subsection{On the choice of tuning parameters}\label{sec:bic}

	Theorem \ref{thm:consistent} provides a theoretical  guideline of choosing the regularization parameters $\gamma$ and $\delta$. Nonetheless, it leaves quite some freedom. In what follows, we provide a more specific choice of $\gamma$ and $\delta$ that will be used in  the simulation study and the real data analysis.

We consider to choose $\gamma$ and $\delta$ to
minimize the Bayes information criterion \citep[BIC;][]{Sch78},
that is known to yield consistent variable selection. BIC is defined as
$$\text{BIC}(\mathcal{M}) = -2 \log L_N(\hat \beta(\mathcal{M})) + \vert \mathcal{M}\vert \log N,$$
where $\mathcal{M}$ is the current model, $L_N(\hat \beta(\mathcal{M}))$ is the maximal likelihood for a given  model $\mathcal{M}$, and $\vert \mathcal{M}\vert$ is the number of free parameters in $\mathcal{M}$.
In this study, we replace the likelihood function with the pseudo-likelihood function. To avoid ambiguity, we change the notation and use $\hat L^{\gamma, \delta}$ and $\hat S^{\gamma, \delta}$ to denote the estimator in \eqref{est} corresponding to regularization parameters $\gamma$ and $\delta$. Let
\begin{align*}
\mathcal{M}^{\gamma, \delta} = \left\{\right. & L \mbox{ is positive semidefinite and } S \mbox{ is symmetric}, \\
&\text{rank}(L) \leq \text{rank}(\hat L^{\gamma, \delta}) \mbox{ and } s_{ij} = 0 \mbox{ if } \hat s_{ij}^{\gamma, \delta} = 0 \mbox{ for all } i\neq j\left. \right\}
\end{align*}
be the submodel selected by the tuning parameters $(\gamma, \delta)$. It contains all models in which the positive semidefinite matrix $L$ has rank no larger than that of $\hat L^{\gamma,\delta}$ and the symmetric matrix $S$ has
the same support as $\hat S^{\gamma,\delta}$.
We select the tuning parameters $\gamma$ and $\delta$ such that the corresponding model minimizes
the Bayesian information criterion based on the pseudo-likelihood
\begin{equation}\label{BIC_Chap2}
\text{BIC}(\mathcal{M^{\gamma, \delta}}) = -2 \max_{(L,S)\in M^{\gamma, \delta}}\{\log \mathcal L(L,S) \} + \vert \mathcal{M^{\gamma,\delta}}\vert \log N,
\end{equation}
where the number of parameters in $\mathcal{M^{\gamma, \delta}}$ is
$$\vert \mathcal{M^{\gamma, \delta}}\vert = \left(JK - \frac{(K-1)K}{2}\right) + \sum_{i\leq j} 1_{\{\hat s^{\gamma, \delta}_{ij} \neq 0\}},$$
\mbox{for $\text{rank}(\hat L^{\gamma, \delta}) = K$} .
The two terms are the numbers of free parameters in $L$ and $S$ respectively. Specifically, the number of free parameters in $L$ is counted as follows. Let $L = U_1 D_1 U_1^\top$ be the eigendecomposition of $L$, where $D_1$ is a $K\times K$ diagonal matrix and columns of $U_1$ are unit-length eigenvectors of $L$. $D_1$ has $K$ parameters and $U_1$ has $JK - {K(K+1)}/{2}$ parameters due to constraint $U_1^\top U_1 = I_K$.
Combining them together, $L$ has $JK - {(K-1)K}/{2}$ parameters.

Maximizing the pseudo-likelihood in \eqref{BIC_Chap2} is no longer a convex optimization problem. However, our experience shows that this nonconvex optimization can be solved stably using
a generic numerical solver, with starting point $(\hat L^{\gamma, \delta}, \hat S^{\gamma, \delta})$.
The tuning parameters are finally selected by
$$(\hat \gamma, \hat \delta) = \arg\min_{\gamma, \delta} \text{BIC}(\mathcal{M^{\gamma, \delta}}).$$
In addition, the corresponding maximal pseudo-likelihood estimates of $L$ and $S$ are used as the final estimate of $L$ and $S$:
\begin{equation}\label{MLE_final}
(\hat L, \hat S) = \arg\max_{(L,S)\in M^{\hat \delta, \hat\gamma}}\{\mathcal L(L,S) \}.
\end{equation}

\section{Computation}\label{sec:comp}

In this section, we describe the computation of the regularized estimator in \eqref{est}, which is not straightforward for two reasons.  First, the coordinate-wise descent algorithms \citep{fu1998penalized,friedman2007pathwise}, which are widely used in convex optimization problems with $L_1$ norm regularization,
do not apply well to this problem.
These  algorithms optimize the objective function with respect to one parameter at a time.
For our case, updating with respect to $s_{ij}$ is not in a closed form.
Moreover, the optimization is constrained on a space where the matrix $L$ is positive semidefinite.
As a consequence, it becomes a
semidefinite programming problem, for which a standard approach is the interior point methods \citep[e.g.][]{boyd2004convex}.
The computational cost for each iteration and the memory requirements
of an interior
point method are prohibitively high for this problem, especially when
$J$ is  large.

We propose a  method that avoids these problems by taking advantage of the special structure of the $L_1$ and nuclear norms by means of the alternating direction method of multiplier \citep[ADMM;][]{boyd2011distributed, glowinski1975approximation, gabay1976dual}.
The key idea is to decompose the optimization of \eqref{est} into subproblems that can be solved efficiently.

Consider two closed convex functions $$f: \mathcal{X}_f  \rightarrow \mathbb R \mbox{~~~~and~~~~} g: \mathcal{X}_g  \rightarrow \mathbb R,$$
where the domains $\mathcal{X}_f$ and $\mathcal{X}_g$ of functions $f$ and $g$ are closed convex subsets of $\mathbb R^n$, and $\mathcal{X}_f \cap \mathcal{X}_g$ is nonempty. Both $f$ and $g$ are possibly  nondifferentiable.
The alternating direction method of multiplier is an iterative algorithm that solves the following generic optimization problem:
\begin{equation*}
\begin{aligned}
\min_{x \in \mathcal{X}_f \cap \mathcal{X}_g} &\ \{f(x) + g(x)\},\\
\end{aligned}
\end{equation*}
or equivalently
\begin{equation}\label{ADMM}
\begin{aligned}
\min_{x \in \mathcal{X}_f, z \in \mathcal{X}_g} &\ \{f(x) + g(z)\}.\\
\mbox{ s.t. }  &\ \ x = z
\end{aligned}
\end{equation}
To describe the algorithm, we first define proximal operators $\text{\bf P}_{\lambda, f}$: $\mathbb R^n \rightarrow \mathcal{X}_f$ as
$$\text{\bf P}_{\lambda, f}(v) = \arg\min_{x \in \mathcal{X}_f} \{ f(x) + \frac{1}{2\lambda} \Vert x - v \Vert^2\} $$
and
$\text{\bf P}_{\lambda, g}$: $\mathbb R^n \rightarrow \mathcal{X}_g$
$$\text{\bf P}_{\lambda, g}(v) = \arg\min_{x \in \mathcal{X}_g} \{ g(x) + \frac{1}{2\lambda} \Vert x - v \Vert^2\}, $$
where $\Vert \cdot\Vert$ is the usual Euclidean norm on $\mathbb R^n$ and $\lambda$ is a scale parameter that is a fixed positive constant.
The algorithm starts with some initial values $x^0 \in \mathcal{X}_f$, $z^0 \in \mathcal{X}_g$, $u^0\in \mathbb R^n$. At the $(m+1)$th iteration, $(x^m, z^m, u^m)$ is updated according to the following steps until convergence
\begin{itemize}[leftmargin=2cm]
\item [Step 1: ] $x^{m+1} := \text{\bf P}_{\lambda, f}(z^m - u^m)$;
\item [Step 2: ] $z^{m+1} := \text{\bf P}_{\lambda, g}(x^{m+1} + u^m)$;
\item [Step 3: ] $u^{m+1} := u^{m} + x^{m+1} - z^{m+1}$.
\end{itemize}
The algorithm is fast when the proximal operators $\text{\bf P}_{\lambda, f}$ and $\text{\bf P}_{\lambda, g}$
can be efficiently evaluated.
The convergence properties of the algorithm are summarized in the following result in \cite{boyd2011distributed}. Let $p^*$ be the minimized value in \eqref{ADMM}.
\begin{proposition}[\citealp{boyd2011distributed}]
Assume functions $f$: $\mathcal{X}_f  \rightarrow \mathbb R$ and $g$: $\mathcal{X}_g  \rightarrow \mathbb R$ are closed convex functions, whose domains $\mathcal{X}_f$ and $\mathcal{X}_g$ are closed convex subsets of $\mathbb R^n$ and $\mathcal{X}_f \cap \mathcal{X}_g \neq \emptyset$.  Assume the Lagrangian of \eqref{ADMM}
$$L(x, z, y) = f(x) + g(z) + y^{\top}(x-z)$$
has a saddle point, that is, there exists $(x^*,z^*,y^*)$ (not necessarily unique) that $x^* \in \mathcal{X}_f$ and $z^* \in \mathcal{X}_g$, for which
$$L(x^*,z^*,y) \leq L(x^*,z^*,y^*) \leq L(x,z,y^*), \qquad \forall ~x, z, y \in \mathbb{R}^n.$$
Then the ADMM has the following convergence properties.
\begin{enumerate}
  \item Residual convergence. $x^m - z^m \rightarrow 0$ as $m \rightarrow \infty$; i.e., the iterates approach feasibility.
  \item Objective convergence. $f(x^m) + g(z^m) \rightarrow p^*$ as $ m \rightarrow \infty$; i.e., the objective function of the iterates approaches the optimal value.
\end{enumerate}
\end{proposition}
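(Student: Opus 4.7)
The plan is to prove both convergence claims via a Lyapunov (potential) function argument centered on a fixed saddle point $(x^*,z^*,y^*)$ whose existence is assumed. Working with the scaled dual variable $u^m$ (so that $u^m$ plays the role of $y^m/\lambda$ in the unscaled ADMM), the candidate Lyapunov function is
$$V^m = \lambda \|z^m - z^*\|^2 + \frac{1}{\lambda}\|u^m - u^*\|^2, \qquad u^* := y^*/\lambda.$$
The goal is to establish a descent inequality
$$V^m - V^{m+1} \;\geq\; \lambda \|x^{m+1} - z^{m+1}\|^2 + \lambda \|z^{m+1} - z^m\|^2,$$
from which both conclusions of the proposition follow in a clean way.

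First I would write the first-order optimality conditions for the proximal updates in Steps 1 and 2. From the definition of $\mathbf{P}_{\lambda,f}$, the update $x^{m+1}$ satisfies $0 \in \partial f(x^{m+1}) + \lambda^{-1}(x^{m+1} - z^m + u^m)$, so the vector $-\lambda^{-1}(x^{m+1} - z^m + u^m)$ is a subgradient of $f$ at $x^{m+1}$; an analogous identity holds at $z^{m+1}$ for $g$. Combining these with the subgradient inequalities $f(x^*) \geq f(x^{m+1}) + \langle g_f, x^* - x^{m+1}\rangle$ for $g_f \in \partial f(x^{m+1})$, and similarly for $g$, and then adding the saddle-point inequality $f(x^*) + g(z^*) \leq f(x^{m+1}) + g(z^{m+1}) + \langle y^*, x^{m+1} - z^{m+1}\rangle$, a somewhat tedious but direct algebraic manipulation yields the descent inequality above.

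With this inequality in hand, two things follow. Telescoping gives $\sum_m \|x^{m+1} - z^{m+1}\|^2 < \infty$ and $\sum_m \|z^{m+1} - z^m\|^2 < \infty$, which immediately implies residual convergence $x^m - z^m \to 0$. For objective convergence, I would combine the saddle-point lower bound $p^* \leq L(x^{m+1}, z^{m+1}, y^*)$ with an upper bound of the form $L(x^{m+1}, z^{m+1}, y^*) \leq p^* + \langle y^*, x^{m+1} - z^{m+1}\rangle + R_m$, where $R_m$ collects residual-like terms obtained from the optimality conditions, and squeeze $f(x^{m+1}) + g(z^{m+1}) \to p^*$ as the primal residual and the successive difference $z^{m+1} - z^m$ both tend to zero.

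The main obstacle, in my view, is the careful algebraic derivation of the descent inequality: the three-step update interleaves primal and dual variables and one must track subgradient terms throughout, using nothing stronger than the subdifferential since $f$ and $g$ may be nonsmooth. A secondary subtlety is that the saddle-point assumption guarantees existence but not uniqueness, so every inequality must be valid for an arbitrary but fixed saddle point rather than for a canonical optimum; in particular, $y^*$ cannot be pinned down from the iterates and enters the argument only through the saddle-point inequality.
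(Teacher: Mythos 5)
The paper does not prove this proposition: it is a verbatim import from Boyd et al.\ (2011), and the only ``proof'' the paper supplies is the citation. Your sketch reproduces, at a high level, exactly the Lyapunov argument in Appendix~A of that reference, so the approach is the right one: fix a saddle point, read off subgradient identities from the two proximal steps, combine them with the saddle-point inequality to obtain a one-step descent inequality for a quadratic potential, and then telescope to get summability of the primal residual $\|x^{m+1}-z^{m+1}\|^2$ and of the successive difference $\|z^{m+1}-z^m\|^2$, from which both claims follow.

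One substantive caveat: your scaling is internally inconsistent. In the paper's proximal formulation the augmented-Lagrangian parameter is $\rho_{\mathrm{ADMM}}=1/\lambda$, and the scaled dual is $u = y/\rho_{\mathrm{ADMM}} = \lambda y$, not $u = y/\lambda$; so your ``$u^* := y^*/\lambda$'' should read $u^* = \lambda y^*$. Translating Boyd's potential $(1/\rho)\|y^k-y^*\|^2 + \rho\|z^k-z^*\|^2$ into this notation gives $V^m = \tfrac{1}{\lambda}\bigl(\|u^m-u^*\|^2 + \|z^m-z^*\|^2\bigr)$, i.e.\ both terms carry the \emph{same} coefficient $1/\lambda$, with descent bound $V^m - V^{m+1} \ge \tfrac{1}{\lambda}\|x^{m+1}-z^{m+1}\|^2 + \tfrac{1}{\lambda}\|z^{m+1}-z^m\|^2$; your proposed $V^m = \lambda\|z^m-z^*\|^2 + \tfrac{1}{\lambda}\|u^m-u^*\|^2$ with a factor $\lambda$ on the right-hand side of the descent inequality does not match, and the ``tedious but direct algebra'' you defer would not close with those coefficients. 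Finally, for objective convergence the telescoping bound alone is not quite enough as you describe: you also need boundedness of the dual iterates (which $V^m$ being nonincreasing supplies) so that the cross term $\langle y^{m+1}, x^{m+1}-z^{m+1}\rangle$ in the upper bound vanishes. These are repairable slips rather than a wrong route, but they would have to be fixed for the argument to go through.
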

We would like to point out that the assumption on the Lagrangian $L(x, z, y)$ is mild \citep[see Chapter 5,][]{boyd2004convex}. In particular, if
strong duality holds for the problem \eqref{ADMM} and let $(x^*, z^*)$ and $y^*$ be the corresponding
primal and dual optimal points, $(x^*,z^*,y^*)$ forms a saddle point for the Lagrangian.

We now adapt this algorithm to the optimization of the regularized pseudo-likelihood. In particular, we reparameterize $M = L+S$ and let $x = (M, L, S)$ (viewed as a vector). Let $h_N$ be defined as in \eqref{eq:def:h_fun}. We define
\begin{enumerate}[leftmargin=0cm]
\item[] $\mathcal{X}_f = \{(M, L, S): M, L, S \mbox{ are } J \times J \mbox{ matrices },  L \mbox{ is positive semidefinite, } S \mbox{ is symmetric}\},$
\item[] $f(x) =  h_N(M) + \gamma \Vert \OO(S)\Vert_1 + \delta \Vert L\Vert_*,$ $x \in \mathcal{X}_f$,
\item[] $\mathcal{X}_g = \{(M, L, S): M, L, S \mbox{ are } J \times J \mbox{ matrices }, M \mbox{ is symmetric and } M = L + S \}$,
\item[] and $g(x) = 0$, $x \in \mathcal{X}_g$.
\end{enumerate}
Obviously, the optimization \eqref{est} can be written as $$\min_{x \in \mathcal{X}_f \cap \mathcal{X}_g}~\{ f(x) + g(x)\}.$$
In addition, it is easy to verify that $\mathcal{X}_g$ is a closed convex set and $g$
is a closed convex function. Furthermore, $\mathcal{X}_f$ is closed and it is also convex since the symmetric and positive semidefinite constraints are convex constraints.
$h_N(M)$ is convex, since the pseudo-likelihood function is the sum of several log-likelihood functions of the logistic models that are all concave \citep[See Chapter 7,][]{boyd2004convex}. Because the $L_1$ and nuclear norms are convex functions, $f(x)$ is a convex function.
	Thanks to the continuity, $f$ is closed. In summary, $f$ is a closed convex function on its domain $\mathcal{X}_f$.

We now present each of the three steps of the ADMM algorithm and show that the proximal operators $\text{\bf P}_{\lambda, f}$ and $\text{\bf P}_{\lambda, g}$ are easy to evaluate.
Let
	$$x^m = (M^m, L^m, S^m), \quad z^m = (\tilde M^m, \tilde L^m, \tilde S^m),\quad  u^m = (U^m_M, U^m_L, U^m_S).$$
	
	Step 1. We solve $x^{m+1} = \text{\bf P}_{\lambda, f}(z^m - u^m)$. Due to the special structure of $f(\cdot)$, $M^{m+1}, L^{m+1}$, and $S^{m+1}$ can be updated separately. More precisely,

\begin{equation}\label{update1}
M^{m+1} = \arg\min_{M}~~ h_N(M) + \frac{1}{2\lambda} \Vert M - (\tilde M^m - U_{M}^m) \Vert_F^2;
\end{equation}

\begin{equation}\label{update2}
\begin{aligned}
L^{m+1} &= \argmin_{L}~~\delta \Vert L \Vert_{*} + \frac{1}{2\lambda} \Vert L - (\tilde L^m - U_{L}^m) \Vert_F^2,~~~~\\
\mbox{ s.t. } & \mbox{ $L$ is positive semidefinite};
\end{aligned}
\end{equation}

\begin{equation}\label{update3}
\begin{aligned}
S^{m+1} &= \argmin_{S}~~ \gamma \Vert \OO(S)\Vert_1 + \frac{1}{2\lambda} \Vert S - (\tilde S^m - U_{S}^m) \Vert_F^2,\\
\mbox{ s.t. } & \mbox{ $S$ is symmetric,}
\end{aligned}
\end{equation}
where $\Vert \cdot \Vert_F$ is the matrix Frobenius norm, defined as $\Vert M \Vert_F^2 = \sum_{i,j} m_{ij}^2$ for a matrix $M =(m_{ij})$.
We now discuss the optimization problems \eqref{update1}-\eqref{update3}. First, \eqref{update2} and \eqref{update3} can be computed in closed forms. More precisely, when $\tilde L^m - U_{L}^m$ and $\tilde S^m - U_{S}^m$ are both symmetric matrices (which is guaranteed when $\tilde M^0$, $\tilde L^0$, $\tilde S^0$, $U_M^0$, $U_L^0$, and $U_S^0$ are chosen to be symmetric),
$$ L^{m+1}=  T \text{diag}(\Lambda - \lambda\delta)_+ T^{\top},$$
where $\tilde L^{m} - U_L^m=T \Lambda T^{\top}$ is its eigendecomposition and $\text{diag}(\Lambda - \lambda\delta)_+$ is a diagonal matrix with its $j$th diagonal element being $( \Lambda_{jj} - \lambda\delta)_+$. The
operation $( \Lambda_{jj} - \lambda\delta)_+$ is called eigenvalue thresholding. 
In addition, $ S = ( s_{ij})$ is updated as
$$ s^{m+1}_{jj} = (\tilde S^{m} - U_S^m)_{jj}$$
and its off-diagonal entries are 
\[  s^{m+1}_{ij} = \left\{ \begin{array}{ll}
       (\tilde S^{m} - U_S^m)_{ij} - \gamma\lambda & \mbox{if\ }  (\tilde S^{m} - U_S^m)_{ij} > \gamma\lambda;\\
       (\tilde S^{m} - U_S^m)_{ij} + \gamma\lambda & \mbox{if\ }  (\tilde S^{m} - U_S^m)_{ij} < - \gamma\lambda;\\
        0 & \mbox{otherwise}.\end{array} \right. \]
Furthermore, solving \eqref{update1} is equivalent to solving $J$ $J$-dimensional unconstrained convex optimization problems. To see this, we denote $$M_j = (m_{1j}, ..., m_{Jj})^\top$$
as the $j$th column of a $J\times J$ matrix $M$. According to equation \eqref{eq:logis}, the conditional likelihood
$\mathcal L_j (L,S; \xx_i)$ defined by \eqref{eq:con_lik} can be written as a function of $M = L+S$ that only depends on $M_j$ and we denote it as $\mathcal L_j (M_j; \xx_i)$. As a result, evaluating \eqref{update1} can be decomposed into solving
$$\min_{M_j} -\frac{1}{N} \sum_{i=1}^N \mathcal L_j (M_j; \xx_i) + \frac{1}{2\lambda} \Vert M_j - (\tilde M^m - U_M^m)_j \Vert^2,$$
for $j = 1, 2,..., J$.
It can be solved efficiently using a standard solver, such as the Broyden-Fletcher-Goldfarb-Shanno method \citep[see e.g.][]{gentle2009computational}, where $J$ could be as large as a few hundreds.



Step 2. We solve $z^{m+1} = \text{\bf P}_{\lambda, g}(x^{m+1} + u^m)$.  Denote $\bar M = M^{m+1} + U_M^{m},\ \bar L  = L^{m+1} + U_L^{m}, \mbox{ and } \bar S  = S^{m+1} + U_S^{m}.$ Then evaluating $\text{\bf P}_{\lambda, g}(x^{m+1} + u^m)$ becomes:
\begin{align*}
  \min_{M, L, S}\ \  &  \frac{1}{2} \Vert M - \bar M \Vert^2_F + \frac{1}{2} \Vert L - \bar L \Vert ^2_F + \frac{1}{2} \Vert S - \bar S \Vert ^2_F,\\
\mbox{ s.t. }  &\mbox{$M$ is symmetric and $ M = L+S$}.
\end{align*}
This is a quadratic programming problem subject to linear constraints and thus can be solved in a closed form. Specifically,
\begin{equation*}
\begin{aligned}
\tilde M^{m+1} &= \frac{1}{3} \bar M + \frac{1}{3}  \bar M^\top  + \frac{1}{3} \bar L + \frac{1}{3} \bar S ,\\
\tilde L^{m+1} &= \frac{2}{3} \bar L + \frac{1}{6}  \bar M + \frac{1}{6}  \bar M^\top - \frac{1}{3}  \bar S,\\
\tilde S^{m+1} &= \frac{2}{3} \bar S + \frac{1}{6}  \bar M + \frac{1}{6}  \bar M^\top - \frac{1}{3} \bar L.
\end{aligned}
\end{equation*}


	Step 3 is a simple arithmetic. The advantage of the proposed algorithm is its low computational and memory cost   at each iteration.
In particular, the nondifferentiable $L_1$ and nuclear norms and the positive semidefinite constraint that induce difficulty in a generic solver are efficiently handled by closed-form updates.
In addition, the $J^2$-dimensional function $h_N(M)$
is decomposed to a sum of $J$  functions that can be optimized in parallel.

%

\section{Simulation Study and Real Data Analysis} \label{section: simulation}
In this section, we first conduct simulation studies to investigate the performance of the proposed methods.
Then we illustrate the method by analyzing a real data set of personality assessment.

\subsection{Simulation}

We consider $J = 30$ items and sample sizes $N = 250, 500, 1000, 2000,$ and $4000$ under the following three settings.
\begin{enumerate}
\item $K=1$ latent variable. For the $S$-matrix, all off-diagonal elements are zero except for  $s_{j,j+1} $  for $j = 1, 3, ..., 29$. There are in total 15 edges in the graph. This graph is equivalent to grouping the variables in pairs, \{1,2\}, \{3,4\}, ..., and \{29, 30\}. There is an edge between each pair.

\item $K=1$ latent variable. For $j= 1, 4, ..., 28$ , $s_{j,j+1}$, $s_{j,j+2}$, and $s_{j+1, j+2}$ are nonzero. There are 30 edges in the conditional graph. This is equivalent to grouping the variables in triples, \{1,2,3\}, \{4,5,6\}, ..., \{28,29,30\}. There are edges within the triple.

\item  $K=2$, and the conditional graph is the same as that of setting 1.
\end{enumerate}
The conditional graphs are visualized in Figure~\ref{fig:models}, where the upper and the lower panels represent the graphs $S$ in settings 1 and 2, respectively.
\begin{figure}[ht]
\centering
\includegraphics[scale = 0.3]{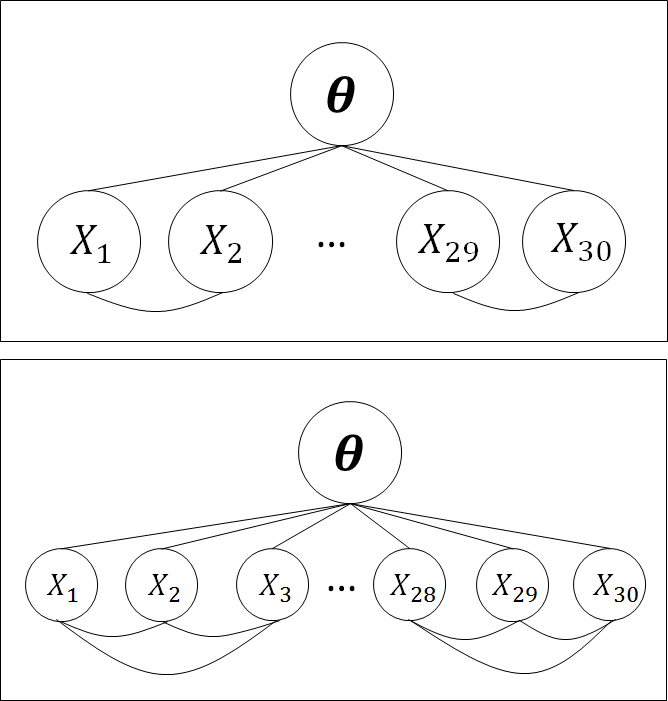}
\caption{The graphical representation corresponds to the two sparse network structures.}
\label{fig:models}
\end{figure}

For each model setting and each  sample size, we generate  50 independent data sets.
The tuning parameters are chosen based on  the  Bayesian information criterion as described in Section~\ref{sec:bic}.

\paragraph{Data generation.}

To generate a sample from the latent graphical model, we first generate $\ttt$ from its marginal distribution
$$f(\ttt) \propto \sum_{\xx\in \{0,1\}^J} \exp\left\{-\frac{1}{2}\Vert \ttt\Vert^2 + \xx^{\top} A\ttt +  \frac{1}{2} \xx^{\top} S \xx\right\}.$$
The above summation is computationally feasible because of the sparse graphical structure as in Figure \ref{fig:models}.
The latent vector $\ttt$ is sampled from the above marginal distribution by the accept/reject algorithm.
The conditional distribution of $\xx$ given $\ttt$ are independent between pairs and triples.

\paragraph{Evaluation criteria.}
To assess the performance of the dimension reduction and the estimation of the graph, we consider the criterion $C_1$. For a particular data set,
$C_1 = 1$ if and only if there exists a pair of $(\gamma, \delta)$, such that $\mathrm{rank}(\hat L^{\gamma, \delta})= \mathrm{rank}(L^*)$ and graph induced by  $\hat S^{\gamma, \delta}$ is the same as that by $S^*$, where $L^*$ and $S^*$ are the true parameters. 

Furthermore, we evaluate the BIC-based tuning parameter selection via
criteria $C_2$, $C_3$, and $C_4$. Let $(\hat L, \hat S)$ be the final estimates of the selected model defined as in \eqref{MLE_final}. Criterion $C_2$ evaluates the estimation of the rank of $L$,
$$C_2 = 1_{\{\text{rank}(\hat L) = \text{rank}(L^*)\}}.$$
In addition, $C_3$ evaluates the positive selection rate of the network structure of $S$, defined as
$$C_3 = \frac{\vert\{(i,j): i<j, \hat s_{ij}\neq 0, \mbox{ and } s^*_{ij}\neq 0\}\vert}{\vert\{(i,j): i<j, s^*_{ij}\neq 0\}\vert}.$$
Furthermore, $C_4$ evaluates the false discovery rate,
$$C_4 = \frac{\vert\{(i,j): i<j, \hat s_{ij}\neq 0, \mbox{ and } s^*_{ij} = 0\}\vert}{\vert\{(i,j): i<j, s^*_{ij} = 0\}\vert}.$$
If the tuning parameter is reasonably selected, we expect that $C_2 = 1$, $C_3$ is close to 1, and $C_4$ is close to 0. 


In Figure~\ref{fig:simulation},
the averages of $C_1$ over 50 independent data sets versus the sample sizes are presented under all settings. Based on Figure~\ref{fig:simulation}, we observe that, as the sample size becomes larger,
the probability that the path of regularized estimator captures the true model increases
and is close to 1 when the sample size is over 1000.
The graphical structure is difficult to  capture   when the sample size is small.

The results of model selection based on BIC are presented in Table~\ref{table:simulation}, where the mean of $C_2$ and the
means and standard errors of $C_3$, and $C_4$ over 50 replications are presented.
According to these results, the BIC tends to choose a model that is close to the true one.
In particular, according to $C_2$,  the number of latent factors (i.e. the rank of $L^*$) can be  recovered with high probability with a reasonable sample size.
Specifically, the numbers of factors are recovered without error for all situations except when $N = 250$ for Model 3.
For this case, BIC  selects a single-factor model, which is mainly due to the small sample size.
In addition, the edges in the conditional graph are recovered with high probability according to $C_3$.
Based on $C_4$, a small number of false discoveries are observed.
In summary, the method performs well for simulated data.

\begin{figure}[t]
\centering
\includegraphics[scale = 0.5]{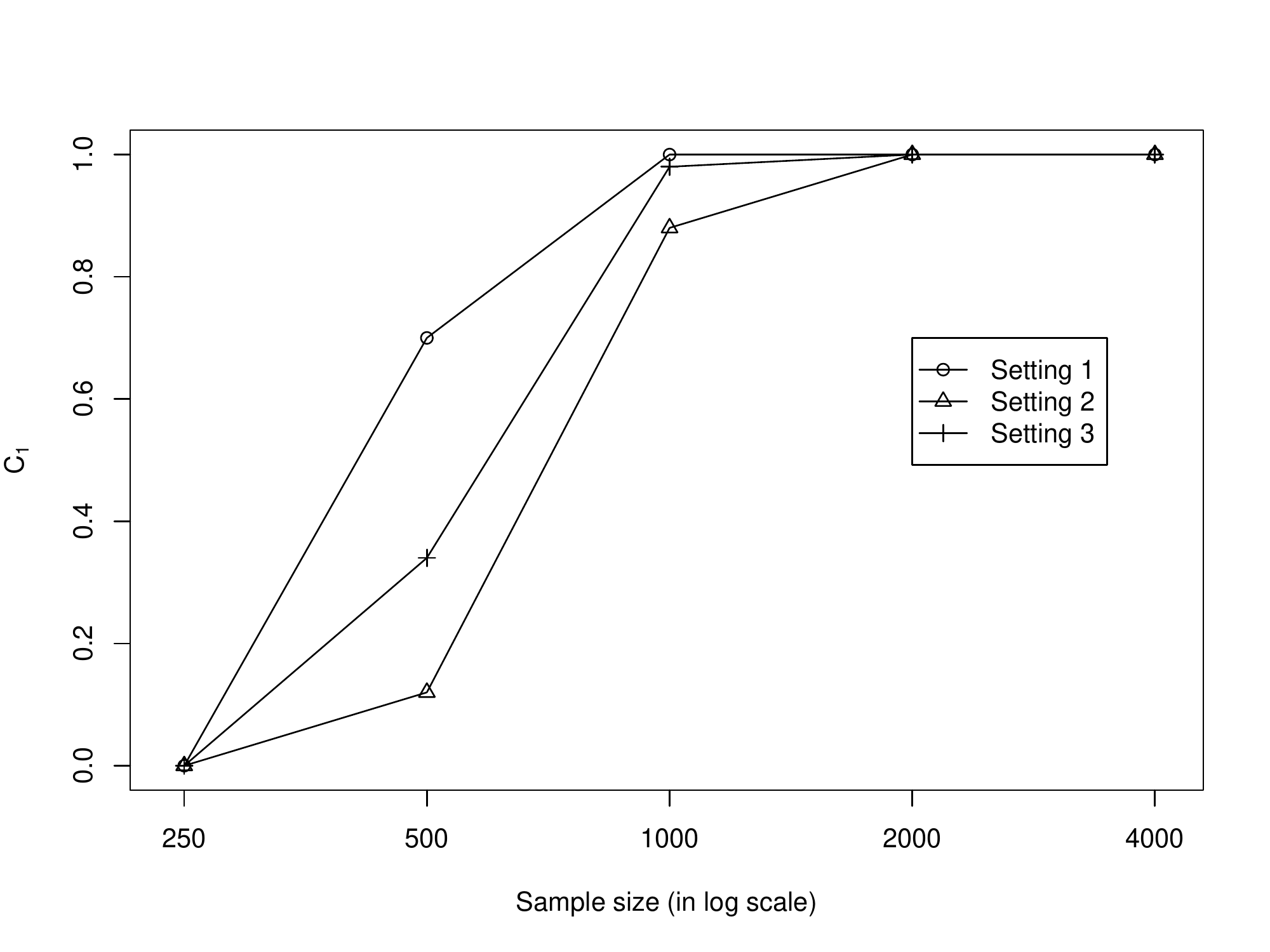}
\caption{The mean of $C_1$ over 50 replications for three settings against sample size.}
\label{fig:simulation}
\end{figure}

\begin{table}[t]
\centering
\begin{tabular}{lccccc}
\hline
$C_2$ &$N = 250$ &$N=500$ &$N=1000$& $N=2000$& $N = 4000$\\
\hline
Setting 1 &100.0 &100.0&100.0 &100.0&100.0\\
Setting 2 &100.0 &100.0 &100.0&100.0&100.0\\
Setting 3 &78.0 &100.0& 100.0 &100.0&100.0\\
\hline
\hline
$C_3$ &$N = 250$ & $N=500$ & $N=1000$& $N=2000$& $N = 4000$\\
\hline
Setting 1 &98.3(3.8) &100.0(0.0) &100.0(0.0)& 100.0(0.0)& 100.0(0.0)\\
Setting 2 &92.7(5.1) & 98.9(2.2) &100.0(0.0)&100.0(0.0)&100.0(0.0)\\
Setting 3 &94.3(6.9)&99.6(1.5)&  100(0.0)  & 100.0(0.0)& 100.0(0.0)\\
\hline
\hline
$C_4$&$N = 250$ & $N=500$ & $N=1000$& $N=2000$& $N = 4000$\\
\hline
Setting 1 &8.4(2.5)&6.7(1.6) & 5.0(1.5) &4.3(1.5) &2.8(1.1) \\
Setting 2 &8.6(2.7) & 6.2(2.4) & 0.1(0.4)& 0.0(0.1) & 0.0(0.1)\\ 
Setting 3 &6.9(2.4) &5.5(1.4) & 2.1(0.7)  &0.3(0.3)&   0.0(0.0)\\ 
\hline
\end{tabular}
\caption{The mean and standard error in percentage ($\%$) of $C_2$, $C_3$, and $C_4$.}
\label{table:simulation}
\end{table}

\subsection{Real Data Analysis}

We analyze  Eysenck's Personality Questionnaire-Revised (EPQ-R: \citealp{eysenck1985revised,eysenck2013re}).
The data set contains the responses to 79 items from
824 female respondents in the United Kingdom.
This is initially a confirmatory analysis containing three factors: Psychoticism (P), Extraversion (E), and Neuroticism (N).
Among these 79 items, 32, 23, and 24 items are designed to measure the P, E, and N factors, respectively.
The specific questions can be found in the appendix of \cite{eysenck1985revised}.
A typical item is ``Are you rather lively?". The responses are binary. The data have been preprocessed so that the negatively worded items are reversely scored (see Table~4 of \citealp{eysenck1985revised} for the scoring key).
We conduct analysis on the model goodness of fit, latent structure,  conditional graphical structure, and  their interpretations.

\paragraph{Choosing the tuning paramters.}
We optimize the tuning parameter in the range $\gamma \in (0, 0.02]$ and $\rho = \frac{\delta}{\gamma} \in (10, 20]$ on a regular lattice of size $20$ in each dimension, so that there are 400 fitted models along the solution path.
A summary of the solution path is as follows.
Among all 400 fitted models, about $11\%$ of the models have four or more factors, $57\%$ of them are three-factor models, and the rest $32\%$  have two or fewer factors.
In addition, we define the graph sparsity  level (GSP) of an estimated model as the estimated number of edges normalized by the total number of possible edges. A histogram of the GSP for all models on the path is presented in Figure~\ref{fig:GSP}. Furthermore, Figure~\ref{fig:FS} presents a box plot showing the number of factors-GSP relationship for these models, where the y axis represents the sparsity level. In Table~\ref{table:real_fit}, we list the model fitting information of the ten models that have smallest BIC values. As we can see, all ten models have three factors and have sparsity level at about $10\%$.
\begin{figure}[t]
\centering
\includegraphics[scale = 0.8]{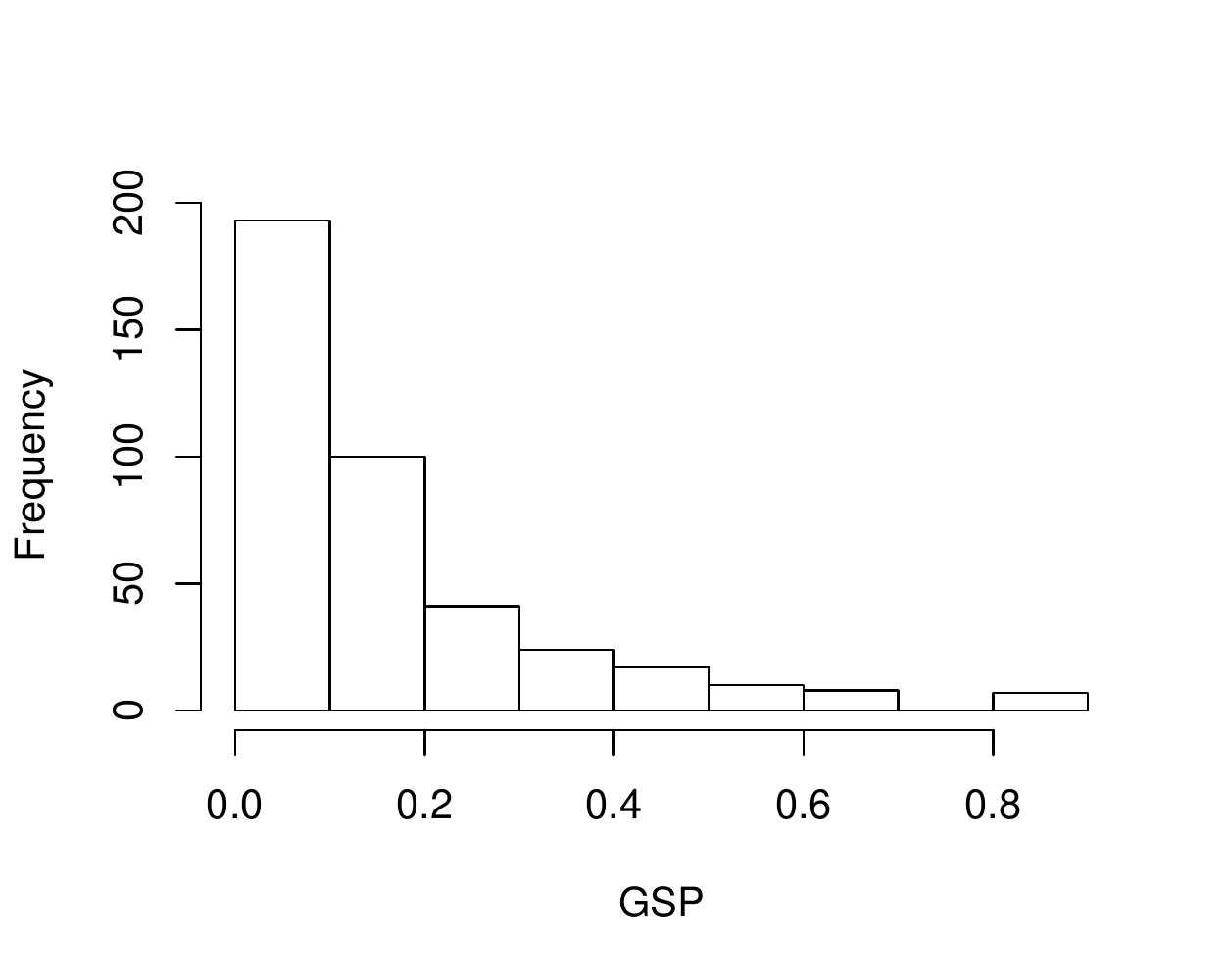}
\caption{The histogram of graph sparsity levels for all 400 models.}
\label{fig:GSP}
\end{figure}

\begin{figure}[t]
\centering
\includegraphics[scale = 0.5]{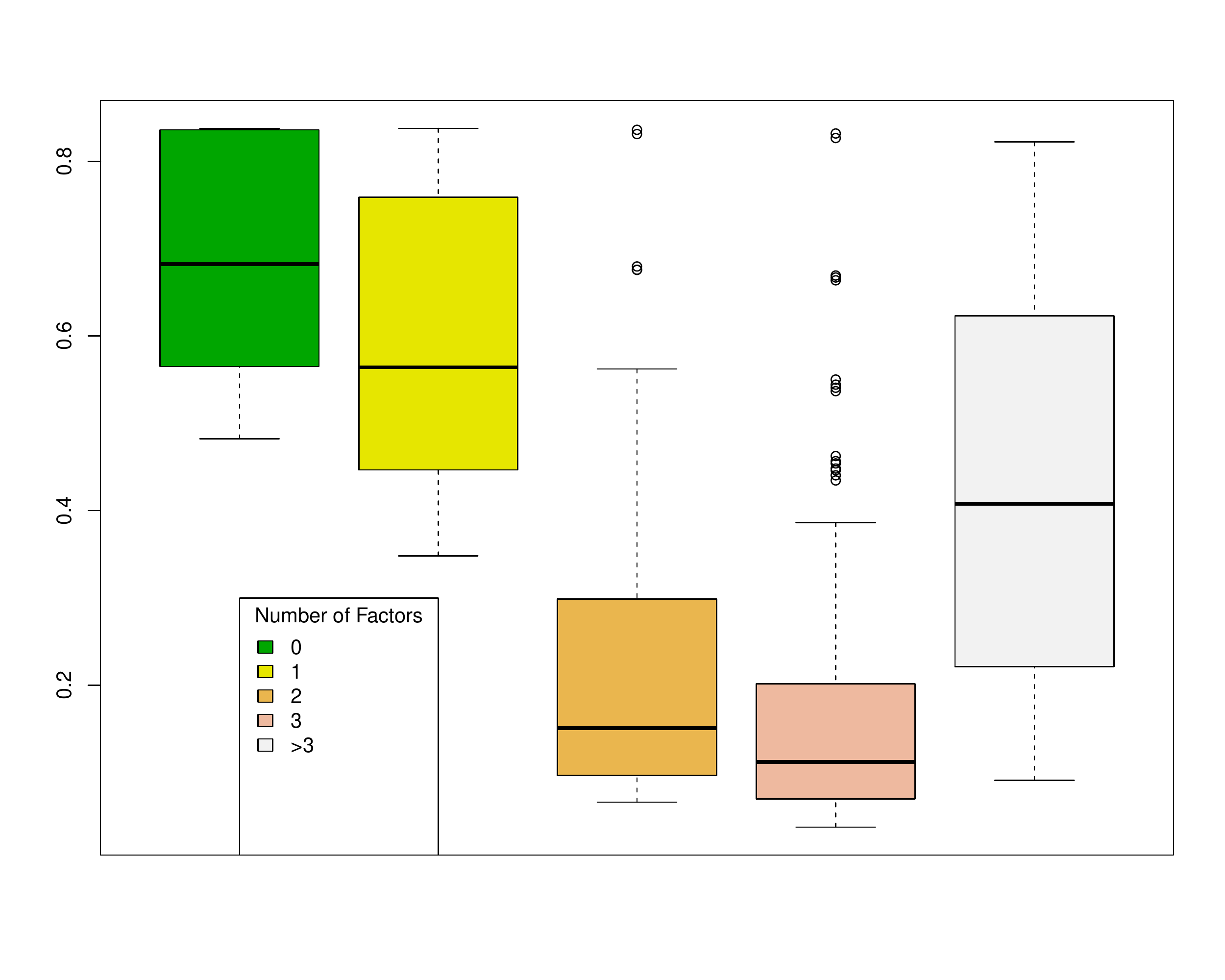}
\caption{The number of factors VS the graph sparsity level for all models.}
\label{fig:FS}
\end{figure}

\begin{table}[!ht]
\centering
\begin{tabular}{lrrrrr}
\hline
&Pseudo-lik & BIC & K& Num-edge & GSP\\
\hline
1&	-26177.6&	56779.8&	3&	346&	11\%\\
2&	-26207.9&	56800.2&	3&	340&	11\%\\
3&	-26338.4&	56806.0&	3&	302&	10\%\\
4&	-26157.1&	56812.6&	3&	357&	12\%\\
5&	-26133.0&	56831.6&	3&	367&	12\%\\
6&	-26319.5&	56835.3&	3&	312&	10\%\\
7&	-26017.4&	56855.6&	3&	405&	13\%\\
8&	-26297.9&	56859.3&	3&	322&	10\%\\
9&	-26499.1&	56872.2&	3&	264&	9\%\\
10&	-26264.1&	56872.4&	3&	334&	11\%\\
\hline
\end{tabular}
\caption{The logarithm of pseudo-likelihood, BIC, number of latent variables, number of edges, and graph sparsity levels for the top ten models.}
\label{table:real_fit}
\end{table}

\paragraph{Goodness of fit.}
Here, we investigate the model of the smallest BIC corresponding to the first model  in Table \ref{table:real_fit}.
We evaluate the goodness of fit via parametric bootstrap.
We denote $(\hat L^{sel}, \hat S^{sel})$ the maximal pseudo-likelihood estimates of the selected model.
1000 independent bootstrap data sets, each of which contains 824 samples, are generated from the latent undirected graphical model with parameters $(\hat L^{sel}, \hat S^{sel})$ via the Gibbs sampler. 
For the $b$th bootstrap data set denoted by $(\mathbf x_1^b, ..., \mathbf x_N^b)$, we compute the logarithm of unnormalized joint likelihood under the parameters $(\hat L^{sel}, \hat S^{sel})$ as
$$l_b^{sel} = \frac{1}{2} \sum _{i=1}^N (\mathbf x_i^b)^\top (\hat L^{sel} + \hat S^{sel})\mathbf x_i^b.$$
The empirical distribution of $(l_1^{sel}, ..., l_{1000}^{sel})$ is then compared with the observed one
$$l^{sel} = \frac{1}{2} \sum _{i=1}^N \mathbf x_i^\top (\hat L^{sel} + \hat S^{sel})\mathbf x_i,$$
where $(\mathbf x_1, ..., \mathbf x_N)$ is the observed responses. The normalizing constants of the joint likelihood are omitted here, because they are the same for all data sets and therefore do not play a role when comparing the observed likelihood with the corresponding bootstrap distribution.
The histogram of $(l_1^{sel}, ..., l_{1000}^{sel})$ is shown in the left panel of Figure~\ref{fig:fitting} and the observed log-likelihood $l^{sel} = -13502.4$ is marked by the red arrow with a $p$-value $=33.7\%$ suggesting that the model fits the data reasonably well.

For comparison purpose, we fit a three-dimensional IRT model in \eqref{IRT} and \eqref{local}.
This corresponds to setting the graph $S=0$ as follows
\begin{equation}\label{IRT_fit}
\begin{aligned}
(\hat L^{IRT}, \hat S^{IRT})~~ =  ~~&  \arg\max_{(L,S)}\{\mathcal L(L,S) \}\\
s.t. ~~& \text{rank}(L) = 3 \mbox{ and $L$ is positive semidefinite,} \\
& \forall i\neq j,   s_{ij} = 0.
\end{aligned}
\end{equation}
We check the goodness of fit of the three-dimensional IRT model via the same parametric bootstrap procedure, based on 1000 bootstrap samples.
The observed log-likelihood (unnormalized) is 39.4 and the corresponding bootstrap distribution is shown in the right penal of Figure~\ref{fig:fitting} with  a $p$-value $=1.7\%$ suggesting  that the three-dimensional IRT model does not fit the data well.
This comparison shows that the model fitting is substantially improved by including the additional conditional graph while maintaining a low-dimensional latent structure.

\begin{figure}[t]
\centering
\includegraphics[scale = 0.4]{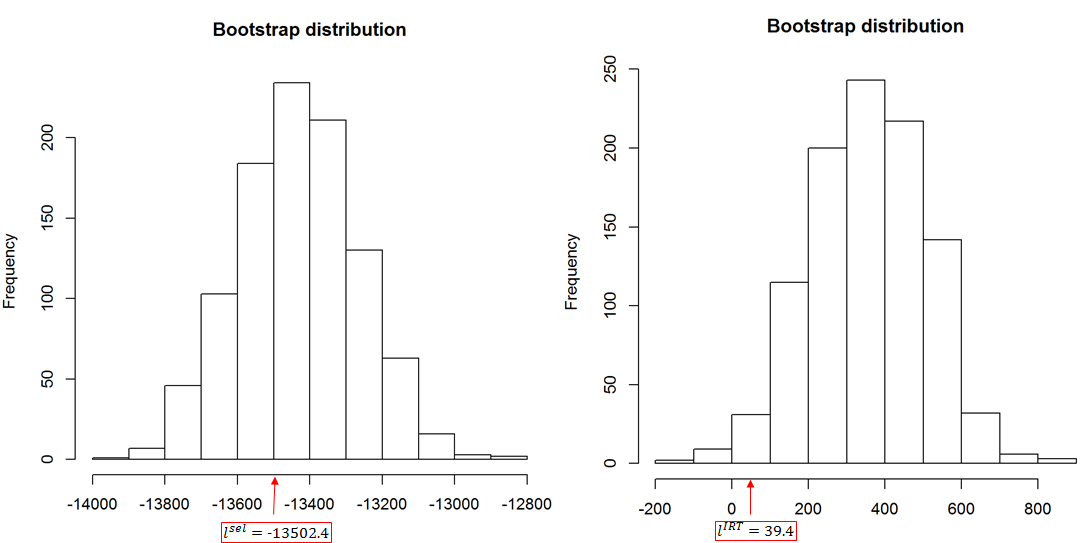}
\caption{Parametric bootstrap for checking the fit of the selected model (left) and the three-dimensional IRT model (right).}
\label{fig:fitting}
\end{figure}

\paragraph{Latent structure.}

The loading matrix $A$ is identified up to a non-degenerate rotation. Various methods are proposed to identify a particular  rotation resulting a most interpretable loading matrix \citep{thurstone1947multiple,cattell2012scientific,browne2001overview}.
Here, we adopt the varimax rotation \citep{kaiser1958varimax}, which is one of the most popular rotational methods for exploratory factor analysis.
We then check the relationship between the latent variables identified by this particular rotation and the three scales of EPQ-R. 
Based on model \eqref{JointLike}, the posterior mean of $\ttt_i$ is $E(\ttt_i \vert \XX_i = \xx_i) = \xx_i^\top A$. We replace $A$ by its estimate $\hat A$, and use $\hat \ttt_i = \hat A^\top \xx_i$ as an estimate of $\ttt_i$.
In addition, we let $T_{i}^P$, $T_i^E$, and $T_i^N$ be respondent $i$'s total scores on the P, E, and N scales respectively. In Table~\ref{tab:correlation}, the sample correlation between $(\hat \theta_{i1}, \hat \theta_{i2}, \hat \theta_{i3})$s and $(T_{i}^P, T_{i}^E, T_{i}^N)$s are calculated, where the diagonal entries being close to 1 implies that the three latent factors identified by the varimax rotation may be interpreted as Psychoticism, Extraversion, and Neuroticism, respectively.

\begin{table}[ht]
\centering
\footnotesize
\begin{tabular}{lrrr}
\hline
&P & E & N \\
\hline
$\hat \theta_1$& $\mathbf{0.92}$&  0.29 &-0.02\\
$\hat \theta_2$& 0.06&  $\mathbf{0.87}$ &-0.24\\
$\hat \theta_3$& 0.11& -0.23 & $\mathbf{0.85}$\\
\hline
\end{tabular}
\caption{The sample correlation between  $(\hat \theta_{i1}, \hat \theta_{i2}, \hat \theta_{i3})$s and $(T_{i}^P, T_{i}^E, T_{i}^N)$s. }
\label{tab:correlation}
\end{table}

%

\paragraph{Conditional graph.} For  the conditional graphical structure, the selected model has 346 edges ($\text{GSP} = 11\%$). This graph captures the association among the items that is not attributable to the latent factors.  Among the 346 edges, 91 are negative edges and 255 are positive. We investigate the positive ones. In Table~\ref{tab:item_pair}, we present the 15 item pairs that have the most positive edges.  These items share a common stimulus that is not completely attributable to the P, E, and N factors, resulting in additional dependence. For example, the first three pairs are about ``party", ``good manners",  and ``being lively", respectively. For some item pairs, the two items are essentially identical questions with different wording, such as pair 4 ``Do you stop to think things over before doing anything?" and ``Do you generally `look before you leap'?"
In addition, an item itself may be the stimulus to the other. For example, for item pair 8, it is probably that a woman would like other people to be afraid of her, because her mother is (was) not a good woman.


In addition to pairwise structures,  we also check the cliques in the estimated graph.
A clique is a subset of vertices such that every two distinct vertices in the clique are connected. A maximal clique is a clique that cannot be extended by including one more adjacent vertex. For graphical models, random variables within a clique are usually considered to be highly dependent on each other. The estimated graph has 161 maximal cliques that have at least three vertexes, including one 5-vertex clique, 32 4-vertex cliques, and 128 3-vertex cliques.
In Table~\ref{tab:item_clique}, we present the 5-vertex clique,  two 4-vertex cliques, and two 3-vertex cliques. These 4-vertex and 3-vertex cliques are the ones with the highest within-clique sum of $\hat s_{ij}$.
We also observe that the maximal cliques  identify meaningful item clusters. For example, the five cliques in Table~\ref{tab:item_clique} are about ``communication with others", ``thinking before action", ``being nervous", ``good manners", and ``meeting people", respectively.

In summary, the proposed FLaG model fits the EPQ-R data well, while a three-factor multidimensional IRT model (with local independence) has substantial lack of fit. Thus, the FLaG model improves model fitting and maintains a low-dimensional latent structure.
In addition, the estimated latent structure is very interpretable and coincides the three factors proposed in the initial confirmatory factor analysis.
Lastly, we also investigate the conditional graph that yields meaningful clusters of item in addition to the dependence induced by the latent factors.

%

\begin{table}[ht]
\centering
\footnotesize
\begin{tabular}{llllll}
\hline
& $\hat s_{ij}$& Item & Scale  & Item content\\
\hline
1&3.31&51&E&Can you easily get some life into a rather dull party?\\
&&78&E&Can you get a patty going? \\
2&2.43&21&P&(R)Are good manners very important?\\
&&41&P&(R)Do good manners and cleanliness matter much to you? \\
3&2.32&11&E&Are you rather lively? \\
&&94&E&Do other people think of you as being very lively? \\
4&2.19&2&P&(R)Do you stop to think things over before doing anything? \\
&&81&P&(R)Do you generally `look before you leap'?\\
5&2.73&22&N&Are your feelings easily hurt?\\
&&87&N&Are you easily hurt when people find fault with you ot the work you do?\\
6&1.97&35&N&Would you call yourself a nervous person? \\
&&83&N&Do you suffer from `nerves'?\\
7&1.83&6&E&Are you a talkative person?\\
&&47&E&(R)Are you mostly quiet when you are with other people?\\
8&1.81&91&P&Would you like other people to be afraid of you?\\
&&68&P&(R)Is (or was) your mother a good woman? \\
9&1.70&34&P&Do you have enemies who want to harm you?\\
&&73&P&Are there several people who keep trying to avoid you? \\
10&1.69&24&E&(R)Do you tend to keep in the background on social occasions? \\
&&47&E&(R)Are you mostly quiet when you are with other people?\\
11&1.67&13&N&Do you often worry about things you should not have done or said? \\
&&31&N&Are you often troubled about feelings of guilt?\\
12&1.67&13&N&Do you often worry about things you should not have done or said? \\
&&80&N&Do you worry too long after an embarrassing experience?\\
13	&	1.67	&	95	&	P	&	Do people tell you a lot of lies?	\\
	&		&	85	&	P	&	Can you on the whole trust people to tell the truth? 	\\
14	&	1.61	&	20&	E	&	Do you enjoy meeting new people?	\\
	&		&	58&	E	&	Do you like mixing with people?	\\
15	&	1.50	&	58	&	E	&	Do you like mixing with people?	\\
	&		&	33	&	E	&	Do you prefer reading to meeting people? 	\\
\hline
\end{tabular}
\caption{The top 15 item pairs corresponding to the most positive edges. The item ID is consistent with \cite{eysenck1985revised} and the reversely scored items are marked by ``(R)". }
\label{tab:item_pair}
\end{table}

%

\begin{table}[ht]
\centering
\footnotesize
\begin{tabular}{llllll}
\hline
& Item & Scale  & Item content\\
\hline
1&6&E&Are you a talkative person?\\
&94&E&Do other people think of you as being very lively? \\
&47&E&(R)Are you mostly quiet when you are with other people?\\
&24&E&(R)Do you tend to keep in the background on social occasions? \\
&63&E&Do you nearly always have a `ready answer' when people talk to you? \\
\hline
2&81&P&(R)Do you generally `look before you leap'?\\
&2&P&(R)Do you stop to think things over before doing anything? \\
&69&E&Do you often make decisions on the spur of the moment?\\
&61&E&Have people said that you sometimes act too rashly?\\
\hline
3&35&N&Would you call yourself a nervous person? \\
&38&N&Are you a worrier?\\
&46&N&Would you call yourself tense or ‘highly-strung’?\\
&83&N&Do you suffer from ‘nerves’?\\
\hline
4&21&P&(R)Are good manners very important?\\
&14&P&(R)Do you dislike people who don’t know how to behave themselves?\\
&41&P&(R)Do good manners and cleanliness matter much to you? \\
\hline
5&20&E&Do you enjoy meeting new people?\\
&33&E&(R)Do you prefer reading to meeting people? \\
&58&E&Do you like mixing with people?\\
\hline
\end{tabular}
\caption{Examples of maximal cliques of the estimated graph. The item ID is consistent with \cite{eysenck1985revised} and the reversely scored items are marked by ``(R)".  }
\label{tab:item_clique}
\end{table}

\section{Conclusion and Discussion}
The main contribution of this paper is three-fold.
First, we propose a fused latent and graphical (FLaG) model by combining  a multidimensional item response model and the Ising model. Then, we consider the regularized pseudo-likelihood by means of the $L_1$ and nuclear norm penalties. Lastly, the computation of the regularized estimator is facilitated by our developing an algorithm based on the alternating direction method of multiplier
 to optimize a non-smooth and convex objective function.

We propose to use the BIC for the tuning parameter selection, which  performs well empirically.
	The proposed method is applied to a real data set based on the revised Eysenck's Personality Questionnaire that consists of items designed to measure Psychoticism, Extraversion, and Neuroticism. The estimated model receives good interpretation.
In particular, the estimated three latent variables correspond to the well known Psychoticism, Extraversion, and Neuroticism personality factors, respectively.
In addition, there are a significant number of edges in the conditional graphical model, which indicates the inadequacy of a traditional three-factor IRT model assuming local independence. 		
This is also confirmed by a quantitative model diagnosis via the parametric bootstrap.
Finally, the conditional graph provides us a better understanding of the items, which may be utilized to improve the questionnaire design.

	



\bibliographystyle{apa}
\bibliography{SL,CDM}

\newpage

\appendix
\section{Proof of Theorem~\ref{thm:consistent}}
	Throughout the proof, we will use $\kappa$ as generic notation for large and not-so-important constants whose value may vary from place to place. Similarly, we use $\varepsilon$ as generic notation for small positive constants.
	Furthermore, for two sequences of random variables $a_N$ and $b_N$, we write $a_N=o_P(b_N)$ if $b_N/a_N{\to} 0$ in probability and $a_N=O_P(b_N)$ if $a_N/b_N$ is tight. We also use the notation ``$=_P$'', ``$<_P$", ``$\leq_P$", ``$>_P$" and ``$\geq_P$" to indicate the equality and inequalities hold with a probability converging to one as $N$ goes to infinity.
\paragraph{Proof Strategy.}
To assist the readers, we first provide a  sketch of the proof for the theorem. We introduce several notation and definitions.
Let the  eigendecomposition of $L^*$ be $L^*= U^* D^* {U^*}^{\top}$, such that $U^*$ is a $J\times J$ orthogonal matrix and $D^*$  is a $J\times J$ diagonal matrix whose first $K$ diagonal elements are strictly positive.
 We write $U^*=[U_1^*,U_2^*]$ where $U_1^*$ is the first $K$ columns of $U^*$. Let $D_1^*$ be the $K\times K$ diagonal matrix  containing the nonzero diagonal elements of $D^*$. Define the localization set
\begin{eqnarray*}
\mathcal{M}_1 = \{(S,L):
& S=S^*+\delta_S,& \|\delta_S\|_{\infty}\leq \cs, \delta_S \mbox{ is symmetric,} \\
 ~~~~~~~~&L= UDU^{\top},& \|U-U^*\|_{\infty}\leq \cu, ~U \mbox{ is a $J\times J$ orthogonal matrix},\\
 &&\|D-D^*\|_{\infty}\leq \cd, \mbox{ and }D \mbox{ is a $J\times J$ diagonal matrix}\},
 \end{eqnarray*}
and a subset
\begin{equation*}
\begin{aligned}
\mathcal{M}_2 = \{(S,L):
&~~S=S^*+{\delta_{S}}, ~~ \|\delta_S\|_{\infty}\leq \cs, \delta_S\in \MS
\\
 &~~L= U_1D_1U_1^{\top}, ~ \|U_1-U_1^*\|_{\infty}\leq \cu, U_1 \mbox{ is a $J\times K$  matrix, } U_1^{\top}U_1=I_K,\\
 &  ~~~~~~~~~~~~~~~~~~~~~~ \|D_1-D_1^*\|_{\infty}\leq \cd, \mbox{ and }D_1 \mbox{ is a $K\times K$ diagonal matrix}
 \}.
 \end{aligned}
 \end{equation*}
Here, $\eta$ is a positive constant that is sufficiently small.
For each pair of $(S,L)\in \mathcal{M}_1$, $S$ is close to $S^*$. Moreover, the eigendecomposition of $L$ and $L^*$  are close to each other. As the sample size $N$ grows large, the set $\mathcal{M}_1$ will tend to $\{(S^*,L^*)\}$. The set $\mathcal{M}_2$ is a subset of $\mathcal{M}_1$.
For each pair of $(S,L)\in\mathcal{M}_2$, $S$ has the same sparsity pattern as $S^*$, and $L$ has the same rank as $L^*$ for sufficiently large $N$.

The proof consists of two steps.
\begin{enumerate}
\item We first prove that with a probability converging to $1$ the optimization problem \eqref{est} restricted to the subset $\mathcal{M}_2$ has a unique solution, which does not lie on the manifold boundary of $\mathcal{M}_2$. This part of proof is presented in Section~\ref{sec:proof-step1}.
\item We then show the unique solution restricted to $\mathcal{M}_2$ is also a solution to  \eqref{est} on $\mathcal{M}_1$. It is further shown that with probability converging to $1$ this solution is the unique solution to \eqref{est} restricted to $\mathcal{M}_1$.
 This part of proof is presented in Section~\ref{sec:proof-step2}.
\end{enumerate}
The previous two steps together imply that the convex optimization problem \eqref{est} with the constraint $(S,L)\in\mathcal{M}_1$ has a unique solution  that belongs to  $\mathcal{M}_2$. Furthermore, this solution is an interior point of  $\mathcal{M}_1$. Thanks to the convexity of the objective function, $(\hat S,\hat L)$ is also the unique solution to the optimization problem \eqref{est}. We conclude the proof by noticing that all $(S,L)\in \mathcal{M}_2$ converge to the true parameter $(S^*,L^*)$ as $N\to\infty$ with the same sparsity and low rank structure.
\subsection{Proof step 1}\label{sec:proof-step1}
Denote by $(\hat{S}_{\mathcal{M}_2},\hat{L}_{\mathcal{M}_2})$ a solution to the optimization problem
\begin{equation}
\begin{aligned}
&\min \big\{h_N(S+L)+ \gamma_N\|\OO(S)\|_1+\delta_N\|L\|_*\big\} \label{eq:optim-m2}\\
&\mbox{ subject to $L$ is positive semidefinite and $S$ is symmetric and }{(L,S)\in \mathcal{M}_2}.
\end{aligned}
\end{equation}
Recall here that the function $h_N$ is defined as in \eqref{eq:def:h_fun}.
We write the eigendecomposition $\hat{L}_{\MM_2}=\UM\DM\UM^{\top}$, where $\UM^{\top}\UM=I_K$ and $\DM$ is a $K\times K$ diagonal matrix.
To establish that $(\SM,\LM)$ does not lie on the manifold boundary of $\mathcal{M}_2$, it is sufficient to show that
\begin{eqnarray}
\|\hat{S}_{\mathcal{M}_2}-S^*\|_{\infty}&<_P&\cs\label{eq:s-bound}\\
\|\DM-D_1^*\|_{\infty}&<_P&\cd.\label{eq:d-bound}\\
\|\UM-U^*_1\|_{\infty}&<_P&\cu.\label{eq:u-bound}
\end{eqnarray}
To start with, we present a useful lemma.
\begin{lemma}\label{lemma:d}
	Let $$\hat{\MD}=\{\UM D'_1 \UM^\top: D'_1 \mbox{ is a $K\times K$ diagonal matrix}\}.$$
	Consider the convex optimization problem
	\begin{equation}
	\min_{S\in \MS, L\in \hat{\MD}} \big\{h_N(S+L)+ \gamma_N\|\OO(S)\|_1+\delta_N\|L\|_*\big\}. \label{eq:optim-d}
	\end{equation}
	Then  \eqref{eq:optim-d} has a unique solution with probability converging to $1$.
	Denote the solution by $(\hat{S}_{\hat{\MD}},\LD)$ and $
	\hat{L}_{\hat{\MD}}= \UM \DDD \UM^{\top}.
	$
	In addition, there exists a constant $\kappa>0$ such that
	$
	\|\hat{S}_{\hat{\MD}}-S^*\|_{\infty}\leq_P \kappa\cu
	$
	and
	$
	\|\DDD -D_1^*\|_{\infty}\leq_P \kappa \cu.
	$
\end{lemma}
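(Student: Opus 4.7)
The plan is to reduce the problem to a strictly convex optimization on a finite-dimensional Euclidean space and then apply a standard first-order perturbation argument. I parameterize $\MS\times\hat{\MD}$ by $(\delta_S, d')$, where $\delta_S\in\MS$ ranges over symmetric matrices supported on the sparsity pattern of $S^*$ and $d'=(d'_1,\dots,d'_K)$ collects the diagonal entries of $D'_1$. The objective becomes
\[
F_N(\delta_S, d') = h_N\Big(S^* + \delta_S + \UM\,\mathrm{diag}(d')\,\UM^\top\Big) + \gamma_N\|\OO(S^* + \delta_S)\|_1 + \delta_N\sum_{k=1}^K |d'_k|,
\]
where the nuclear-norm term simplifies because $\UM$ has orthonormal columns. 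Each term is convex ($h_N$ because each conditional logistic log-likelihood is concave, the penalties because they are norms), and coercivity of the penalties guarantees a minimizer exists. Thus the substantive tasks are uniqueness and the entrywise rate bounds.

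For uniqueness, I would show $F_N$ is strictly convex with probability tending to $1$. Assumption~A1 gives $\II\succ 0$ on $\MM$, and Assumption~A3 together with continuity of the eigendecomposition implies $\MS\cap\hat{\MD}=\{\mathbf 0_{J\times J}\}$ once $\UM$ is close enough to $U_1^*$ (the latter holding by definition of $\MM_2$, since $\|\UM - U_1^*\|_\infty\leq \cu\to 0$). Hence the linear map $(\delta_S, L)\mapsto \delta_S+L$ is injective on $\MS\times\hat{\MD}$. A uniform law of large numbers on a compact neighborhood of $M^*$ shows $\nabla^2 h_N(M)$ concentrates around $\E\nabla^2 h_1(M)$, which is continuous in $M$ and equal to $\II$ at $M^*$; so $\nabla^2 h_N\succeq \tfrac12 \II$ on that neighborhood with high probability. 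Combined with injectivity of the parameterization, this yields strict convexity of $F_N$ on a neighborhood of the minimizer and hence uniqueness of $(\SD,\DDD)$.

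For the rate bounds, I would set up a first-order perturbation centered at $\theta^*=(\delta_S, d')=(\mathbf 0, \mathrm{diag}(D_1^*))$, which corresponds to $(S^*, \UM D_1^*\UM^\top)$. Since $\E\nabla h_1(S^* + L^*) = 0$, the central limit theorem gives $\nabla h_N(S^* + L^*) = O_P(N^{-1/2})$, and replacing $L^* = U_1^*D_1^*U_1^{*\top}$ by $\UM D_1^*\UM^\top$ introduces a Lipschitz perturbation of order $\|\UM - U_1^*\|_\infty = O(\cu)$, so the smooth part of $\nabla F_N(\theta^*)$ has size $O_P(\cu)$. On a neighborhood of $\theta^*$ the $\ell_1$ norm is smooth (the off-diagonal entries supported by $S^*$ retain their sign), so the subgradient of the penalty piece has norm bounded by $\max(\gamma_N, \delta_N)=O(\gamma_N)$. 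Since $\gamma_N = \cu\cdot\gamma_N^{\eta}=o(\cu)$, the penalty contribution is negligible. Combining the optimality inequality for $\hat\theta=(\SD,\DDD)$ with the lower curvature bound from the previous paragraph gives
\[
\tfrac{c}{2}\|\hat\theta - \theta^*\|^2 \leq_P \langle \nabla F_N(\theta^*),\, \theta^* - \hat\theta\rangle = O_P(\cu)\cdot \|\hat\theta - \theta^*\|,
\]
whence $\|\hat\theta-\theta^*\|=O_P(\cu)$ in Euclidean norm. For matrices of fixed dimension $J$, the entrywise $\infty$-norm is controlled by the Euclidean norm up to a constant, which gives the stated $\|\SD - S^*\|_\infty\leq_P \kappa\cu$ and $\|\DDD - D_1^*\|_\infty\leq_P \kappa\cu$.

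The main obstacle I expect is the uniform lower bound on the Hessian of $h_N$: Assumption~A1 is a pointwise statement at $M^*$, so I will need a local uniform strengthening obtained from continuity of $\E\nabla^2 h_1(\cdot)$ together with a uniform LLN on a small neighborhood (whose radius may shrink with $N$). A secondary technicality is the non-smoothness of $\|\OO(\cdot)\|_1$ at coordinates where an off-diagonal entry of $S$ vanishes; this is harmless in a neighborhood of $\theta^*$ because $s^*_{ij}\neq 0$ on the active support, but has to be tracked carefully when writing down the subgradient optimality condition.
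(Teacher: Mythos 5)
Your proof is correct in its essentials but takes a genuinely different route from the paper's. The paper writes the first-order optimality condition for \eqref{eq:optim-d}, substitutes the Taylor expansion of $\nabla h_N$, transforms the resulting equations into a fixed-point problem $(\Delta_S,\Delta_L)=\CC(\Delta_S,\Delta_L)$ over a shrinking ball $\mathcal{B}^*$, and shows $\CC$ is a contraction with Lipschitz constant $O_P(\cu)$, so existence, uniqueness and the rate $O_P(\cu)$ all fall out of the Banach fixed-point theorem at once. You instead argue local strong convexity of $F_N$ (via $\II\succ 0$ restricted to $\MM$, the triviality of $\MS\cap\hat\MD$ inherited by continuity from A3, and a uniform LLN for the Hessian) and then derive the rate from the basic inequality $\tfrac{c}{2}\|\hat\theta-\theta^*\|^2\leq\langle\nabla F_N(\theta^*),\theta^*-\hat\theta\rangle$. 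Both approaches rest on the same ingredients — smoothness of $\|\OO(\cdot)\|_1$ and $\|\cdot\|_*$ near the truth, $\nabla h_N(M^*)=O_P(N^{-1/2})$, and a locally bounded-below Hessian — but the paper's fixed-point expansion yields as a byproduct the explicit expression for the solution that later lemmas re-use, while your convex-analysis route gets the rate with less bookkeeping but no explicit expansion.

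There is one genuine, though fixable, gap: you apply the local strong-convexity lower bound (which you establish only on a fixed neighborhood of $\theta^*$) directly to the global minimizer $\hat\theta$ in the basic-inequality step, without first arguing that $\hat\theta$ actually lies in that neighborhood. As stated, the inequality $\tfrac{c}{2}\|\hat\theta-\theta^*\|^2\leq\langle\nabla F_N(\theta^*),\theta^*-\hat\theta\rangle$ is unjustified if $\hat\theta$ escapes the curvature region. The standard fix is exactly what the paper does when it says ``Because of the convexity of the optimization problem \eqref{eq:optim-d}, it suffices to show \dots there is a unique $(S,L)\in\mathcal{B}$ satisfying the first order condition'': consider first the minimizer $\hat\theta_r$ over the closed neighborhood, apply the basic inequality to $\hat\theta_r$ (which is legal), conclude $\|\hat\theta_r-\theta^*\|=O_P(\cu)=o_P(r)$ so that $\hat\theta_r$ is strictly interior, and then invoke global convexity of $F_N$ to upgrade $\hat\theta_r$ to the global minimizer. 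The same circularity lurks in your phrase ``strict convexity of $F_N$ on a neighborhood of the minimizer'' — the strict convexity is established on a neighborhood of $\theta^*$, not of the minimizer, and by itself rules out multiple minimizers only inside that neighborhood; the convexity step is needed to exclude minimizers elsewhere. Once you insert this restricted-minimizer argument, the proof is complete.
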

Because of the convexity of the objective function, a direct application of the above lemma is
$$(\SM,\LM)=_P(\SD,\LD) \mbox{ and }\DM=_P\DDD.$$
Thus, \eqref{eq:s-bound} and \eqref{eq:d-bound} are proved.
We show  \eqref{eq:u-bound} by contradiction. If on the contrary $\|\UM-U_1^*\|_{\infty}=\cu$, then in what follows we will show that
\begin{equation}\label{eq:contradict}
h_N(\SD+\LD)+\gamma_N\|\OO(\SD)\|_1+\delta_N\|\LD\|_{*}>_P h_N(S^*+L^*)+\gamma_N\|\OO(S^*)\|_1+\delta_N\|L^*\|_{*}
\end{equation}
and thus a contradiction is reached.
We start with
the Taylor expansion of $h_N(S+L)$ around $S^*$ and $L^*$. Let $h(M)=\mathbb{E} h_N(M)$, then we have
\begin{equation}\label{eq:taylor}
h_N(M^*+\Delta)=h(M^*)+ \frac{1}{2}v(\Delta)^{\top}\II v(\Delta)+R_N(\Delta),
\end{equation}
where $M^*=S^*+L^*$, and the function
$v:\mathbb{R}^J\times \mathbb{R}^J\to \mathbb{R}^{J^2}\times 1$ is a map that vectorizes a matrix. Moreover,  $R_N(\Delta)$ is the remainder term satisfying
\begin{equation}\label{eq:remain-bound}
R_N(\Delta)=R_N(\mathbf{0}_{J\times J})+O_P(\|\Delta\|_{\infty}^3)+O_P(\frac{\|\Delta\|_{\infty}}{\sqrt{N}})\mbox{ as }\Delta\to 0, N\to \infty,
\end{equation}
where $R_N(\mathbf{0}_{J\times J})=h_N(M^*) - h(M^*)$,  $O_P({\|\Delta\|_{\infty}}/{\sqrt{N}})$ term corresponds to $v(\nabla h_N(M^*))^{\top} v(\Delta)$, and $O_P(\|\Delta\|_{\infty}^3)$ characterizes the remainder.
Furthermore, as $\Delta\to 0$ the second derivative satisfies
\begin{equation}\label{eq:nabla-2-remain-bound}
\nabla^2{R}_N(\Delta) = O_P(\frac{1}{\sqrt{N}}+\|\Delta\|_{\infty}),
\end{equation}
where the $O_{P}(\|\Delta \|_{\infty})$ term corresponds to $\nabla^2 h(M^*+\Delta)-\II$, and $O_P({1}/{\sqrt{N}})$ corresponds to $\nabla^2 h_N(M^*+\Delta) - \nabla^2 h(M^*+\Delta)$. This further implies that the first derivative of $R_N(\Delta)$ satisfies
\begin{equation}\label{eq:nabla-remain-bound}
\nabla R_N(\Delta) = O(\|\Delta\|_{\infty}^2)+O_P(\frac{1}{\sqrt{N}}) \mbox{ as }\Delta\to 0.
\end{equation}
We plug $\Delta=\SD+\LD-S^*-L^*$ into \eqref{eq:taylor}, then
\begin{multline}\label{eq:talor-d}
h_N(\SD+\LD)-h_N(S^*+L^*)\\= \frac{1}{2}v(\SD+\LD-(S^*+L^*))^\top\II v(\SD+\LD-(S^*+L^*) )+R_N(\SD+\LD-(S^*+L^*))-R_N(\mathbf{0}_{J\times J}).
\end{multline}
We first establish a lower bound for $R_N(\SD+\LD-(S^*+L^*))-R_N(\mathbf{0}_{J\times J})$.
According to Lemma~\ref{lemma:d},
we have
\begin{equation}\label{eq:deltaupper}
\|\SD+\LD-(S^*+L^*)\|_{\infty}\leq_P \kappa \cu,
\end{equation}
with a possibly different $\kappa$.
The above display and \eqref{eq:remain-bound} yield
\begin{equation}\label{eq:upper-remain}
R_N(\SD+\LD-(S^*+L^*))-R_N(\mathbf{0}_{J\times J})= O(\cuqb)+O_P(\frac{\cu}{\sqrt{N}}).
\end{equation}
	We proceed to a lower bound for the term $\frac{1}{2}v(\SD+\LD-(S^*+L^*))^\top\II v(\SD+\LD-(S^*+L^*) )$ on the right-hand side of \eqref{eq:talor-d} with the aid of the following two lemmas.
\begin{lemma}\label{lemma:trans}
Under Assumption A3,
 there exists a positive constant $\varepsilon$ such that
\begin{equation*}\label{eq:trans}
\|S+L\|_{\infty}\geq \varepsilon \|L\|_{\infty} \mbox{ for all } (S, L)\in \MS\times T_{L^*}\MFL.
\end{equation*}
\end{lemma}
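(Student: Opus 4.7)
}
The statement is a standard ``angle between subspaces'' bound, and the plan is to reduce it to a compactness argument on the unit sphere of $T_{L^*}\MFL$. Concretely, by homogeneity it suffices to exhibit an $\varepsilon>0$ with
$$
\inf\bigl\{\,\|S+L\|_{\infty}\;:\;L\in T_{L^*}\MFL,\ \|L\|_{\infty}=1,\ S\in\MS\,\bigr\}\;\geq\;\varepsilon,
$$
since for a general $(S,L)\in \MS\times T_{L^*}\MFL$ with $L\neq 0$ one rescales by $\|L\|_{\infty}$, and the case $L=0$ is trivial.

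To establish the displayed infimum bound, I would introduce
$$
f(L)\;=\;\inf_{S\in\MS}\,\|S+L\|_{\infty},\qquad L\in T_{L^*}\MFL,
$$
and proceed in three steps. First, I would note that $f$ is $1$-Lipschitz with respect to $\|\cdot\|_{\infty}$: for any $L_1,L_2\in T_{L^*}\MFL$ and any $S\in\MS$, $\|S+L_2\|_{\infty}\leq \|S+L_1\|_{\infty}+\|L_1-L_2\|_{\infty}$, so taking the infimum over $S$ gives $f(L_2)\leq f(L_1)+\|L_1-L_2\|_{\infty}$, and by symmetry $f$ is continuous. Second, for each fixed $L$ with $\|L\|_\infty=1$ I would argue $f(L)>0$ by contradiction, using Assumption A3: if $f(L)=0$, pick $S_n\in\MS$ with $\|S_n+L\|_{\infty}\to 0$; then $S_n\to -L$ in $\mathbb{R}^{J\times J}$, and since $\MS$ is a finite-dimensional (hence closed) linear subspace, $-L\in\MS$. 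But $-L\in T_{L^*}\MFL$ as well, so by A3 we get $-L=\mathbf{0}_{J\times J}$, contradicting $\|L\|_\infty=1$.

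Third, since $T_{L^*}\MFL$ is a finite-dimensional linear subspace of $\mathbb{R}^{J\times J}$, the unit sphere $\{L\in T_{L^*}\MFL:\|L\|_\infty=1\}$ is compact, and the continuous positive function $f$ attains its infimum there at some strictly positive value, which we take as $\varepsilon$. This yields the uniform bound and, after the rescaling mentioned above, the desired inequality $\|S+L\|_\infty\geq \varepsilon\|L\|_\infty$ on $\MS\times T_{L^*}\MFL$.

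I do not expect any serious obstacle: the only non-trivial ingredient is the interplay between the closedness of $\MS$ (needed to conclude that the limit $-L$ of a minimizing sequence still lies in $\MS$) and the transversality Assumption A3 (which rules out a nontrivial common direction), both of which are clean to invoke. The rest is routine continuity plus compactness in a finite-dimensional space.
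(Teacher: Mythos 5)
Your proposal is correct and takes essentially the same approach as the paper: both reduce to a compactness argument on the unit sphere of $T_{L^*}\MFL$, using Assumption A3 to guarantee that the distance from a nonzero $L \in T_{L^*}\MFL$ to the subspace $\MS$ is strictly positive, and then invoking homogeneity. The only cosmetic difference is that the paper works with the Frobenius norm (so the distance to $\MS$ is given explicitly by the orthogonal projection $\|L-\PP_{\MS}L\|_F$ and the bound $\|S+L\|_F\geq\|\PP_{\MS^\bot}L\|_F$ follows from the Pythagorean identity, with norm equivalence invoked at the end), whereas you work directly in the sup norm, replacing the explicit projection formula with a Lipschitz-continuity and closed-subspace argument to get positivity and continuity of $f(L)=\inf_{S\in\MS}\|S+L\|_\infty$.
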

\begin{lemma}\label{lemma:L}
Let
\begin{equation}\label{eq:deltaL}
\Delta_L=U_1^*D_1^*(\UM-U_1^*)^{\top} +  (\UM-U_1^*)D_1^* U_1^{*\top} + U_1^* (\DDD-D_1^*) U_1^{*\top}.
\end{equation} Then, we have
\begin{itemize}
	\item [(i)] $\Delta_L\in T_{L^*}\MFL$.
	\item [(ii)] There exists positive constant $\varepsilon$ such that $\|\Delta_L\|_{\infty}>_P \varepsilon \cu$, for all  $\|\UM-U_1^*\|_{\infty}=\cu$.
	\item [(iii)] $
	\|\LD -L^*-\Delta_L\|_{\infty}\leq_P \kappa\cusq
	$.
\end{itemize}
\end{lemma}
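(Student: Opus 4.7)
The plan for proving Lemma~\ref{lemma:L} splits into three essentially independent pieces, one for each part. Part (i) is an algebraic identity, part (iii) is a second-order Taylor expansion of $\LD=\UM\DDD\UM^\top$ around $(U_1^*,D_1^*)$, and part (ii) is the substantive geometric estimate that requires Assumption A2. I would dispatch (i) and (iii) quickly and devote the bulk of the work to (ii).

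For (i), take $Y = D_1^*(\UM-U_1^*)^\top + \tfrac{1}{2}(\DDD-D_1^*)U_1^{*\top}$. Since $D_1^*$ and $\DDD-D_1^*$ are diagonal, a one-line computation gives $\Delta_L = U_1^* Y + Y^\top U_1^{*\top}$, which matches the parametrization of $T_{L^*}\mathfrak L$ recorded earlier. For (iii), write $V = \UM - U_1^*$ and $E = \DDD - D_1^*$ and expand $(U_1^*+V)(D_1^*+E)(U_1^*+V)^\top$; subtracting $L^*$ and the three first-order terms that make up $\Delta_L$ leaves only the second-order remainders $VD_1^*V^\top$, $U_1^*EV^\top$, $VEU_1^{*\top}$, and $VEV^\top$. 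Since $\|V\|_\infty\le\cu$ inside $\MM_2$ and Lemma~\ref{lemma:d} gives $\|E\|_\infty\le_P \kappa\cu$, each of these is bounded in entrywise max norm by a constant multiple of $\cusq$, establishing (iii).

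The substance lies in (ii). I would decompose $V$ in the orthonormal basis $[U_1^*,U_2^*]$ as $V = U_1^*A + U_2^*B$, where $A=U_1^{*\top}V$ is $K\times K$ and $B=U_2^{*\top}V$ is $(J-K)\times K$. The Stiefel constraint $\UM^\top\UM=I_K$ forces $A^\top + A = -V^\top V$, so $A$ is skew-symmetric up to an $O(\cusq)$ correction; in particular $A_{ii}=O(\cusq)$. Substituting this decomposition into $\Delta_L$ yields the block form
$$\Delta_L = U_1^*\bigl([A,D_1^*] + E + O(\cusq)\bigr)U_1^{*\top} + U_1^* D_1^* B^\top U_2^{*\top} + U_2^* B D_1^* U_1^{*\top},$$
whose two pieces have orthogonal row/column support via $U_1^{*\top}U_2^*=0$, so Frobenius norms add. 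The commutator $[A,D_1^*]$ is off-diagonal with $(i,j)$ entry $A_{ij}(\lambda_j^*-\lambda_i^*)$; Assumption A2 gives $\mu\equiv\min_{i\neq j}|\lambda_i^*-\lambda_j^*|>0$, and since $E$ is diagonal the cross term vanishes, yielding $\|[A,D_1^*]+E\|_F^2\ge \mu^2\|A\|_F^2 + \|E\|_F^2 - O(\cuqb)$. The second block has Frobenius norm $\sqrt 2\,\|BD_1^*\|_F \ge \sqrt 2\,\lambda_K^*\|B\|_F$, where $\lambda_K^*$ is the smallest positive eigenvalue of $L^*$. Finally, since $U_1^*$ and $U_2^*$ have bounded fixed entries, $\|V\|_\infty=\cu$ implies $\|A\|_F+\|B\|_F\gtrsim \cu$, whence $\|\Delta_L\|_F\gtrsim\cu$; converting via $\|\cdot\|_\infty\ge \|\cdot\|_F/J$ with $J$ fixed gives $\|\Delta_L\|_\infty\ge_P \varepsilon\cu$.

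The main obstacle is the bookkeeping that keeps the $O(\cusq)$ defect in skew-symmetry and the $O(\cuqb)$ slack in the commutator lower bound strictly subdominant to the target scale $\cusq$; this is exactly why the hypothesis sets the eigenvector perturbation at $\cu=\gamma_N^{1-\eta}$ with $\eta$ sufficiently small, so that $\cuqb=\gamma_N^{3(1-\eta)}$ is asymptotically negligible compared with $\cusq=\gamma_N^{2(1-\eta)}$.
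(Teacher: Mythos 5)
Your arguments for (i) and (iii) coincide with the paper's: the same choice $Y = D_1^*(\UM - U_1^*)^\top + \tfrac12(\DDD - D_1^*)U_1^{*\top}$ for (i), and the same direct expansion of $\UM\DDD\UM^\top - L^*$ with quadratic remainder for (iii). For (ii), however, you take a genuinely different route. The paper proves a separate eigenvalue-perturbation lemma (Lemma~\ref{lemma:eigen-perterb}), which it derives by citing a black-box fact from \cite{hogben2006matrix} that bounds $\|U_1 - U_1^*\|_\infty$ by $\|U_1 D_1 U_1^\top - L^*\|_2$ when the eigenvalues are simple; this yields $\|\LD - L^*\|_\infty \ge \varepsilon\cu$, and then (iii) transfers that lower bound to $\Delta_L$ at the cost of a $\kappa\cusq$ error. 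You instead lower-bound $\|\Delta_L\|_F$ from first principles by decomposing $\UM - U_1^* = U_1^* A + U_2^* B$ in the eigenbasis of $L^*$, using the Stiefel constraint to force $A + A^\top = -V^\top V$ (so $A$ is skew up to $O(\cusq)$, and in particular its diagonal is negligible), and then reading off the block structure $U_1^*([A,D_1^*] + E + O(\cusq))U_1^{*\top} + U_1^* D_1^* B^\top U_2^{*\top} + U_2^* B D_1^* U_1^{*\top}$. Assumption A2 enters through the eigengap $\mu = \min_{i\ne j}|\lambda_i^* - \lambda_j^*| > 0$ in the commutator bound, and positivity of $\lambda_K^*$ controls the $B$-block; since $\|A\|_F^2 + \|B\|_F^2 = \|V\|_F^2 \ge \|V\|_\infty^2 = \cusq$, the blocks together give $\|\Delta_L\|_F \gtrsim \cu$. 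This is more computational but also more self-contained: it makes visible exactly how the eigengap and the smallest nonzero eigenvalue enter, rather than hiding them inside a cited perturbation fact, and it bypasses the detour through $\|\LD - L^*\|_\infty$ entirely. Both proofs use A2 in an essential way; yours also uses $\lambda_K^* > 0$, which is automatic from $\mathrm{rank}(L^*) = K$. One small point of bookkeeping worth stating explicitly when you write this out: when passing from $\|[A,D_1^*]+E\|_F^2$ to $\|[A,D_1^*]+E - D_1^*(V^\top V)\|_F^2$, the lost cross term is $O(\cu)\cdot O(\cusq) = O(\cuqb)$, which is what makes your slack $O(\cuqb)$ rather than $O(\gamma_N^{4(1-\eta)})$, and this is indeed subdominant to the target $\cusq$.
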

According to Lemma~\ref{lemma:trans} and Lemma \ref{lemma:L}(i)(iii) and  noticing that $\SD-S^*\in \MS$, we have
$$
\|\SD+\LD-(S^*+L^*)\|_{\infty}\geq_P \|\SD-S^*+\Delta_L\|_{\infty} - \kappa \cusq \geq_P \varepsilon \|\Delta_L \|_{\infty} - \kappa\cusq .
$$
According to Lemma~\ref{lemma:L}(ii), the above display further implies that
\begin{equation}\label{eq:lowerdelta}
\|\SD+\LD-(S^*+L^*)\|_{\infty}>_P\varepsilon \cu,
\end{equation}
with a possibly different $\varepsilon$.
According to assumption A1, $\II$ is positive definite. Therefore,
\begin{equation}\label{eq:lower-quadratic}
v(\SD+\LD-(S^*+L^*))^{\top}\II v(\SD+\LD-(S^*+L^*)  )>\varepsilon \|\SD+\LD-(S^*+L^*) \|\geq_P \varepsilon^2 \cusq .
\end{equation}
The second inequality of the above display is due to \eqref{eq:lowerdelta}.
 \eqref{eq:talor-d}, \eqref{eq:upper-remain} and \eqref{eq:lower-quadratic} give
\begin{equation}\label{eq:h-difference}
h_N(\SD+\LD)-h_N(S^*+L^*)>_P \frac{\varepsilon^2}{2} \cusq .
\end{equation}
We proceed to the regularization terms in \eqref{eq:contradict}. For the $L_1$ penalty term, we have
\begin{equation}\label{eq:s-l1-bound}
\|\OO(\SD)\|_{1}-\|\OO(S^*)\|_1= \mathrm{sign}(\OO(S^*))\cdot (\SD-S^*)=_PO(\cu).
\end{equation}
The second equality in the above display is due to Lemma~\ref{lemma:d}.
For the nuclear norm term, we have
	$$
	\|\LD\|_{*}-\|L^*\|_*\geq -\|\LD-L^* \|_{*}\geq -\kappa\cu .
	$$
Again, the second inequality in the above display is due to Lemma~\ref{lemma:d}.
	Notice that $\delta_N=\rho\gamma_N$.
Equations \eqref{eq:h-difference},  \eqref{eq:s-l1-bound} and the above inequality imply
\begin{align*}
&h_N(\SD+\LD)-h_N(S^*+L^*)+\gamma_N (\|\OO(\SD)\|_1-\|\OO(S^*)\|_1)+\delta_N (\|\LD\|_{*}-\|L^*\|_*)\\
>_P
&~\varepsilon\cusq >_P0,
\end{align*}
with a possibly different $\varepsilon$.
Notice that $(\SD,\LD)=(\SM,\LM)$, so we obtain \eqref{eq:contradict} by rearranging terms in the above inequality, and this contradicts the definition of $\SM$ and $\LM$. This completes the proof for \eqref{eq:u-bound}.
Thus, $(\SM,\LM)$ is an interior point of $\mathcal{M}_2$.
The uniqueness of the solution is obtained according to the following lemma.
\begin{lemma}\label{lemma:solution-approx}
	The solution to the optimization problem \eqref{eq:optim-m2} is unique with a probability converging to $1$. In addition,
	\begin{equation}\label{eq:solution-approx}
	(\SM,\LM)=(S^*,L^*)+{\FF}^{-1}(\gamma_N \mathrm{sign}(\OO(S^*)),\delta_N U_1^* U_1^{*\top})+o_P(\gamma_N),
	\end{equation}
	as $N\to\infty$.
\end{lemma}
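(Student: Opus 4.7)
The plan is to derive the first-order optimality conditions for the restricted problem \eqref{eq:optim-m2} at the interior minimizer already established in Step~1, linearize them around $(S^*,L^*)$, and invert using $\FF^{-1}$. The key simplification is that on $\mathcal{M}_2$ both penalty terms are affine: since $\SM$ shares the sign pattern of $S^*$, $\|\OO(\SM)\|_1=\mathrm{sign}(\OO(S^*))\cdot\OO(\SM)$; and since $\LM$ is rank-$K$ PSD, $\|\LM\|_*=\mathrm{tr}(\LM)$, whose gradient restricted to the tangent direction $T_{\LM}\MFL$ can be represented by $\UM\UM^\top$. Because Step~1 places $(\SM,\LM)$ in the interior of $\mathcal{M}_2$, the KKT conditions become
\[
\PP_{\MS}\!\bigl[\nabla h_N(\SM+\LM)\bigr]+\gamma_N\,\mathrm{sign}(\OO(S^*))=0,\qquad
\PP_{T_{\LM}\MFL}\!\bigl[\nabla h_N(\SM+\LM)\bigr]+\delta_N\,\UM\UM^\top=0.
\]

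Next I would linearize and invert. Taylor expanding $\nabla h_N(M^*+\Delta)=\nabla h_N(M^*)+\II\,v(\Delta)+\nabla R_N(\Delta)$ and using \eqref{eq:nabla-remain-bound} together with the localization bound $\|\Delta\|_\infty=O_P(\cu)$ from Step~1, both $\nabla h_N(M^*)=O_P(N^{-1/2})$ and $\nabla R_N(\Delta)=O(\cusq)+O_P(N^{-1/2})$ are $o_P(\gamma_N)$ for $\eta$ sufficiently small. Lemma~\ref{lemma:L}(iii) lets me substitute $\LM-L^*=\Delta_L+O_P(\cusq)$ with $\Delta_L\in T_{L^*}\MFL$, while $\UM\UM^\top=U_1^*U_1^{*\top}+O(\cu)$ and the analogous closeness of $\PP_{T_{\LM}\MFL}$ to $\PP_{T_{L^*}\MFL}$ on the relevant arguments are absorbed into the $o_P(\gamma_N)$ slack. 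Collecting these reductions, the KKT system collapses to
\[
\FF(\SM-S^*,\,\Delta_L)=-\bigl(\gamma_N\,\mathrm{sign}(\OO(S^*)),\,\delta_N\,U_1^*U_1^{*\top}\bigr)+o_P(\gamma_N),
\]
and Lemma~\ref{lemma:f-inv} supplies the inverse $\FF^{-1}$ on $\MS\times T_{L^*}\MFL$. Applying it, together with $\LM-L^*=\Delta_L+o_P(\gamma_N)$, yields the expansion \eqref{eq:solution-approx} (with sign conventions absorbed into the definition of $\FF$ in \eqref{F}).

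For uniqueness I would argue as follows. Because the penalty terms are affine on $\mathcal{M}_2$, in smooth local coordinates on this manifold the Hessian of the restricted objective at $(S^*,L^*)$ is exactly $\II$ restricted to $\MS\times T_{L^*}\MFL$. Assumption~A1, combined with A3 (which ensures $\MS\cap T_{L^*}\MFL=\{\mathbf{0}\}$ so that the sum is direct), makes this Hessian positive definite. Hence the restricted objective is strictly convex in a neighborhood of the true parameter, and since Step~1 forces any minimizer to lie inside that neighborhood, the interior minimizer is unique with probability tending to one.

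The hardest part will be synchronizing every remainder estimate at the scale $o_P(\gamma_N)$. Four perturbations must be controlled simultaneously: the Taylor remainder $\nabla R_N$, the tangent approximation $\LM-L^*\approx\Delta_L$, the first-order perturbation of $\UM\UM^\top$ together with its accompanying projector, and the stochastic score $\nabla h_N(M^*)$. The tuning choice $\gamma_N=N^{-1/2+\eta}$ with $\eta$ small is precisely what balances the deterministic bias $\cusq=o(\gamma_N)$ against the stochastic fluctuation $N^{-1/2}=o(\gamma_N)$; the technical burden of the proof is verifying each of these bounds with matching constants and threading them through the KKT system.
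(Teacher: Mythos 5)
Your expansion argument is essentially the paper's: both linearize the first-order conditions of \eqref{eq:optim-m2} at the interior minimizer supplied by Step~1, absorb $\nabla R_N$ and the score into the $o_P(\gamma_N)$ slack under the tuning $\gamma_N = N^{-1/2+\eta}$, replace $\LM-L^*$ by its tangent approximation $\Delta_L$ via Lemma~\ref{lemma:L}(iii), and swap the $\UM$-dependent projector/$\UM\UM^\top$ for their limits using the Lipschitz estimate (Lemma~\ref{lemma:lip}), after which $\FF^{-1}$ (well-defined by Lemma~\ref{lemma:f-inv}) yields \eqref{eq:solution-approx}. You also correctly flag a sign that deserves scrutiny: deriving from the stationarity equations gives $\FF(\Delta_S,\Delta_L) = -(\gamma_N\mathrm{sign}(\OO(S^*)),\delta_N U_1^*U_1^{*\top}) + o_P(\gamma_N)$; the lemma statement (and the paper's \eqref{eq:first-order}) carry a plus sign, so either a minus has been silently absorbed or this is a typo, but it is immaterial to the downstream conclusions of Theorem~\ref{thm:consistent}.

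Your uniqueness argument, however, is genuinely different from the paper's. You argue that in local coordinates on $\MM_2$ the affine structure of the penalties makes the restricted Hessian positive definite (A1 plus the direct-sum structure from A3) and hence the objective locally strictly convex. The paper instead assumes two solutions $(\SM,\LM)$ and $(\SA,\LA)$, rewrites each first-order condition as a fixed-point equation for an operator $\CC_{(S,L)}$ built from $\FF_L^{-1}$, and shows the difference of the two fixed points satisfies a bound of the form $\|\cdot\|\le_P \kappa\cu\|\cdot\|$, forcing the solutions to coincide. Your route is more conceptual but slightly imprecise: since $\MM_2$ is a curved manifold (through the $U_1^\top U_1 = I_K$ constraint), the coordinate Hessian is not \emph{exactly} $\II$ restricted to $\MS\times T_{L^*}\MFL$ — there is a curvature correction proportional to the gradient of the objective, of order $O_P(\cu)=o_P(1)$. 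This correction vanishes asymptotically, so your conclusion survives, but the "exactly" should be "up to $o_P(1)$ curvature terms", and you would need to note that the neighborhood of strict convexity eventually contains $\MM_2$. The paper's contraction argument sidesteps this by working directly with the linearized fixed-point map and making the $O_P(\cu)$ factor explicit; it is slightly more robust but less transparent about why the Hessian structure drives uniqueness.
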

%
%
%
%
%

\subsection{Proof step 2}\label{sec:proof-step2}
In this section, we first show that $(\SM,\LM)$ is a solution of the optimization problem
\begin{equation}
\begin{aligned}
	&\min h_N(S+L)+ \gamma_N\|\OO(S)\|_1+\delta_N\|L\|_*, \label{eq:optim-m1}\\
	&\mbox{ subject to $L$ is positive semidefinite and $S$ is symmetric and } {(L,S)\in \MM_1}.
\end{aligned}
\end{equation}
To prove this, we will show that $(\SM,\LM)$ satisfies  the first order condition
\begin{equation}\label{eq:first-order-condition}
\mathbf{0}_{J\times J}\in \partial_S H|_{(\SM,\LM)}\mbox{ and } \mathbf{0}_{J\times J} \in \partial_L H|_{(\SM,\LM)},
\end{equation}
where the function $H$ is the objective function
\begin{equation}\label{eq:objective}
H(S,L)=h_N(S+L)+\gamma_N\|\OO(S)\|_1+\delta_N\|L\|_*
\end{equation}
and $\partial_S H$ and $\partial_L H$ denotes the sub-differentials of $H$. See \cite{rockafellar2015convex} for more details of sub-differentials of convex functions. We first derive an explicit expression of the first order condition.
The sub-differential with respect to $S$ is defined as
\begin{equation}\label{eq:subdifferntial-s}
\partial_S H|_{(\SM,\LM)} =\{ \nabla{h}_N(\SM+\LM)+\gamma_N (\mathrm{sign}(\OO(S^*)))+\gamma_N W: \|W\|_{\infty}\leq 1 \mbox{ and } W\in \MS^{\bot}\},
\end{equation}
where $\MS^{\bot}$ is the orthogonal complement space of $\MS$ in the space of symmetric matrices.
 According to Example 2 of \cite{watson1992characterization}, the sub-differential with respect to $L$ is
\begin{equation}\label{eq:subdifferential-l}
\partial_{L}H|_{(\SM,\LM)}=\{\nabla{h}_N(\SM+\LM)+\delta_N\UM\UM^{\top}+\delta_N\UCM W \UCM^{\top}: \|W\|_2\leq 1\},
\end{equation}
where $\UCM$ is a $J\times (J-K)$ matrix satisfying $\UM^{\top} \UCM=\mathbf{0}_{K\times (J-K)}$, $ \UCM^{\top}\UCM=I_{J-K} $ and $\|\UCM-U_2^*\|_\infty\leq \cu$.
For some $(S,L)$, if
$$
\PP_{\MS} (S)=\mathbf{0}_{J\times J},~ \PP_{\MS^{\bot}} (S)=\mathbf{0}_{J\times J}, ~\PP_{T_{{\LM}}\MFL}(L)=\mathbf{0}_{J\times J} \mbox{ and }\PP_{(T_{{\LM}}\MFL)^{\bot}}(L)=\mathbf{0}_{J\times J},
$$
then $S=\mathbf{0}_{J\times J}$ and $L=\mathbf{0}_{J\times J}$.
Consequently, to prove \eqref{eq:first-order-condition}, it suffices to show that
\begin{equation}\label{eq:first-condition-smooth}
\PP_{\MS} \partial_S H|_{(\SM,\LM)} =\{\mathbf{0}_{J\times J}\}\mbox{ and }\PP_{T_{{\LM}}\MFL} \partial_L H|_{(\SM,\LM)}=\{\mathbf{0}_{J\times J}\},
\end{equation}
and
\begin{equation}\label{eq:first-condition-orth}
\mathbf{0}_{J\times J}\in \PP_{\MS^{\bot}} \partial_S H|_{(\SM,\LM)} \mbox{ and }\mathbf{0}_{J\times J}\in \PP_{(T_{{\LM}}\MFL)^{\bot}} \partial_L H|_{(\SM,\LM)}.
\end{equation}
According to the definition of $(\SM,\LM)$, it
is the solution to the optimization \eqref{eq:optim-m2}. In addition, according to the discussion in Section~\ref{sec:proof-step1}, $(\SM,\LM)$ does not lie on the boundary of $\MM_2$. Therefore, it
satisfies the first order condition of \eqref{eq:optim-m2}, which is equivalent to \eqref{eq:first-condition-smooth}.
Thus, to prove \eqref{eq:first-order-condition} it is sufficient to show \eqref{eq:first-condition-orth}.
The next lemma establishes an equivalent expression for \eqref{eq:first-condition-orth}.
\begin{lemma}\label{lemma:equivalent}
\eqref{eq:first-condition-orth} is equivalent to
\begin{equation}\label{eq:sufficient-step-2}
\|\PP_{\MS^{\bot}}\nabla h_N(\SM+\LM)\|_{\infty} \leq \gamma_N \mbox{ and } \|\PP_{(T_{{\LM}}\MFL)^{\bot}}\nabla h_N(\SM+\LM) \|_{2}\leq \delta_N.
\end{equation}
\end{lemma}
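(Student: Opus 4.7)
The plan is to project the subdifferential descriptions \eqref{eq:subdifferntial-s} and \eqref{eq:subdifferential-l} onto $\MS^\bot$ and $(T_{\LM}\MFL)^\bot$ respectively, and then rewrite the conditions $\mathbf{0}_{J\times J}\in(\cdot)$ as norm inequalities on the appropriate projections of $\nabla h_N(\SM+\LM)$. In both cases the free slack matrix in the subdifferential has exactly the right shape to absorb that projection, so the containment reduces to a scalar inequality and both directions of the equivalence follow at once.

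For the $S$ piece, notice that $\mathrm{sign}(\OO(S^*))$ has zero diagonal and is supported on the off-diagonal sparsity pattern of $S^*$; hence it lies in $\MS$ and its image under $\PP_{\MS^\bot}$ vanishes. Since the slack $W$ in \eqref{eq:subdifferntial-s} already satisfies $W\in\MS^\bot$, applying $\PP_{\MS^\bot}$ to \eqref{eq:subdifferntial-s} yields the affine set
$$\PP_{\MS^\bot}\nabla h_N(\SM+\LM)+\gamma_N\{W\in\MS^\bot:\|W\|_\infty\leq 1\}.$$
The zero matrix lies in this set iff we may take $W=-\PP_{\MS^\bot}\nabla h_N(\SM+\LM)/\gamma_N$ while keeping $\|W\|_\infty\leq 1$, which is exactly the first inequality of \eqref{eq:sufficient-step-2}.

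For the $L$ piece I would establish two facts about the tangent space $T_{\LM}\MFL=\{\UM Y+Y^\top\UM^\top:Y\in\mathbb R^{K\times J}\}$. First, $\UM\UM^\top\in T_{\LM}\MFL$, since the choice $Y=\tfrac{1}{2}\UM^\top$ recovers $\UM\UM^\top$, so $\PP_{(T_{\LM}\MFL)^\bot}(\UM\UM^\top)=\mathbf{0}_{J\times J}$. Second, using $\UCM^\top\UM=\mathbf{0}_{(J-K)\times K}$, a short trace calculation shows that every $\UCM W\UCM^\top$ is orthogonal to every tangent vector $\UM Z+Z^\top\UM^\top$; conversely any symmetric $B\in(T_{\LM}\MFL)^\bot$ must satisfy $B\UM=\mathbf{0}_{J\times K}$, which forces the representation $B=\UCM\tilde W\UCM^\top$ for some $\tilde W$. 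Combining these with \eqref{eq:subdifferential-l} gives
$$\PP_{(T_{\LM}\MFL)^\bot}\partial_L H|_{(\SM,\LM)}=\PP_{(T_{\LM}\MFL)^\bot}\nabla h_N(\SM+\LM)+\delta_N\{\UCM W\UCM^\top:\|W\|_2\leq 1\}.$$
Because $\UCM^\top\UCM=I_{J-K}$ preserves singular values, $\|\UCM W\UCM^\top\|_2=\|W\|_2$, so $\mathbf{0}_{J\times J}$ lies in this set iff $\PP_{(T_{\LM}\MFL)^\bot}\nabla h_N(\SM+\LM)$ can be written as $\UCM W'\UCM^\top$ with $\|W'\|_2\leq\delta_N$, i.e.\ iff $\|\PP_{(T_{\LM}\MFL)^\bot}\nabla h_N(\SM+\LM)\|_2\leq\delta_N$.

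The main technical point is the characterization of $(T_{\LM}\MFL)^\bot$ as exactly the image of the map $W\mapsto\UCM W\UCM^\top$; once that identification is in hand the rest is routine bookkeeping in the two affine sets above, and the two-way equivalence of \eqref{eq:first-condition-orth} and \eqref{eq:sufficient-step-2} follows.
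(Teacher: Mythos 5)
Your proposal is correct and follows essentially the same route as the paper: project the two subdifferential descriptions onto $\MS^\bot$ and $(T_{\LM}\MFL)^\bot$, observe that $\mathrm{sign}(\OO(S^*))$ and $\UM\UM^\top$ are killed by these projections while the slack terms $W$ and $\UCM W\UCM^\top$ are fixed by them, and then read off the two norm conditions. The only difference is one of explicitness: the paper takes the facts that $\UM\UM^\top\in T_{\LM}\MFL$ and that $(T_{\LM}\MFL)^\bot$ (within symmetric matrices) is precisely $\{\UCM W\UCM^\top\}$ as given (they are implicit in the explicit projection formulas of Lemma~\ref{lemma:lip}), whereas you verify them directly via the trace computation and the observation that $B\in(T_{\LM}\MFL)^\bot$ forces $B\UM=\mathbf 0$; that added justification is sound and harmless.
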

We proceed to proving \eqref{eq:sufficient-step-2}.
Take gradient on both side of \eqref{eq:taylor} to obtain
\begin{equation}\label{eq:taylor-nabla}
\nabla h_N(S^*+L^*+\Delta) = \II v(\Delta) + \nabla{R}_N(\Delta).
\end{equation}
We plug $\Delta= \SM+\LM-S^*-L^*$ into the above equation to get
\begin{equation}\label{eq:taylor-nabla-h}
\nabla h_N(\SM+\LM) = \II v(\SM+\LM-S^*-L^*)+\nabla{R}_N(\SM+\LM-S^*-L^*).
\end{equation}
According to Lemma~\ref{lemma:solution-approx},
$$
\SM+\LM-S^*-L^*= \bA\FF^{-1}(\gamma_N \mathrm{sign}(\OO(S^*)),\delta_N U_1^* U_1^{*\top})+o_{P}(\gamma_N),
$$
where $\bA$ is the adding operator of two matrices $\bA(A,B)=A+B$ and $\FF$ is the operator defined as in \eqref{F}.
Combining this with \eqref{eq:nabla-remain-bound}, \eqref{eq:taylor-nabla-h}, and notice that $\delta_N=\rho\gamma_N$, we have
\begin{equation*}
\nabla h_N(\SM+\LM)= \gamma_N\II\bA\FF^{-1}( \mathrm{sign}(\OO(S^*)),\rho U_1^* U_1^{*\top})+o_{P}(\gamma_N).
\end{equation*}
and consequently,
\begin{eqnarray}
\PP_{\MS^{\bot}}\nabla h_N(\SM+\LM)&=& \gamma_N\PP_{\MS^{\bot}}\II\bA\FF^{-1}( \mathrm{sign}(\OO(S^*)),\rho U_1^* U_1^{*\top})+o_{P}(\gamma_N),\notag\\
\PP_{{T_{\LM}\MFL}^{\bot}}\nabla h_N(\SM+\LM)&=&\gamma_N\PP_{\mathcal{T_{L^*}\MFL}^{\bot}}\II\bA\FF^{-1}( \mathrm{sign}(\OO(S^*)),\rho U_1^* U_1^{*\top})+o_{P}(\gamma_N).\notag\\
&&\label{eq:gradient-hat}
\end{eqnarray}
We complete the proof for $(\SM,\LM)$ to be a solution of \eqref{eq:optim-m1}
by noticing that \eqref{eq:sufficient-step-2} is a direct application of Assumption A4  and the above equation.
We proceed to the proof of the uniqueness of the solution to \eqref{eq:optim-m1}. Because the objective function $H(S,L)$ is a convex function, it is sufficient to show the uniqueness of the solution in a neighborhood of $(\SM,\LM)$. We choose a small neighborhood as follows:
\begin{equation*}\begin{aligned}
\mathcal{N}=
\Big\{(S,L):&
~~\|{S} - \SM\|_{\infty}< \ce, \mbox{ $S$ is symmetric,}\\
& ~~\mbox{$L$ has the eigendecomposition }
{L} = \begin{bmatrix}
U_1, U_2
\end{bmatrix}
\begin{bmatrix}
D_1 & \mathbf{0}_{J\times (J-K)}\\
\mathbf{0}_{(J-K)\times J} & D_2
\end{bmatrix}
\begin{bmatrix}
U_1,U_2
\end{bmatrix}^{\top},\\
&
 ~~\|U_1-\UM\|_{\infty}<\ce, \|U_2-\UMtwo  \|_{\infty}<\ce,\\
 &~~\|D_1-\DM \|_{\infty}<\ce, \|D_2\|_{\infty}<\ce ~~~~~~~~~~~~~~~~~~~~~~~~~~~~~~~~~~~~~~~~~~~~~~~~~~
 \Big\}. \label{eq:tilde-s-range}
\end{aligned}
\end{equation*}
The next lemma, together with the uniqueness of solution to \eqref{eq:optim-m2} established in Lemma~\ref{lemma:solution-approx}, guarantees that $(\SM,\LM)$ is the unique solution in $\mathcal{N}$.
\begin{lemma}\label{lemma:another}
For all $(\SA,\LA)\in\mathcal{N}$, if $(\SA,\LA)$ is a solution to \eqref{eq:optim-m1}, then $(\SA,\LA)\in\MM_2$.
\end{lemma}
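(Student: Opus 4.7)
The natural strategy is a two-stage primal-dual witness argument combined with a localization. The structural difficulty is that $L$ and $S$ enter the pseudo-likelihood only through $M=L+S$, so identifiability of each piece must come entirely from the structural penalties together with the transversality assumption A3 and the irrepresentability assumption A4. My plan is to (i) construct an auxiliary constrained estimator by minimizing the penalized pseudo-likelihood over a shrinking neighborhood of $(S^*,L^*)$ whose members already have the correct support for $S$ (that is, $S-S^*\in\MS$) and the correct rank $K$ for $L$; (ii) show this constrained optimizer lies strictly in the interior of that neighborhood with probability tending to $1$; (iii) verify that it satisfies the subgradient (KKT) conditions of the unconstrained problem \eqref{est}; and (iv) conclude uniqueness and global consistency from convexity of the objective.

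For steps (i)-(ii), I would Taylor-expand $h_N(L+S)$ around $M^*=L^*+S^*$. Writing $M=M^*+\Delta$ yields
$$h_N(M^*+\Delta) = h(M^*) + \tfrac{1}{2}v(\Delta)^{\top}\II \, v(\Delta) + R_N(\Delta),$$
where $R_N(\mathbf{0})=O_P(N^{-1/2})$ and $R_N(\Delta)-R_N(\mathbf{0}) = O(\|\Delta\|_\infty^3)+O_P(\|\Delta\|_\infty/\sqrt{N})$. Assumption A1 makes the quadratic form coercive on $\MM$, so for $\gamma_N=\delta_N/\rho=N^{-1/2+\eta}$ with $\eta$ small and any $\Delta$ on the boundary of a neighborhood of radius $\cu$, the quadratic lower bound of order $\cusq$ strictly dominates both the penalty perturbation of order $\gamma_N\cdot\cu$ and the Taylor remainder, forcing the constrained minimizer to be interior. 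A subtle point is that the constraint set is parameterized through the eigendecomposition $L=U_1 D_1 U_1^{\top}$, which is nonlinear in $L$: translating smallness of $\|\hat L-L^*\|_\infty$ into smallness of $\|\hat U_1-U_1^*\|_\infty$ and $\|\hat D_1-D_1^*\|_\infty$ requires a Davis-Kahan-type perturbation bound, and assumption A2 (distinct positive eigenvalues of $L^*$) is exactly what enables this. Assumption A3 enters here by guaranteeing that sparse and low-rank increments are not confounded: on $\MS\times T_{L^*}\MFL$, the combined norm $\|\delta_S+\Delta_L\|_\infty$ controls both $\|\delta_S\|_\infty$ and $\|\Delta_L\|_\infty$ up to a constant.

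For step (iii), I would write out the subdifferential of the $L_1$ and nuclear-norm penalties. Optimality on the constrained problem already forces the projected subgradient to vanish on $\MS$ and on $T_{L^*}\MFL$, so the only extra content of the unconstrained KKT is dual feasibility in the orthogonal complements, namely
$$\|\PP_{\MS^{\bot}}\nabla h_N(\hat M)\|_\infty \le \gamma_N \quad \text{and} \quad \|\PP_{(T_{L^*}\MFL)^{\bot}}\nabla h_N(\hat M)\|_2 \le \delta_N.$$
Differentiating the Taylor expansion above and using the first-order characterization $(\hat S-S^*, \hat L-L^*)\approx \FF^{-1}(\gamma_N\,\mathrm{sign}(\OO(S^*)),\,\delta_N\, U_1^* U_1^{*\top})$, valid thanks to the invertibility of $\FF$ established in Lemma~\ref{lemma:f-inv}, reduces these two inequalities precisely to the irrepresentable condition A4 up to an $o_P(\gamma_N)$ remainder, so the strict inequality in A4 is preserved with probability tending to $1$.

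For step (iv), convexity of the objective promotes the interior optimum (with matching KKT) to the unique global optimum of \eqref{est}, and the expansion in step (iii) then yields $\|(\hat S,\hat L)-(S^*,L^*)\|=O_P(\gamma_N)=o_P(1)$, implying both consistency and sign/rank recovery. \emph{The main obstacle} I anticipate is the interaction between the nuclear-norm subgradient, which depends on eigenvectors of $\hat L$ rather than of $L^*$, and the projections onto $T_{L^*}\MFL$: closing this gap requires showing that replacing $\hat U_2\hat U_2^{\top}$ with $U_2^* U_2^{*\top}$ contributes only $o_P(\gamma_N)$, which in turn depends on the eigenvector stability established in step (ii).
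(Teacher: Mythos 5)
Your proposal does not actually prove Lemma~\ref{lemma:another}; it restates the high-level architecture of the proof of Theorem~\ref{thm:consistent} (localization, interiority, KKT verification, globalization by convexity). The lemma in question is the specific uniqueness device inside that architecture, and your step (iv) papers over exactly the hole it is designed to fill. You write that ``convexity of the objective promotes the interior optimum (with matching KKT) to the unique global optimum.'' Convexity makes a KKT point a global minimizer, but it does \emph{not} give uniqueness: $h_N(S+L)$ depends on $(S,L)$ only through $M=S+L$, so it is never strictly convex in $(S,L)$, and $\|\cdot\|_1$ and $\|\cdot\|_*$ are not strictly convex either. Some additional argument must rule out other minimizers, and that is precisely the content of Lemma~\ref{lemma:another}.

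The paper's proof of this lemma proceeds very differently from anything in your sketch. Given any candidate $(\tilde S,\tilde L)\in\mathcal{N}$, it decomposes $\tilde S=\SA+\DSB$ with $\SA\in\MS$, $\DSB\in\MS^{\bot}$, and $\tilde L=\LA+\DLB$ with $\LA$ built from the top-$K$ eigenspace of $\tilde L$ and $\DLB\in(T_{\tilde L}\MFL)^{\bot}$; here $(\SA,\LA)\in\MM_2$ by construction. It then uses the subdifferential characterization of $H$ to lower-bound $H(\tilde S,\tilde L)-H(\SA,\LA)$ by a term of the form $\PP_{\MS^{\bot}}[\nabla h_N]\cdot\DSB + \PP_{(T_{\tilde L}\MFL)^{\bot}}[\nabla h_N]\cdot\DLB + \gamma_N\|\DSB\|_1 + \delta_N\|\DLB\|_*$. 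The decisive step is that the \emph{strict} inequality in Assumption A4, combined with the first-order expansion $\nabla h_N(\SA+\LA)=\gamma_N\II\mathbf{A}\FF^{-1}(\mathrm{sign}(\OO(S^*)),\rho U_1^*U_1^{*\top})+o_P(\gamma_N)$ and the Lipschitz stability of the projections (Lemma~\ref{lemma:lip}), makes the gradient contributions strictly smaller than $\gamma_N\|\DSB\|_{\infty}$ and $\delta_N\|\DLB\|_2$. Together with $\|\DSB\|_1\geq\|\DSB\|_{\infty}$ and $\|\DLB\|_*\geq\|\DLB\|_2$, this forces $H(\tilde S,\tilde L)>H(\SA,\LA)$ whenever $\DSB\neq\mathbf{0}$ or $\DLB\neq\mathbf{0}$, contradicting optimality of $(\tilde S,\tilde L)$. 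Your proposal never introduces this decomposition, never compares two candidate solutions, and never invokes the strictness of A4 in the way that is essential here, so it does not establish the lemma.
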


\section{Proof of the supporting lemmas}\label{sec:proof-lemma}
\begin{proof}[Proof of Lemma~\ref{lemma:f-inv}]
We prove the lemma by contradiction. If on the contrary, $\FF$ is not invertible over $\MS\times T_{L^*}\mathfrak L$, then there exists $(S,L)\in \MS\times T_{L^*}\mathfrak L$ such that
\begin{equation}\label{eq：f-injective}
(S,L)\neq (\mathbf{0}_{J\times J},\mathbf{0}_{J\times J}) \mbox{ and }
\FF(S,L)=(\mathbf{0}_{J\times J},\mathbf{0}_{J\times J}).
\end{equation}
Recall that $\FF$ is defined as
$$
	\FF(S,L) = (\PP_{\MS}\{ \II(S+L)\}, \PP_{T_{L^*}\mathfrak L}\{ \II(S+L)\}).
$$
Then, \eqref{eq：f-injective} implies that
\begin{equation}\label{eq:isl}
\II(S+L) \in \MS^{\bot}\cap ({T_{L^*}\mathfrak L})^{\bot}.
\end{equation}
Consequently,
$$
v(S+L)^{\top}\II v(S+L)= (S+L)\cdot \II(S+L) =  S \cdot W +L \cdot W=0,
$$
where we define $W=\II(S+L) $ and the last equality is due to \eqref{eq:isl}. According Assumption A1, $\II$ is positive semidefinite. Thus, the above display implies that $S+L=\mathbf{0}_{J\times J}$. According to Assumption A3, this further implies that $S=\mathbf{0}_{J\times J}$ and $L=\mathbf{0}_{J\times J}$. Note that this contradicts  our assumption that $(S,L)\neq (\mathbf{0}_{J\times J},\mathbf{0}_{J\times J})$.
\end{proof}

\bigskip

\begin{proof}[Proof of Lemma~\ref{lemma:d}]
We consider the first order condition for the optimization problem \eqref{eq:optim-d}. Notice that $\MS$ and $\hat{\MD}$ are linear spaces, so the first order condition becomes
$$
\mathbf{0}_{J\times J} \in \PP_{\MS}\partial_{S}H|_{(S,L)} \mbox{ and } \mathbf{0}_{J\times J} \in \PP_{\hat{\MD}}\partial_{L}H|_{(S,L)},
$$
where $H$ is defined in \eqref{eq:objective}. We will show that there is a unique $(S,L)\in \MS\times\hat{\MD}$ satisfying the first order condition. Because of  the convexity of the optimization problem \eqref{eq:optim-d}, it suffices to show that with a probability converging to $1$ there is a unique $(S,L)\in\mathcal{B}$ satisfying the first order condition, where
$$
\mathcal{B} = \{(S,L)\in\MS\times\hat{\MD}:\|S-S^*\|_{\infty}\leq \cu, \mbox{ and } \|L-L^*\|_{\infty}\leq \cu\}.
$$ We simplify the first order condition for  $(S,L)\in \mathcal{B}$.
For the $L_1$ penalty term, if $\|S-S^*\|_{\infty}\leq \cu$ and $S\in\MS$, then $\|\OO(S)\|_1$ is smooth on  $\mathcal{S}^*$ and
\begin{equation}\label{eq:l1}
\PP_{\MS}\partial_{S}\|\OO(S)\|_1 = \mathrm{sign}(\OO(S^*)) \mbox{ for } S\in \MS.
\end{equation}
Similarly, for $L\in \hat{\MD}$ and $\|L-L^*\|_\infty\leq \cu$, $\|L\|_*$ is smooth over the linear space $\hat{\MD}$ and
\begin{equation}\label{eq:nuclear}
\PP_{\hat{\MD}} \partial_{L}\|L\|_{*} = \UM\UM^{\top} \mbox{ for } L\in \hat{\MD}.
\end{equation}
Combining \eqref{eq:l1} and \eqref{eq:nuclear} with the $\nabla h_N$ term, we arrive at an equivalent form of the first order condition, that is, there exists $(S,L)\in\mathcal{B}$ satisfying
\begin{eqnarray}
	\PP_{\MS}\nabla h_N(S+L)+\gamma_N \mathrm{sign}(\OO(S^*))=\mathbf{0}_{J\times J},\notag\\
	\PP_{\hat{\MD}} \nabla h_N(S+L)+\delta_N \UM\UM^{\top}=\mathbf{0}_{J\times J}.\notag
\end{eqnarray}
We will show the existence and uniqueness of the solution to the above equations using contraction mapping theorem. We first construct the contraction operator.
Let $(S,L)=(S^*+\Delta_S,L^*+\Delta_L)$.
We plug \eqref{eq:taylor-nabla} into the above equations, and arrive at their equivalent ones
\begin{eqnarray}
\PP_{\MS}\II(\Delta_S+\Delta_L)+\PP_{\MS}\nabla{R}_N(\Delta_S+\Delta_L)+\gamma_N \mathrm{sign}(\OO(S^*))=\mathbf{0}_{J\times J},\notag\\
\PP_{\hat{\MD}}\II(\Delta_S+\Delta_L)+\PP_{\hat{\MD}}\nabla{R}_N(\Delta_S+\Delta_L)+\delta_N \UM \UM^{\top} = \mathbf{0}_{J\times J}.\label{eq:equivalent}
\end{eqnarray}
We define an operator $\tilde{\FF}_{\hat{\MD}}: \MS\times (\hat{\MD}-L^*) \to \MS\times \hat{\MD}$,
$$
\tilde{\FF}_{\hat{\MD}}(\Delta_S,\Delta_L)= \Big(\PP_{\MS}\II(\Delta_S+\Delta_L),\PP_{\hat{\MD}}\II(\Delta_S+\Delta_L)\Big),
$$
where the set  $\hat{\MD}-L^*=\{L-L^*:L\in \hat{\MD} \}$.
We  further transform equation \eqref{eq:equivalent} to
\begin{equation}\label{eq:equivalent-equation2}
\begin{aligned}
	&\tilde{\FF}_{\hat{\MD}} (\Delta_S,\Delta_L) + (\PP_{\MS}\nabla{R}_N(\Delta_S+\Delta_L),\PP_{\hat{\MD}}\nabla{R}_N(\Delta_S+\Delta_L))+ (\gamma_N \mathrm{sign}(\OO(S^*)),\delta_N \UM \UM^{\top})\\
=&(\mathbf{0}_{J\times J},\mathbf{0}_{J\times J}).
\end{aligned}
\end{equation}
Notice that the projection $\PP_{\hat{\MD}}$ is uniquely determined by the matrix $\UM$. The next lemma states that the mapping
$\UM\to\PP_{\hat{\MD}} $ is Lipschitz.
\begin{lemma}\label{lemma:lip}
	We write the eigendecomposition $L= U_1 D_1 U_1^{\top}$, and define the corresponding linear spaces $T_{L}\mathfrak{L}$ and $\mathcal{D}$ as \begin{equation}\label{eq:def:D-tangent}
\mathcal{D}=\{U_1 D'_1 U_1^{\top}: D'_1 \mbox{ is a $K\times K$ diagonal matrix}\},
	\end{equation} and
	\begin{equation}\label{eq:def:L-tangent}
T_{L}\mathfrak{L}=\{U_1 Y+Y^{\top} U_1^{\top}: Y \mbox{ is a $K\times J$ matrix} \}.
	\end{equation}
	 Then, the mappings $U_1\to \PP_{{T_{L}\mathfrak{L}}}$, $U_1\to \PP_{{T_{L}\mathfrak{L}}^{\bot}}$ and $U_1\to \PP_{{\mathcal{D}}}$ are Lipschitz in $U_1$. That is, for all $J\times J$ symmetric matrix $M$, there exists a constant $\kappa$ such that
	\begin{equation*}
\begin{aligned}
	&\max\Big\{\|\PP_{{T_{L}\MFL}}M-\PP_{{T_{L^*}\MFL}}M\|_{\infty},
	\|\PP_{{T_{L}\MFL}^{\bot}}M-\PP_{{T_{L^*}\MFL}^{\bot}}M\|_{\infty},
	\|\PP_{{\mathcal{D}}}M-\PP_{{\mathcal{D^*}}}M\|_{\infty}\Big\}\\
\leq & \kappa \|U_1-U_1^*\|_{\infty}\|M\|_{\infty}.
\end{aligned}
	\end{equation*}
\end{lemma}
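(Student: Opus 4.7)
The plan is to derive closed-form expressions for the three projection operators and then obtain the Lipschitz bounds by elementary telescoping of those formulas.

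First I would derive explicit formulas. Write $P = U_1 U_1^\top$ for the orthogonal projector onto the column span of $U_1$. A direct check shows that $PM + MP - PMP$ belongs to $T_L\MFL$ (take $Y = U_1^\top M - \tfrac{1}{2} U_1^\top M P$ in the parameterization of the tangent space), while the residual $M - (PM + MP - PMP) = (I-P)M(I-P)$ is annihilated on the left by $U_1^\top$ and is therefore Frobenius-orthogonal to every element $U_1 Y + Y^\top U_1^\top$ of $T_L\MFL$. Hence
$$\PP_{T_L\MFL}(M) = PM + MP - PMP, \qquad \PP_{(T_L\MFL)^\bot}(M) = (I-P)M(I-P).$$
For $\mathcal{D}$, minimizing $\|M - U_1 D_1' U_1^\top\|_F^2$ over diagonal $D_1'$ and using $U_1^\top U_1 = I_K$ gives $D_1' = \mathrm{diag}(U_1^\top M U_1)$, so
$$\PP_{\mathcal{D}}(M) = U_1\,\mathrm{diag}(U_1^\top M U_1)\,U_1^\top.$$
In each case, the map $U_1 \mapsto \PP(M)$ is a polynomial of degree at most four in the entries of $U_1$ with coefficients linear in the entries of $M$.

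Next I would telescope the differences. Since $U_1^*$ has orthonormal columns, $\|U_1^*\|_\infty \leq 1$, and the hypothesis $\|U_1 - U_1^*\|_\infty$ small keeps $\|U_1\|_\infty$ bounded by a constant. Writing $P^* = U_1^* U_1^{*\top}$ and
$$\Delta P = P - P^* = U_1 (U_1 - U_1^*)^\top + (U_1 - U_1^*) U_1^{*\top},$$
the entrywise bound $\|AB\|_\infty \leq J \|A\|_\infty \|B\|_\infty$ for $J \times J$ matrices yields $\|\Delta P\|_\infty \leq \kappa \|U_1 - U_1^*\|_\infty$. Expanding
$$(I-P)M(I-P) - (I-P^*)M(I-P^*) = -\Delta P \cdot M(I-P^*) - (I-P^*) M \cdot \Delta P + \Delta P \cdot M \cdot \Delta P$$
and bounding each summand in $\ell_\infty$ gives the estimate for $\PP_{(T_L\MFL)^\bot}$. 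The bound for $\PP_{T_L\MFL}$ then follows from $\PP_{T_L\MFL}(M) = M - \PP_{(T_L\MFL)^\bot}(M)$. The projection $\PP_{\mathcal{D}}$ is handled in the same spirit via the three-term expansion
$$\PP_{\mathcal{D}}(M) - \PP_{\mathcal{D}^*}(M) = (U_1 - U_1^*) A U_1^\top + U_1^* (A - A^*) U_1^\top + U_1^* A^* (U_1 - U_1^*)^\top,$$
with $A = \mathrm{diag}(U_1^\top M U_1)$ and $A^* = \mathrm{diag}(U_1^{*\top} M U_1^*)$; a further telescoping of the quadratic form defining $A$ controls $\|A - A^*\|_\infty$ by $\kappa \|U_1 - U_1^*\|_\infty \|M\|_\infty$, from which the stated bound is immediate.

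The main obstacle is purely bookkeeping: the $\ell_\infty$ norm is not submultiplicative, so every matrix product incurs a dimensional factor of $J$, and one must check that the implicit constants depend only on fixed quantities. Since $J$ and $K$ are fixed dimensions of the model and both $U_1$ and $U_1^*$ are uniformly bounded in $\ell_\infty$, all these factors are harmlessly absorbed into a single constant $\kappa$, yielding the uniform Lipschitz estimate asserted by the lemma.
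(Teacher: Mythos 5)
Your proof is correct and rests on the same underlying idea as the paper's (express the three projections in closed form as functions of $U_1$ and observe that those functions are polynomial, hence Lipschitz on the compact Stiefel manifold), but the parameterization you use is noticeably cleaner. The paper writes all three projections via the block decomposition $N = [U_1,U_2]^\top M [U_1,U_2]$ and then has to argue that one ``could choose $U_2$ such that $U_2$ is also Lipschitz in $U_1$'' --- a step that is unnecessary (only $U_2 U_2^\top = I - U_1 U_1^\top$ enters the formulas) and, if taken literally as a global Lipschitz selection of an orthonormal complement basis, is not even quite true. Your formulas $\PP_{T_L\MFL}(M)=PM+MP-PMP$, $\PP_{(T_L\MFL)^\bot}(M)=(I-P)M(I-P)$, and $\PP_{\mathcal D}(M)=U_1\,\mathrm{diag}(U_1^\top M U_1)\,U_1^\top$ with $P = U_1U_1^\top$ sidestep the spurious $U_2$ entirely, and the telescoping you perform makes the dependence on $\|U_1-U_1^*\|_\infty$ explicit rather than implicit. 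In short: same route, but your version is self-contained where the paper's is slightly hand-wavy about $U_2$.
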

Similar to Lemma~\ref{lemma:f-inv}, under Assumption A1 and A3, we have that $\tilde{\FF}_{\MD^*}$ is invertible, where we define $\tilde{\FF}_{\MD^*}:\MS\times \mathcal{D^*} \to  \MS\times \mathcal{D^*},$
\begin{equation}\label{eq:def:tilde-F}
\tilde{\FF}_{\mathcal{D}^*}(S',L') = (\PP_{\MS}\{\II(S'+L')\}, \PP_{\mathcal{D^*}}\{\II(S'+L')\}), \mbox{ for $S'\in \MS$ and $L'\in \mathcal{D^*}$, }
\end{equation}
and 
$\mathcal{D}^*=\{U^*_1 D'_1 U_1^{*\top}: D'_1 \mbox{ is a $K\times K$ diagonal matrix}\}$.
According to the invertibility of $\tilde{\FF}_{\mathcal{D}^*}$, Lemma~\ref{lemma:lip} and  the fact that $\|\UM-U_1^*\|_{\infty}\leq \cu$,
we know that $\tilde{\FF}_{\hat{\MD}}$ is also invertible over $\MS\times (\hat{\MD}-L^*)$ and  is Lipschitz in $\UM$ for sufficiently large $N$.
We apply $	\tilde{\FF}_{\hat{\MD}}^{-1}$ on both sides of \eqref{eq:equivalent-equation2} and transform it to a fixed point problem,
\begin{equation}\label{eq:fix-d}
(\Delta_{S},\Delta_L) = \CC(\Delta_S,\Delta_L),
\end{equation}
where the operator $\CC$ is defined by
\begin{equation}\label{eq:contraction-d}
\begin{aligned}
&\CC(\Delta_S,\Delta_L) \\
= &- 	\tilde{\FF}_{\hat{\MD}}^{-1}\Big((\PP_{\MS}\nabla{R}_N(\Delta_S+\Delta_L),\PP_{\hat{\MD}}\nabla{R}_N(\Delta_S+\Delta_L))+(\gamma_N \mathrm{sign}(\OO(S^*)),\delta_N \UM \UM^{\top})\Big).
\end{aligned}
\end{equation}
Define the set $\mathcal{B}^*=\mathcal{B}-(S^*,L^*)=\{(S-S^*,L-L^*):(S,L)\in \mathcal{B} \}$.
We will show that with a probability converging to $1$, $\CC$ is a contraction mapping over $\mathcal{B^*}$.
First, according to \eqref{eq:nabla-remain-bound} and the definition of set $\mathcal{B}$, it is easy to check that with probability converging to $1$, $\CC(\Delta_S,\Delta_L)\in \mathcal{B}^*$ for all $(S,L)\in \mathcal{B}^*$, so $\CC(\mathcal{B}^*)\subset \mathcal{B}^*$.
Next, according to \eqref{eq:nabla-2-remain-bound}  and the boundedness of $\tilde{\FF}_{\hat{\MD}}^{-1}$, we know that $\CC(\Delta_S,\Delta_L)$ is Lipschitz in $(\Delta_S,\Delta_L)$ with a probability converging to $1$. To see the size of the Lipschitz constant, according to \eqref{eq:nabla-2-remain-bound} we know that $\nabla R_N(\Delta_S+\Delta_L)$ is Lipschitz with respect to $(\Delta_S,\Delta_L)$ with the Lipschitz constant of order $O_P(\cu)$. Therefore, the Lipschitz constant for $\CC$ is also of order $O_P(\cu)$.
Consequently, $\CC$ is a contraction mapping over the complete metric space $\mathcal{B}^*$ with a probability converging to $1$. According to the Banach fixed point theorem \citep{debnath2005introduction}, \eqref{eq:fix-d} has a unique solution in $\mathcal{B}^*$ with a probability converging to $1$. This concludes our proof.
\end{proof}

%

\begin{proof}[Proof of Lemma~\ref{lemma:trans}]
According to Assumption A3, $\MS\cap T_{L^*}\MFL=\mathbf{0}_{J\times J}$. Then, for all $L\in T_{L^*}\MFL$ and $L\neq \mathbf{0}_{J\times J}$, we have
$$
\|L-\PP_{\MS}L \|_F>0,
$$
where the norm $\|\cdot\|_F$ is the Frobenius norm.
Because the set $\{L:\|L\|_F=1\}$ is compact, $\inf_{\|L\|_F=1,L\in T_{L^*}\MFL} \|L-\PP_{\MS}L \|_F>0$.
Taking $\varepsilon = \inf_{\|L\|_F=1,L\in T_{L^*}\MFL} \|L-\PP_{\MS}L \|_F$,
\begin{equation}\label{eq:boundp}
\|L-\PP_{\MS}L \|_F\geq \varepsilon\|L\|_F, \mbox{ for all $L\in T_{L^*}\MFL$. }
\end{equation}
For a lower bound for $\|L+S\|_F$,
we have
$$
\|S+L\|_F= \|\PP_{\MS^{\bot}}L \|_F +\|\PP_{\MS}(L+S) \|_F\geq \|\PP_{\MS^{\bot}}L \|_F = \|L-\PP_{\MS}L \|_F\geq \varepsilon\|L\|_F,
$$
for $L\in T_{L^*}\MFL$ and $S\in\MS$,
where the last inequality   is due to \eqref{eq:boundp}. We complete the proof by noticing all norms are equivalent for finite dimensional spaces.
\end{proof}

\begin{proof}[Proof of Lemma~\ref{lemma:L}]
Taking $Y=D_1^*(\UM-U_1^*)^{\top}+\frac{1}{2}(\DDD-D_1^*) U_1^{*\top}$, we have
$
\Delta_L=
U^*_1 Y+Y^{\top} U_1^{*\top}.
$ Therefore, $\Delta_L\in T_{L^*}\MFL$ and (i) is proved.
Let
$
\Delta_{U_1}=\UM-U_1^*$  and  $\Delta_{\DM}= \DDD-D_1^*$.
We have
$$
\LD-L^*= (U_1^*+\Delta_{U_1})(D_1^*+\Delta_{D_1})(U_1^*+\Delta_{U_1})^{\top}- U_1^* D_1^* U_1^{*\top}= \Delta_L + O(\|\Delta_{U_1}\|_{\infty}^2+\|\Delta_{D_1} \|_{\infty}^2).
$$
According to Lemma~\ref{lemma:d} and \eqref{eq:u-bound} we have $O(\|\Delta_{U_1}\|_{\infty}^2+\|\Delta_{D_1} \|_{\infty}^2)\leq \kappa \cusq $. Thus, (iii) is proved.
To prove (ii), we need the following eigenvalue perturbation result.
\begin{lemma}[Eigenvalue perturbation]	\label{lemma:eigen-perterb}
	Under Assumption A2, for all $J\times K$ matrix  $~U_1$ such that $U_1^{\top} U_1 = I_K $, and $\|U_1-U^*\|_{\infty}=\cu$, and all $K\times K$ diagonal matrix $D_1$ such that $\|D_1-D_1^*\|_{\infty}\leq \kappa \cu $, there exists a positive constant $\varepsilon$  independent with $U_1$ and $D_1$ (possibly depending on $\kappa$) satisfying
	$$
	\|U_1 D_1 U_1^{\top}-L^*\|_{\infty}\geq \varepsilon \cu.
	$$
\end{lemma}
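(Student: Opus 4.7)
}

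The plan is a first-order expansion combined with a compactness/contradiction argument that invokes Assumption A2. Write $U_1=U_1^*+\Delta_U$ and $D_1=D_1^*+\Delta_D$, where $\|\Delta_U\|_\infty=\cu$ and $\|\Delta_D\|_\infty\leq \kappa\cu$. Expanding,
\[
U_1 D_1 U_1^\top - L^* = \underbrace{\Delta_U D_1^* U_1^{*\top} + U_1^* D_1^*\Delta_U^\top + U_1^*\Delta_D U_1^{*\top}}_{=:\Delta_L(\Delta_U,\Delta_D)} + R,
\]
where the remainder $R$ collects the terms that are at least quadratic in $(\Delta_U,\Delta_D)$, so $\|R\|_\infty = O(\cusq)$. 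The orthogonality constraint $U_1^\top U_1=I_K$ forces $\Delta_U^\top U_1^*+U_1^{*\top}\Delta_U = -\Delta_U^\top\Delta_U$, which is $O(\cusq)$.

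It therefore suffices to show the linear lower bound $\|\Delta_L(\Delta_U,\Delta_D)\|_\infty \geq \varepsilon' \cu$ for a constant $\varepsilon'>0$ independent of $(\Delta_U,\Delta_D)$ satisfying the stated constraints, because then for sufficiently large $N$ the $O(\cusq)$ remainder is absorbed. I would argue this by contradiction and compactness: rescale $\bar\Delta_U=\Delta_U/\cu$ and $\bar\Delta_D=\Delta_D/\cu$; a hypothetical violating sequence yields (along a subsequence) a limit $(\bar\Delta_U,\bar\Delta_D)$ with $\|\bar\Delta_U\|_\infty=1$, $U_1^{*\top}\bar\Delta_U+\bar\Delta_U^\top U_1^*=0$, $\bar\Delta_D$ diagonal, and $\bar\Delta_U D_1^* U_1^{*\top}+U_1^* D_1^*\bar\Delta_U^\top+U_1^*\bar\Delta_D U_1^{*\top}=0$.

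The main step is to show this limiting equation forces $\bar\Delta_U=0$, which is where Assumption A2 enters. Multiplying the limiting identity on the right by $U_1^*$ and then on the left by $U_1^{*\top}$, and setting $A=U_1^{*\top}\bar\Delta_U$ (which the orthogonality constraint makes skew-symmetric), yields the Sylvester-type relation
\[
A D_1^* - D_1^* A = -\bar\Delta_D.
\]
The right-hand side is diagonal while the $(i,j)$ off-diagonal entry on the left equals $A_{ij}\bigl((D_1^*)_{jj}-(D_1^*)_{ii}\bigr)$; Assumption A2 (distinct positive eigenvalues) therefore forces $A_{ij}=0$ for $i\neq j$, and skew-symmetry gives $A=0$, whence $\bar\Delta_D=0$ and $U_1^{*\top}\bar\Delta_U=0$. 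Substituting back and multiplying on the left by $U_2^{*\top}$ (a completion of $U_1^*$ to an orthonormal basis) gives $U_2^{*\top}\bar\Delta_U D_1^*=0$; since $D_1^*$ is invertible this yields $U_2^{*\top}\bar\Delta_U=0$, so $\bar\Delta_U=0$, contradicting $\|\bar\Delta_U\|_\infty=1$.

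The main obstacle is really just the bookkeeping of turning the contradiction argument into a uniform constant $\varepsilon'$: one must ensure compactness over the feasible set of $(\bar\Delta_U,\bar\Delta_D)$ (which follows from working on the unit sphere in the finite-dimensional space of admissible increments) and that all norms in question are equivalent so that convergence in one implies convergence in the $\infty$-norm. Once $\varepsilon'$ is secured, the conclusion follows by taking $\varepsilon=\varepsilon'/2$ and noting $\|R\|_\infty=O(\cusq)=o(\cu)$.
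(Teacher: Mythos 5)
Your proof is correct, and it takes a genuinely different route from the paper. The paper disposes of the lemma in one line by invoking a matrix--perturbation fact (Fact~10, p.~15--3, of Hogben's \emph{Handbook of Linear Algebra}): under Assumption~A2 the eigenvectors of $L^*$ are locally Lipschitz functions of the matrix, so $\|U_1-U_1^*\|_\infty \leq \kappa\|U_1 D_1 U_1^\top - L^*\|_2$ for sufficiently small perturbations, and the desired inequality follows from norm equivalence and the assumption $\|U_1-U_1^*\|_\infty=\cu$. You instead give a self-contained first-principles argument: a first-order expansion of $U_1 D_1 U_1^\top$, a rescaling/compactness step that reduces the problem to injectivity of the linear map $(\bar\Delta_U,\bar\Delta_D)\mapsto\Delta_L(\bar\Delta_U,\bar\Delta_D)$ on the admissible tangent directions, and an explicit Sylvester-type computation $AD_1^*-D_1^*A=-\bar\Delta_D$ that uses the distinctness of the $D_1^*$ entries (A2) to force $A=0$, followed by a second projection onto $U_2^*$ to kill the component of $\bar\Delta_U$ orthogonal to $U_1^*$. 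The trade-off is that the paper's route is much shorter but leans on an external reference, whereas yours is longer but makes transparent exactly where Assumption~A2 enters and does not require any outside perturbation theory. One small point worth making explicit in your write-up: the limiting constraint set (exact skew-symmetry of $U_1^{*\top}\bar\Delta_U$) is the limit of the $N$-dependent constraint sets, since the right-hand side $-\cu\,\bar\Delta_U^\top\bar\Delta_U$ of the rescaled orthogonality identity vanishes as $N\to\infty$; you gesture at this but it is the step that makes the uniform-in-$N$ constant $\varepsilon'$ legitimate.
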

As a direct application of the above lemma, we have
\begin{equation}\label{eq:ld-low}
 \|\LD-L^*\|_{\infty}\geq \varepsilon \cu.
\end{equation}
 Combing (iii) with \eqref{eq:ld-low}, we have (ii) proved.
\end{proof}

\begin{proof}[Proof of Lemma~\ref{lemma:solution-approx}]
	Assume that on the contrary, \eqref{eq:optim-m2} has two solutions $(\SM,\LM)$ and $(\SA,\LA)$. Similar to $(\SM,\LM)$, $(\SA,\LA)$ also satisfy  \eqref{eq:s-bound}, \eqref{eq:d-bound} and \eqref{eq:u-bound} if we replace $(\SA,\LA)$ by  $(\SM,\LM)$, and it is also an interior point of $\MM_2$. Thus, it satisfies the first order condition of \eqref{eq:optim-m2}.  That is,
	\begin{eqnarray}\label{eq:first-order-FF}
	\PP_{\MS}\nabla h_N(\SM+\LM)+\gamma_N \mathrm{sign}(\OO(S^*))&=&\mathbf{0}_{J\times J}\notag\\
	\PP_{{T_{\LM}\MFL}} \nabla h_N(\SM+\LM)+\delta_N \UM\UM^{\top}&=&\mathbf{0}_{J\times J}.\notag
	\end{eqnarray}
	We define an operator $\FF_{L}: \MS\times T_{L}\mathfrak L \to  \MS\times T_{L}\mathfrak L$ in a similar way as that of $\FF$,
	 \begin{equation}
	 \FF_L(S,L') = (\PP_{\MS}\{ \II(S+L')\}, \PP_{T_{L}\mathfrak L}\{ \II(S+L')\}).
	 \end{equation}
	With similar arguments as those leading towards \eqref{eq:fix-d}, we know that $\FF_{\hat{L}_{\MM_2}}$ is invertible with the aid of Lemmas~\ref{lemma:f-inv} and \ref{lemma:lip}, and \eqref{eq:first-order-FF} is transformed to
	\begin{equation}\label{eq:first-order}
\begin{aligned}
	(\Delta_{\SM},\Delta_{\LM})= \FF_{\hat{L}_{\MM_2}}^{-1}\Big(&(\PP_{\MS}\nabla{R}_N(\Delta_{\SM}+\Delta_{\LM}),\PP_{{T_{\hat{L}_{\MM_2}}\MFL}}  \nabla{R}_N(\Delta_{\SM}+\Delta_{\LM}))\\
& +(\gamma_N \mathrm{sign}(\OO(S^*)),\delta_N \UM \UM^{\top})\Big),
\end{aligned}
	\end{equation}
	where
	$\Delta_{\SM}=\SM-S^*$ and $\Delta_{\LM}=\LM-L^*$.
	Similarly for $(\SA,\LA)$, we have
		\begin{equation}\label{eq:first-order-tilde}
        \begin{aligned}
	(\Delta_{\SA},\Delta_{\LA})= {\FF}_{\tilde{L}_{\MM_2}}^{-1}\Big(&(\PP_{\MS}\nabla{R}_N(\Delta_{\SA}+\Delta_{\LA}),\PP_{T_{\tilde{L}_{\MM_2}}\MFL} \nabla{R}_N(\Delta_{\SA}+\Delta_{\LA}))\\
&+(\gamma_N \mathrm{sign}(\OO(S^*)),\delta_N \UA \UA^{\top})\Big).
        \end{aligned}
		\end{equation}
	Similar to the definition \eqref{eq:contraction-d}, for $(S,L)\in \MM_2$ we define
	\begin{equation*}
\begin{aligned}
		&\CC_{(S,L)}(\Delta_S,\Delta_L)\\
=&- 	{\FF}_{L}^{-1}\Big((\PP_{\MS}\nabla{R}_N(\Delta_S+\Delta_L),\PP_{T_{L}\mathfrak{L}}\nabla{R}_N(\Delta_S+\Delta_L))+(\gamma_N \mathrm{sign}(\OO(S^*)),\delta_N U_1 U_1^{\top}\Big),
\end{aligned}
	\end{equation*}
	where $L$ has the eigendecomposition
	$
	L=U_1 D_1 U_1^{\top}
	$
	, $\|U_1-U_1^*\|_{\infty}\leq \cu$ and $\|D_1-D_1^*\|_{\infty}\leq \cd$.
	The operator $\CC_{(S,L)}$ is well defined, because  for $L\in\MM_2$ the eigendeposition of $L$ is uniquely determined given $(U_1,D_1)$ is in the set
	$ \{(U_1,D_1):\|U_1-U_1^*\|_{\infty}\leq \cu \mbox{ and } \|D_1-D_1^*\|_{\infty}\leq \cd\}$. For more results on eigenvalue perturbation, see Chapter 4, \cite{parlett1980symmetric}.
	Now, we take difference between \eqref{eq:first-order} and \eqref{eq:first-order-tilde},
	\begin{equation}\label{eq:difference}
	(\SA-\SM,\LA-\LM)= \CC_{\SA,\LA}(\Delta_{\SA},\Delta_{\LA})-\CC_{\SM,\LM}(\Delta_{\SM},\Delta_{\LM}).
	\end{equation}
	We provide an upper bound for the norm of the right-hand side of the above equation. We split the right-hand side of the above display into two terms to get
	\begin{eqnarray}
	&&\|\CC_{\SA,\LA}(\Delta_{\SA},\Delta_{\LA})-\CC_{\SM,\LM}(\Delta_{\SM},\Delta_{\LM})\|_{\infty}\notag\\
	&\leq &
	\|\CC_{\SA,\LA}(\Delta_{\SA},\Delta_{\LA})-\CC_{\SA,\LA}(\Delta_{\SM},\Delta_{\LM})\|_{\infty}\label{eq:bound1}\\
	&&+	\|\CC_{\SA,\LA}(\Delta_{\SM},\Delta_{\LM})-\CC_{\SM,\LM}(\Delta_{\SM},\Delta_{\LM})\|_{\infty}\label{eq:bound2}
	\end{eqnarray}
	We present upper bounds for \eqref{eq:bound1} and \eqref{eq:bound2} separately.
	For \eqref{eq:bound1}, using similar arguments as those in the Proof of Lemma~\ref{lemma:d}, we have that with probability converging to $1$, $\CC_{\SA,\LA}(\cdot,\cdot)$ is a Lipschitz operator  with an $O(\cu)$ Lipschitz constant, that is,
	$$
	\eqref{eq:bound1}\leq_P \kappa \cu\times \max\Big(
	\|\SM-\tilde{S}_{\MM_2}\|_{\infty}, \| \LM-\tilde{L}_{\MM_2} \|_{\infty}
	\Big).
	$$
	We proceed to an upper bound of \eqref{eq:bound2}. Thanks to the Lipschitz property of $\FF_L$ and $\PP_{T_{L}\MFL}$ and the invertibility of $\FF_{L^*}$, with a probability converging to $1$, $\CC_{(S,L)}(\Delta_\SM,\Delta_\LM)$ is Lipschitz in $(S,L)$ when $(\Delta_\SM,\Delta_\LM)$ is fixed. Moreover, according to \eqref{eq:nabla-remain-bound}, $\|\Delta_\SM\|_{\infty}\leq \cs$ and $\|\Delta_{\LM} \|_{\infty}\leq \cu$, we have
	\begin{equation}\label{eq:eqre}
	\|\nabla{R}_N(\Delta_{\SA}+\Delta_{\LA})\|_{\infty} \leq O_P(\frac{1}{\sqrt{N}}).
	\end{equation}
	Combining \eqref{eq:eqre} with the fact that $\FF_{L}$ and $\PP_{T_{L}\MFL}$ are locally Lipschitz in $L$, we have that
	$$
	\eqref{eq:bound2}\leq_P \kappa\gamma_N\times\max\Big(
	\|\SM-\tilde{S}_{\MM_2}\|_{\infty}, \| \LM-\tilde{L}_{\MM_2} \|_{\infty}\Big).
	$$
	We combine the upper bounds for \eqref{eq:bound1} and \eqref{eq:bound2} with the equation \eqref{eq:difference} to get
	$$
	\max\Big(
	\|\SM-\tilde{S}_{\MM_2}\|_{\infty}, \| \LM-\tilde{L}_{\MM_2} \|_{\infty}
	\Big)\leq_P 2\kappa \cu \times\max\Big(
	\|\SM-\tilde{S}_{\MM_2}\|_{\infty}, \| \LM-\tilde{L}_{\MM_2} \|_{\infty}
	\Big).
	$$
	Consequently,
	$
	\SM=_P\tilde{S}_{\MM_2}\mbox{ and }\LM=_P\tilde{L}_{\MM_2}.
	$
	We proceed to prove \eqref{eq:solution-approx}. According to \eqref{eq:first-order} and \eqref{eq:eqre}, we have
	$$
	(\SM-S^*,\LM-L^*)=\gamma_N \FF_{\hat{L}_{\MM_2}}^{-1}(\qq_{\UM})+o_P(\gamma_N),
	$$
	where $\qq_{\UM}=(\mathrm{sign}(\OO(S^*)),\rho \UM \UM^{\top})$.
	 Because both $\FF_{\hat{{L}}_{\MM_2}}$  and $\qq_{\UM}$ are Lipschitz continuous in $\UM$, and $\|\UM-U_1^*\|\leq \cu$, we have
	$$
	(\SM-S^*,\LM-L^*)=\gamma_N\FF^{-1}\qq_{U_1^*}+o_P(\gamma_N).
	$$
\end{proof}

\begin{proof}[Proof of Lemma~\ref{lemma:equivalent}]
According to \eqref{eq:subdifferntial-s}, we have
\begin{equation*}
\begin{aligned}
&\PP_{\MS^{\bot}}\partial_S H|_{(\SM,\LM)} \\
=&\{ \PP_{\MS^{\bot}}\nabla{h}_N(\SM+\LM)+\gamma_N \PP_{\MS^{\bot}}(\mathrm{sign}(\OO(S^*)))+\gamma_N \PP_{\MS^{\bot}}W: \|W\|_{\infty}\leq 1 \mbox{ and } W\in \MS^{\bot}\}.
\end{aligned}
\end{equation*}
Notice that $\PP_{\MS^{\bot}}(\mathrm{sign}(\OO(S^*)))=\mathbf{0}_{J\times J}$ and $\PP_{\MS^{\bot}}W=W$ for $W\in\MS^{\bot}$. Therefore, we have
\begin{eqnarray*}
& & \mathbf{0}_{J\times J}\in  \PP_{\MS^{\bot}}\partial_S H|_{(\SM,\LM)}\\
&\iff & \exists W \in \MS \mbox{ such that } \|W\|_{\infty}\leq 1\mbox{ and } \PP_{\MS^{\bot}}\nabla{h}_N(\SM+\LM) = -\gamma_N W. \\
&\iff & \|\PP_{\MS^{\bot}}\nabla{h}_N(\SM+\LM)\|_{\infty}\leq \gamma_N.
\end{eqnarray*}
Similarly, according to \eqref{eq:subdifferential-l}, we have
\begin{eqnarray*}
\PP_{(T_{\LM}\MFL)^{\bot}}\partial_L H|_{(\SM,\LM)} =\{  \PP_{(T_{\LM}\MFL)^{\bot}}\nabla{h}_N(\SM+\LM)
+\delta_N \UCM W \UCM^{\top}: \|W\|_{2}\leq 1 \}.
\end{eqnarray*}
Consequently,
\begin{eqnarray*}
&&\mathbf{0}_{J\times J}\in \PP_{(T_{\LM}\MFL)^{\bot}}\partial_L H|_{(\SM,\LM)}\\
&\iff & \exists W \mbox{ such that } \|W\|_{2}\leq 1\mbox{ and } {\PP_{(T_{\LM}\MFL)^{\bot}}}\nabla{h}_N(\SM+\LM) = -\UCM W \UCM^{\top}. \\
&\iff & \|\PP_{(T_{\LM}\MFL)^{\bot}}\nabla{h}_N(\SM+\LM)\|_{2}\leq \delta_N.
\end{eqnarray*}
These two equivalent expressions concludes our proof.
\end{proof}

\begin{proof}[Proof of Lemma~\ref{lemma:another}]
Let
$
\tilde{S} = \SA + \DSB ,
$
where $\SA\in \MS$ and $\DSB \in \MS^{\bot}$. Let
$
\tilde{L}=\LA+\DLB,
$
where
$
\LA= \UA \DAone \UA^{\bot}  \mbox{ and } \DLB =  \UAtwo \DAtwo \UAtwo^{\bot}.
$
Notice that $(\SA, \LA)\in \mathcal{M}_2$ and $\DLB\in (T_{\tilde{L}}\MFL)^{\bot}$.
Similar to \eqref{eq:subdifferntial-s} and \eqref{eq:subdifferential-l} we have the sub-differentials of $H(S,L)$ at $(\SA,\LA)$,
\begin{equation*}
\partial_S H\vert_{(\SA,\LA)} =\{ \nabla{h}_N(\SA+\LA)+\gamma_N \mathrm{sign}(\OO(S^*))+\gamma_N W_1: \|W_1\|_{\infty}\leq 1, \mbox{ and } W_1\in \MS^{\bot}\},
\end{equation*}
and
\begin{equation*}
\partial_{L}H|_{(\SA,\LA)}=\{\nabla{h}_N(\SM+\LM)+\delta_N\UA\UA^{\top}+\delta_N\UAtwo W_2 \UAtwo^{\top}: \|W_2\|_2\leq 1\}.
\end{equation*}
Let $W_1 = \mathrm{sign}(\DSB)$ and $W_2= \mathrm{sign}(\DAtwo)$ in the above expressions for sub-differentials. According to the definition of sub-differential, we have
\begin{equation*}
\begin{aligned}
&H(\tilde{S},\tilde{L})-H(\SA,\LA)\\
\geq
&\nabla h_N(\SA+\LA)\cdot(\DSB+\DLB) + \gamma_N \|\DSB\|_1 +\delta_N \| \DLB \|_*.
\end{aligned}
\end{equation*}
Because $\DSB\in\MS^{\bot}$ and $\DLB\in (T_{\tilde{L}}\MFL)^{\bot}$, we further expand the above inequality,
\begin{equation}\label{eq:H-lower}
\begin{aligned}
&H(\tilde{S},\tilde{L})-H(\SA,\LA)\\
\geq &
\PP_{\MS^{\bot}}\Big[\nabla h_N(\SA+\LA)\Big]\cdot\DSB+
\PP_{(T_{\tilde{L}}\MFL)^{\bot}}\Big[\nabla h_N(\SA+\LA)\Big]\cdot\DLB \\
& + \gamma_N (\|\DSB\|_1 +\rho \| \DLB \|_*) .
\end{aligned}
\end{equation}
We provide a lower bound for the right-hand side of \eqref{eq:H-lower}.
According to \eqref{eq:gradient-hat}, the definition of $\FF^{\bot}$, and the Lipschitz property of $\PP_{(T_{\tilde{L}}\MFL)^{\bot}} $ according to Lemma~\ref{lemma:lip}, we have
\begin{equation*}\label{eq:lower-likelihood}
\begin{aligned}
&\Big(\PP_{\MS^{\bot}}\nabla h_N(\SA+\LA),\PP_{(T_{\tilde{L}}\MFL)^{\bot}} \nabla h_N(\SA+\LA)\Big)\\
=&\gamma_N \FF^{\bot} \FF^{-1}\Big(\mathrm{sign}(\OO(S^*)), \rho U_1^* U_1^{*\top}\Big)+o_P(\gamma_N).
\end{aligned}
\end{equation*}
According to Assumption A4 and the above expression, we have
\begin{eqnarray}\label{eq:lower-ph}
|\PP_{\MS^{\bot}}\nabla h_N(\SA+\LA)\cdot\DSB |&<_P& \gamma_N \|\DSB \|_{\infty}\notag,\\
|\PP_{(T_{\tilde{L}}\MFL)^{\bot}}\nabla h_N(\SA+\LA)\cdot\DLB| &<_P& \delta_N \|\DLB \|_2.
\end{eqnarray}
We proceed to the $L_1$ penalty term. It has a lower bound
\begin{equation}\label{eq:lower-L1}
\|\DSB\|_1\geq \|\DSB\|_{\infty}.
\end{equation}
For the nuclear norm term, we have
\begin{equation}\label{eq:lower-trace}
\|\DLB\|_*=\|\DAtwo \|_*\geq \|\DAtwo \|_{\infty}=\|\DLB \|_{2}.
\end{equation}
The first equality in \eqref{eq:lower-trace} is due to the definition of $\DLB$. The inequality and second equality hold because $\DAtwo$ is a diagonal matrix and its nuclear norm is the same as $L_1$ norm, and its spectral norm is the same as its maximum norm.
Combineing \eqref{eq:H-lower}, \eqref{eq:lower-ph}, \eqref{eq:lower-L1} and \eqref{eq:lower-trace}, we have
$$
H(\tilde{S},\tilde{L})>_P H(\SA,\LA),
$$
provided $\|\DSB\|_{\infty}> 0$ or $\|\DLB \|_*>0$. Because $(\tilde{S},\tilde{L})$ is a solution to \eqref{eq:optim-m1}, the above statement implies $\DSB=\DLB=\mathbf{0}_{J\times J}$. Therefore, $\tilde{S}=\SA$, $\tilde{L}=\LA$, and $(\tilde{S},\tilde{L})\in\mathcal{M}_2$,
\end{proof}

\begin{proof}[Proof of Lemma~\ref{lemma:lip}]
	We first investigate the tangent space $T_{L}\MFL$,
	\begin{multline}
	T_{L}\MFL=\Big\{
	[U_1,U_2]\begin{bmatrix}
	Y_{11} & Y_{12}\\
	Y_{21} & \mathbf{0}_{(J-K)\times (J-K)}
	\end{bmatrix}
	[U_1,U_2]^{\top}: Y_{11} \mbox{ is symmetric, and } Y_{12} =Y_{21}^{\top}
	\Big\}.
\end{multline}
For any symmetric $M$, let $N=U^{\top} M U$. Then $N$ is symmetric and
$
M=U N U^{\top}.
$
We write $M$ as
$$
M=
[U_1,U_2]\begin{bmatrix}
N_{11} & N_{12}\\
N_{21} & N_{22}
\end{bmatrix}[U_1,U_2]^{\top}.
$$
Therefore, $$\PP_{T_L\MFL} (M) =
[U_1,U_2]\begin{bmatrix}
N_{11} & N_{12}\\
N_{21} & \mathbf{0}_{(J-K)\times (J-K)}
\end{bmatrix}[U_1,U_2]^{\top}, $$
which is Lipschitz in $[U_1,U_2]$.
Because $U_2$ is orthogonal to $U_1$,  we could choose $U_2$ such that $U_2$ is also Lipschitz in $U_1$. As a result, the operator $\PP_{T_{L}\MFL}$ is Lipschitz in $U_1$.
%
%
For the operator $\PP_{{T_L\MFL}^{\bot}}$, we have
$$\PP_{{T_L\MFL}^{\bot}} =
[U_1,U_2]\begin{bmatrix}
	\mathbf{0}_{K\times K} & \mathbf{0}_{J\times (J-K)}\\
	\mathbf{0}_{(J-K)\times K} & N_{22}
\end{bmatrix}[U_1,U_2]^{\top}, $$
for the same symmetric matrices $M$ and $N$ discussed before. Thus, the above display is also Lipschitz in $U_1$.
Similarly for $\PP_{\mathcal{D}}$, we have
$$
\mathcal{D}= \{U_1D_1U_1^{\top}: D_1\mbox{ is a $K\times K$ diagonal matrix} \}.
$$
For the same symmetric matrices $M$ and $N$, we have
$$\PP_{\mathcal{D}} M =
[U_1,U_2]\begin{bmatrix}
\textrm{diag}(N_{11}) & \mathbf{0}_{(J-K)\times K}\\
\mathbf{0}_{J\times (J-K)} & \mathbf{0}_{(J-K)\times (J-K)}
\end{bmatrix}[U_1,U_2]^{\top}, $$
where $\textrm{diag}(N_{11})$ is the diagonal components of $N_{11}$.
It is also Lipschitz continuous in $U_1$.
\end{proof}

\begin{proof}[Proof of Lemma~\ref{lemma:eigen-perterb}]
Applying Fact 10, p.~15-3, \cite{hogben2006matrix}, with $A=L^*$ and $\tilde{A}=U_1D_1U_1^{\bot}$,  we have that, under Assumption A2, $
 \|U_1-U_1^*\|_{\infty}\leq \kappa \|U_1D_1U_1^{\bot}-L^*\|_2,
$
for some $\kappa$ and sufficiently small $\|D_1-D_1^*\|_{\infty}$.
Because all norms for finite dimensional space are equivalent, this inequality leads to
$$ \|U_1-U_1^*\|_{\infty}\leq \kappa \|U_1D_1U_1^{\bot}-L^*\|_{\infty},
$$
for a possibly different $\kappa$.
Recall the assumptions of the lemma that $\|U_1-U_1^*\|_{\infty}=\cu$. Therefore,
$$
\|U_1D_1U_1^{\bot}-L^*\|_{\infty}\geq \kappa^{-1}\cu.
$$
\end{proof}
\end{document}